%% file: main.tex
\documentclass{article}
\usepackage{ijcai26}

\usepackage{times}
\usepackage{soul}
\usepackage{url}
\usepackage[hidelinks]{hyperref}
\usepackage[utf8]{inputenc}
\usepackage[small]{caption}
\usepackage{graphicx}
\usepackage{amsmath}
\usepackage{amsthm}
\usepackage{booktabs}
\usepackage{algorithm}
\usepackage{algorithmic}
\usepackage[switch]{lineno}
\usepackage{multirow}
\usepackage{float}

\newtheorem{example}{Example}
\newtheorem{theorem}{Theorem}
\newtheorem{proposition}{Proposition}

\newtheorem{mechanism}{Mechanism}

    \title{Mechanism Design for Locating a Bridge Between Regions with Prelocated Facilities}

\author{
Genjie Qin$^1$
\and
Chenhao Wang$^2$
\and
Jianan Lin$^3$
\and
Qizhi Fang$^1$
\and
Wenjing Liu$^{1,*}$\\
\affiliations
$^1$School of Mathematical Sciences, Ocean University of China\\
$^2$Beijing Normal University\\
$^3$Rensselaer Polytechnic Institute\\
\emails
qgj@ouc.edu.cn,
chenhwang@bnu.edu.cn,
hcm6755@gmail.com,
qfang@ouc.edu.cn,
liuwj@ouc.edu.cn
}

\begin{document}
\maketitle
\begin{abstract}
In many urban planning projects, social planners require the construction of a bridge to connect two regions separated by obstacles such as rivers or highways. This paper studies the mechanism design problem for locating a bridge between two separate regions, each of which has been equipped with a facility. There are a set of agents located in each region and each agent has her location as private information. Once the bridge is built, the agents will go to the nearest facility to receive service and each agent's cost is the distance from her location to the nearest prelocated facility via the bridge. We investigate social cost and maximum cost under strategyproof (SP) mechanisms, with stronger notions of group-strategyproof (GSP) and strong group-strategyproof (SGSP). 

For the maximum cost objective, we characterize the optimal solution and show that it satisfies GSP. Under the SGSP, we propose a deterministic 3-approximation mechanism and a randomized 2-approximation mechanism, while proving a lower bound of 2 for any deterministic SGSP mechanism. For the social cost objective, we present a deterministic 3-approximation mechanism and a randomized 2-approximation mechanism that satisfy GSP. We establish lower bounds of 2 and 1.1 for deterministic and randomized SP mechanisms, respectively. Under the SGSP, the lower bound for deterministic mechanisms increases to \(1 + \min\{m, n\}\), and we provide a \((1 + 2 \min\{m, n\})\)-approximation mechanism. For randomized mechanisms, the lower bound remains 1.1, while an upper bound of \((1 + 2mn\cdot (m+n)^{-1})\) is achieved.
\end{abstract}

\section{Introduction}
The facility location problem is an important problem in the field of optimization. Its main goal is to select the best locations to place facilities under given constraints to optimize certain objectives. In the facility location problem, each agent possesses private information that cannot be directly verified by the social planner. This informational asymmetry creates strategic incentives for agents to misreport their private information to influence the facility placement decision. 

In 2013, \cite{LW1} employed this problem to introduce their highly influential framework of approximate mechanism design without money.  In its standard formulation, the facility location problem tasks a social planner with locating one or two facilities on a real line to minimize either the maximum individual cost or the total social cost, based on the locations reported by agents. Since agents may strategically misreport their positions, the goal is to design mechanisms that are approximately optimal while ensuring truthfulness. The problem has since become a central topic of investigation in theoretical computer science and artificial intelligence. 

Most existing research focuses on locating facilities in unconnected spaces. While such models provide a foundational theoretical framework, they are often inadequate for addressing a prevalent class of real-world scenarios: two regions, separated by natural or artificial barriers, each already equipped with public service facilities, where a planner aims to construct a bridge to connect them. This enables agents to flexibly access the nearest facility—in either region—based on their real-time locations, thereby reducing travel distance, lowering time costs, and promoting equitable utilization of public resources.

The essence of this ``bridge location problem" is fundamentally a reconfiguration of spatial relationships between two regions. By strategically placing a bridge, we can strengthen the cooperative service capacity of existing facilities in both areas. The primary societal benefits include: direct reduction of physical distances and improved mobility; dynamic, proximity-based facility choice facilitated by the bridge, leading to an adaptive service network; and mutual complementarity between facilities across regions, boosting overall resource efficiency and fairness. For example, in a city with railway stations on both sides of a river, a well-positioned bridge allows some residents to bypass the station in their own region and directly access a closer station on the opposite bank, thus reducing commute times and alleviating localized traffic congestion.

Although bridge engineering optimization and network design have been preliminarily explored in operations research, systematic research on its location problem from a mechanism design perspective remains scarce~\cite{LW-1,LW-2}. Particularly under realistic constraints—such as agents holding private location information and the potential for strategic misreporting or coalitional manipulation—designing a bridge location mechanism that ensures truthful reporting, resists collusion, and maximizes social welfare represents a meaningful gap in the literature.

In the field of mechanism design, \cite{LW18} formally introduces the bridge location problem into the mechanism design framework, emphasizing how bridges can generate social value by integrating existing facilities across regions. We propose a practical model that further captures this value by specifying that following bridge construction, agents may autonomously select the nearest facility in either region, with costs determined by the shortest path.

The core of our model lies in a geometric simplification of the connection. We assume that the two regions to be connected are parallel, and the bridge is perpendicular to both regions with zero traversal cost. This simplification retains generality. First, representing the regions as parallel lines captures the essential notion of disconnection, abstracting away non‑essential geometric details such as shape and relative separation. Second, the physical length of a bridge is typically negligible compared to the scale of the regions; its primary function is to enable connectivity rather than incur travel cost. Furthermore, any realistic connection can be equivalently mapped to this perpendicular configuration through a suitable coordinate transformation. By adopting the parallel‑region, perpendicular zero‑cost bridge framework, the model preserves the essential structure of the problem while simplifying its geometric representation, allowing the analysis to focus on how the bridge‑induced connectivity influences the efficiency of resource allocation and social welfare.

This research revolves around the following key questions: in a flexible setting where agents can autonomously choose the nearest facility to use, how to design a bridge location mechanism that satisfies (strong) group-strategyproofness; what approximate efficiency such mechanisms can achieve under the objectives of minimizing maximum cost or social cost; and whether there exist theoretical performance limits.

\subsection{Related Work}
Our work is grounded in a series of studies for the design of approximate mechanisms without money, which was initiated by~\cite{LW1}. Most studies focused on the location problem where prelocated facilities are not taken into account. However, many situations require incorporating prelocated facilities to improve agents' accessibility and better reflect practical conditions.

\subsubsection{Facility Location without Prelocated Facilities.}  
In the one-facility setting, \cite{LW1} gave the best possible deterministic and randomized mechanisms under the objectives of maximum cost and social cost.~\cite{LW2} extended the facility location problem to circular and general graph structures and proposed a randomized mechanism with an approximation ratio of \(2 - 2/n\). Subsequent studies by~\cite{LW3},~\cite{LW4} and~\cite{LW5}. ~\cite{LW6} explored improved mechanisms and lower bounds in tree structures, discrete spaces, and strictly convex spaces. For the homogeneous two-facility,~\cite{LW1} proposed an optimal $2$-approximation under the maximum cost objective and a \((n - 2)\)-approximation under the social cost objective.~\cite{LW7} gave the lower bound to $n-2$ for deterministic mechanisms under the social cost, while~\cite{LW8} designed a randomized mechanism with a $4$-approximation in general metric spaces. In the heterogeneous two-facility setting,~\cite{LW9} introduced the dual preference model.~\cite{LW10,LW11} introduced the optional preference model, where the cost of each agent is the sum of the distances to the facilities they are interested in. \cite{LW12} investigated a setting in which agents select the nearest or farthest facility of interest.~\cite{LW13} considered a setting in which the cost of an agent is \(1 - d\) if she is interested in a facility at distance \(d\) and \(0\) otherwise. ~\cite{LW14} studied the fractional preference where the preference of each agent for the facilities is represented by a number between 0 and 1.

\subsubsection{Facility Location with Prelocated Facilities.}  
In many practical applications, prelocated resources
(such as facilities) should not be wasted. \cite{LW15} considered modifying the structure of the regions by adding a costless shortcut edge based on a prelocated facility on the real line. \cite{LW16} further studied the problem of adding a non-zero cost shortcut or two costless shortcuts. \cite{LW17} studied a basic scenario in which an obstacle divides the interval $[0,1]$ into two regions. Connectivity was restored by building bridges, with traversal costs proportional to bridge length. Agents on the left incurred costs equal to the distance to endpoint 1 via the bridge, while those on the right incurred costs to endpoint 0. 

\cite{LW18} studied the problem of constructing a bridge that connects two parallel regions, each of which has been equipped with a facility.  Our work differs fundamentally from the study by \cite{LW18}. While their setting assumed agents are located in separate regions and must cross a bridge to reach a facility in the opposite region, we allow agents to autonomously choose to access the closer facility in either of the two regions via a bridge, with their costs determined by the shortest path distance to their nearest facility. For example, in a city divided by a river, stations are located on both banks, and residents could select the closer railway station located in either region to reduce travel costs. Allowing agents to autonomously select the nearest facility based on distance makes the model more aligned with practical scenarios.

\subsection{Our Results}
In this paper, we study the mechanism design problem for locating a bridge between two separate regions, each of which has been equipped with a facility. We derive upper and lower bounds on the achievable approximation ratio for strategyproof and (strong) group-strategyproof mechanisms under the objectives of minimizing the maximum cost and the social cost. The results are summarized in Table \ref{tab1}, where $[a,b]$ represent the upper bound and the lower bound, respectively.


\begin{table}[htbp]
\centering
\renewcommand{\arraystretch}{1.3} 
\begin{tabular}{ccc}
\hline
\textbf{Objective} & \textbf{Maximum Cost} & \textbf{Social Cost} \\
\hline
\multirow{2}{*}{\textbf{SP/GSP}} 
    & Det.: 1& Det.: [2,3] \\
    & Rand.: 1 & Rand.: [1.1,2] \\
\hline
\multirow{2}{*}{\textbf{SGSP}} 
    & Det.: [2,3]& Det.: $[1+k,1+2k]$ \\
    & Rand.: [1,2] & Rand.: $\left[1.1,1+\frac{2mn}{m+n}\right]$ \\
\hline
\end{tabular}
\caption{\parbox{\linewidth}{\raggedright Summary of results. Each interval $[\text{lower bound}, \text{upper bound}]$ shows the approximation ratio bounds. ($k=\min\{m,n\}$).}}
\label{tab1}
\end{table}

\section{Model}
Let $L_1, L_2$ be two parallel lines that are used to indicate two regions separated by a natural or infrastructural barrier. Each line is equipped with a facility, located at $x_F$ (on $L_1$) and $y_F$ (on $L_2$), respectively.  There are agents distributed along the two lines, with $N = \{1,...,n\}$ on line $L_1$ and $M = \{1,...,m\}$ on line $L_2$. For consistency, we parameterize both lines using the real number line $R$, assigning each point a coordinate in $R$. Each agent $i \in N$ has a private location $x_i \in R$, and each agent $j \in M$ has a private location $y_j \in R$. Let $(\mathbf{x},\mathbf{y})$ denote the location profile (or instance), where $\mathbf{x}=(x_1,...,x_n)$ and $\mathbf{y}=(y_1,...,y_m)$. 

$L_1$ and $L_2$ are connected by a perpendicular bridge with zero travel cost, positioned at a point $s \in R$. Note that when $x_{F} = y_{F}$, the trivial mechanism that outputs \(x_F\) as the location of the bridge is clearly strategyproof and optimal under both social objectives. In the remainder of this paper, we consider the normalized scenario where $x_F \neq y_F$ with $x_F=1$ and $y_F=0$. This normalization preserves generality since locations can be arbitrarily scaled on the real line.

In our model, the facilities $F_1$ and $F_2$ are functionally identical, providing perfectly substitutable services to all agents. Each agent chooses between accessing their local facility directly or crossing the bridge to reach the opposite facility, whichever minimizes their travel distance. 

A deterministic mechanism is a function $f : R^{n}\times R^{m}\rightarrow R$ that maps the location profiles \( \mathbf{x} \) and \( \mathbf{y} \) of all agents to a real number, representing the location of the bridge. If $f(\textbf{x},\textbf{y})=s$,  the costs of each agent in sets \( N \) and \( M \) are defined as
\begin{align*}
c_1(x_i,s) &:= \min\{|x_i - x_F|, |x_i - s| + |s - y_F|\}, \quad  i \in N \\
c_2(y_j,s) &:= \min\{|y_j - y_F|, |y_j - s| + |s - x_F|\}, \quad  j \in M
\end{align*}

A randomized mechanism is a function from \(R^n \times R^m \) to a probability distribution over real numbers.  For a randomized mechanism $f$, the costs of each agent in sets \( N \) and \( M \) are defined as
\[
c_1(x_i, f(\mathbf{x},\mathbf{y})) := E_{s \sim f}[c_1(x_i, s)], \quad i \in N,
\]
\[
c_2(y_j, f(\mathbf{x},\mathbf{y})) := E_{s \sim f}[c_2(y_j, s)], \quad j \in M.
\]

A mechanism is strategyproof if no agent can decrease their cost by misreporting their location within their designated region regardless of the locations reported by the others. Formally, 
$\forall (\mathbf{x},\mathbf{y})\in R^{n}\times R^m$, 
$\forall i\in N$, $\forall x_{i}'\in R$, 
$\forall j\in M$, $\forall y_{j}'\in R$, 
we have 
$c_1(x_{i},f(\mathbf{x},\mathbf{y}))\leq c_1\left(x_{i},f(x_{i}',(\mathbf{x},\mathbf{y})_{-i})\right)$ 
and 
$c_2(y_{j},f(\mathbf{x},\mathbf{y}))\leq c_2(y_{j},f(y_{j}',(\mathbf{x},\mathbf{y})_{-j}))$. 
Here, $(\mathbf{x},\mathbf{y})_{-i}$ represents the location profile of the set of agents $N\cup M\!\setminus\!\{i\}$, $i\in N$; the definition of $(\mathbf{x},\mathbf{y})_{-j}$ is similar.

A mechanism is group-strategyproof if for any location profile $(\mathbf{x},\mathbf{y})$ and any coalition $S \subseteq N\cup M$ with $(\mathbf{x},\mathbf{y})_S$, there is no joint misreport of location $(\mathbf{x},\mathbf{y})_S'$ of the agents in $S$ such that all agents in $S$ gain. A mechanism is strong group-strategyproof if no group of agents can misreport in a way that at least one member strictly benefits while no other member’s cost increases.

The maximum cost and social cost of a mechanism $f$ with respect to a location profile \( (\mathbf{x},\textbf{y})\) are defined as the maximum cost among all $n+m$ agents and the total cost of all $n+m$ agents, respectively. For a deterministic mechanism $f$,
\begin{align*}
    &MC((\mathbf{x}, \mathbf{y}), f) \\&= \max\left\{\max_{i \in N} c_{1}(x_{i}, f(\textbf{x},\textbf{y})), \max_{j \in M} c_{2}(y_{j}, f(\textbf{x},\textbf{y}))\right\},\\
    &SC((\mathbf{x}, \mathbf{y}),f)\\ &= \sum_{i \in N} c_{1}(x_{i}, f(\textbf{x},\textbf{y})) + \sum_{j \in M} c_{2}(y_{j}, f(\textbf{x},\textbf{y})).
\end{align*}
For a randomized mechanism $f$, $$MC( (\mathbf{x},\textbf{y}),f)=\mathbf{E}_{Y\sim f( \mathbf{x},\textbf{y})}[MC((\textbf{x},\textbf{y}),Y)],$$ $$SC( (\mathbf{x},\textbf{y}),f)=\mathbf{E}_{Y\sim f( \mathbf{x},\textbf{y})}[SC((\textbf{x},\textbf{y}),Y)].$$

For an instance $(\mathbf{x}, \mathbf{y})$, let $s^{*}_{\text{MC}}(\textbf{x},\textbf{y})$ and $s^{*}_{\text{SC}}(\textbf{x},\textbf{y})$ denote its optimal solutions for the objective of maximum cost and social cost, respectively. The subscript is omitted without confusion. A strategyproof mechanism \( f \) has an approximation ratio of $\gamma\geq 1$ under the maximum cost objective, if 
\begin{equation*}
\gamma=\sup_{(\mathbf{x},\mathbf{y})\in R^{n}\times R^{m}}\frac{MC((\mathbf{x,y}),f)}{MC((\mathbf{x,y}),s^{*}_{MC})}.
\end{equation*} 
The approximation ratio is defined similarly under the social cost objective. In this paper, we will focus on anonymous\footnote{A mechanism is anonymous if its outcome depends only on the agents' locations, not on their identities.} strategyproof mechanisms with good approximation ratios under the social objectives of minimizing the maximum cost or minimizing the social cost. Henceforth, for simplicity, the instance $(\mathbf{x},\mathbf{y})$ is assumed to satisfy $x_{1}\leq x_{2}\leq...\leq x_{n},y_{1}\leq y_{2}\leq...\leq y_{m}$ without further comment.\\
\textbf{Notations.} Let \( lt(\mathbf{x}) \) and \( rt(\mathbf{x}) \) $\left(lt(\mathbf{y})\text{ and } rt(\mathbf{y})\right)$ denote the locations of the leftmost and rightmost agents in \( \mathbf{x} \) $(\mathbf{y})$, respectively. To facilitate the analysis, we partition the agents based on their relative positions. We first divide agents on each line at $\frac{1}{2}$. Then, we further partition agents by comparing their positions to the facility’s location on the opposite line, separating those on each side of that facility. We partition the agents as follows. Let $N_1$ denote the set of agents in $N$ with $x_i < 0.5$, and $N_2$ those with $x_i \geq 0.5$. Similarly, let $M_1$ be the set of agents in $M$ with $y_i \leq 0.5$, and $M_2$ those with $y_i > 0.5$. Denote $n_1 = |N_1|$, $n_2 = |N_2|$, $m_1 = |M_1|$, and $m_2 = |M_2|$. These partitions are illustrated in Fig. \ref{fig2}.

The mechanisms in this paper are analyzed under three incentive constraints: SP, GSP, and SGSP, where SGSP implies GSP, and GSP implies SP. Therefore, SP lower bounds hold for all three classes, and any upper bound achieved under a stronger constraint also applies to weaker ones.
\begin{figure}
    \centering
    \includegraphics[width=1\linewidth]{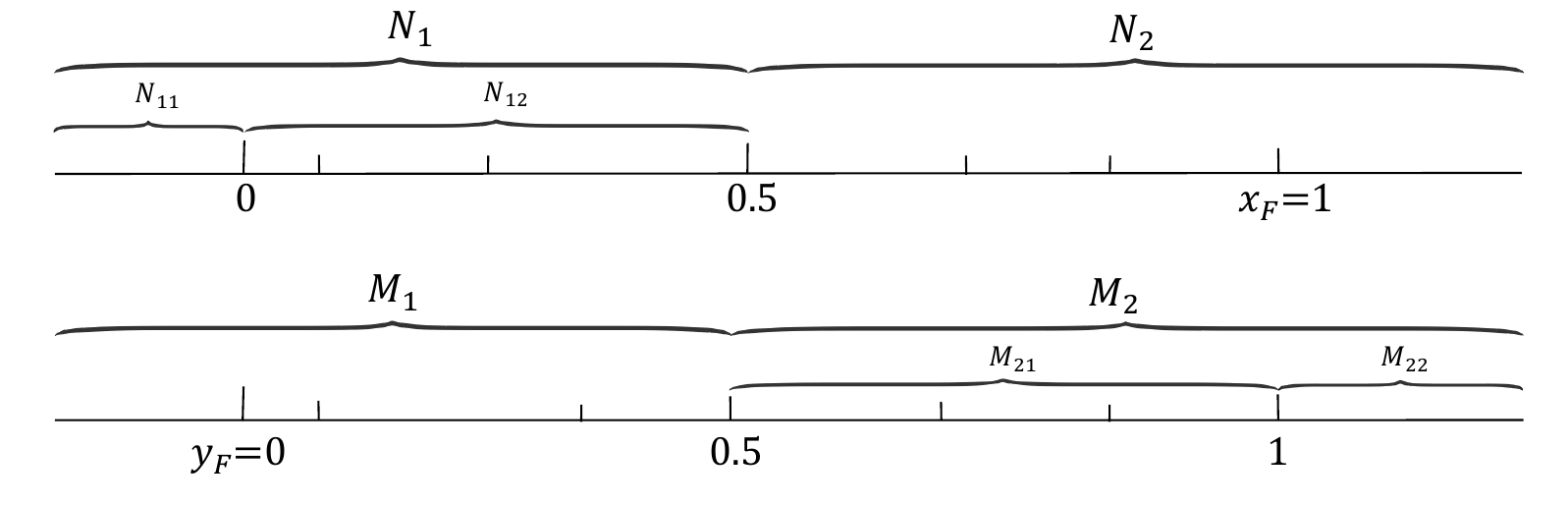}
    \caption{Partition of sets $N$ and $M$ based on $(\mathbf{x},\mathbf{y})$.}
    \label{fig2}
\end{figure}
\section{Maximum Cost}
This section considers the mechanism design for locating the bridge between two parallel lines with prelocated facilities, under the social objective of minimizing the maximum cost. We first present a simple optimal solution.
\begin{proposition}
    \label{thm1}
For any instance $(\mathbf{x},\mathbf{y)}$, an optimal solution for minimizing the maximum cost is given by  \begin{equation*}
    s^{*}( \mathbf{x},\mathbf{y})=\begin{cases}
        0, & \mbox{if }\textbf{ }1-lt(\mathbf{x})\geq rt(\mathbf{y}), \\
        1, & \mbox{if }\textbf{ }1-lt(\mathbf{x})< rt(\mathbf{y}).
      \end{cases}
  \end{equation*}
\end{proposition}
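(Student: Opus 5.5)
Write $\ell=lt(\mathbf{x})$, $r=rt(\mathbf{y})$, $a=1-\ell$ and $b=r$. The plan is to compare $MC$ at an arbitrary bridge location $s$ with $MC$ at the two candidate locations $0$ and $1$. Proposition~\ref{thm1} then follows from two claims: every $s$ is weakly dominated by $0$ or by $1$, and the comparison between $0$ and $1$ is decided by whether $a\ge b$.

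\textbf{Step 1: per-agent lower bound.} By the triangle inequality, $|x_i-s|+|s-0|\ge |x_i|$ and $|y_j-s|+|s-1|\ge|y_j-1|$. Hence for every $s$,
\[
c_1(x_i,s)\ge \min\{|x_i-1|,|x_i|\},\qquad c_2(y_j,s)\ge\min\{|y_j|,|y_j-1|\}.
\]
Equality holds in the first at $s=0$ and in the second at $s=1$. At the extreme locations the costs are explicit. At $s=0$ every $N$-agent attains her lower bound and every $M$-agent pays $|y_j|$. At $s=1$ every $M$-agent attains her lower bound and every $N$-agent pays $|x_i-1|$. Agents with $x_i\ge 1/2$ (resp.\ $y_j\le 1/2$) pay their own-side distance regardless of $s$ whenever the bridge is at $0$ (resp.\ $1$). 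These agents contribute a common ``unavoidable'' term $U$, which is a lower bound for $MC(s)$ for every $s$.

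\textbf{Step 2: reduce to the extreme agents.} Among $N$-agents, $|x_i-1|$ is maximized at $\ell$, and among $M$-agents, $|y_j|$ is maximized at $r$ or at $lt(\mathbf{y})$. An agent with $y_j<0$ pays $|y_j|$ for every $s$, since crossing costs at least $1-y_j$; so she belongs to $U$. Symmetrically, agents with $x_i>1$ belong to $U$. This gives
\[
MC(0)=\max\{U,\,b\},\qquad MC(1)=\max\{U,\,a\}.
\]
It therefore suffices to prove that $MC(s)\ge \max\{U,\min\{a,b\}\}$ for all $s$. For this it is enough to show
\[
\max\{c_1(\ell,s),\,c_2(r,s)\}\ge \min\{a,b\}
\]
whenever $\min\{a,b\}>U$. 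In that situation $\ell<1/2$ and $r>1/2$, because otherwise one of $a,b$ would already be dominated by $U$.

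\textbf{Step 3: case analysis on $s$ (main obstacle).} If $s<0$, replacing $s$ by $0$ weakly decreases every cost; similarly, if $s>1$, replacing $s$ by $1$ weakly decreases every cost. So assume $s\in[0,1]$. Suppose $c_1(\ell,s)<a$. Then $\ell$ crosses the bridge, so $|\ell-s|+s<1-\ell$. Since $\ell<1/2$ and $s\ge0$, this forces $s$ to be small, roughly $s<\tfrac12$. Now consider $r>1/2$. If $c_2(r,s)=r$, we get $c_2(r,s)=b\ge\min\{a,b\}$ and are done. Otherwise $c_2(r,s)=|r-s|+1-s$, and with $s<1/2<r$ this equals $r+1-2s$, which exceeds $r$ when $s<1/2$; so in fact $c_2(r,s)=r$ anyway. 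The symmetric argument applies when $c_2(r,s)<b$. The delicate point is pinning down the threshold: checking that ``$\ell$ prefers crossing'' and ``$r$ prefers crossing'' cannot both hold for the same $s$. I would verify this by summing the two strict inequalities $|\ell-s|+s<1-\ell$ and $|r-s|+1-s<r$ and deriving a contradiction via $|\ell-s|+|r-s|\ge r-\ell$. Combined with Step 2, this yields $MC(s)\ge\min\{MC(0),MC(1)\}$. The proposition follows by selecting $0$ when $a\ge b$ and $1$ otherwise.
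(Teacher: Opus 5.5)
Your strategy is sound and differs from the paper's, but Step~2 has a bookkeeping error that must be fixed before the argument closes. As you define it, $U$ collects only the agents with $x_i\ge\frac{1}{2}$ or $y_j\le\frac{1}{2}$, i.e.\ those whose cost does not depend on $s$. With that $U$, the identities $MC(0)=\max\{U,b\}$ and $MC(1)=\max\{U,a\}$ are false. They omit two costs, and neither need be dominated by $a$ or $b$:
\begin{itemize}
\item the cost $|x_i|$ paid at $s=0$ by agents with $x_i<\frac{1}{2}$;
\item the cost $|y_j-1|$ paid at $s=1$ by agents with $y_j>\frac{1}{2}$.
\end{itemize}
For example, take $\mathbf{x}=(-5)$ and $\mathbf{y}=(0.6)$. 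Your $U$ is vacuous, $a=6$ and $b=0.6$, yet $MC(0)=5$. So the bound $MC(s)\ge\max\{U,\min\{a,b\}\}=0.6$ from Step~3 says nothing about the optimality of $0$.

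The repair uses only Step~1. Let $U$ be the maximum, over \emph{all} $n+m$ agents, of the lower bounds $\min\{|x_i-1|,|x_i|\}$ and $\min\{|y_j|,|y_j-1|\}$. This is still a lower bound on $MC(s)$ for every $s$. At $s=0$:
\begin{itemize}
\item every $N$-agent pays exactly her bound;
\item every $M$-agent with $y_j\le\frac{1}{2}$ pays $|y_j|\le U$;
\item every $M$-agent with $y_j>\frac{1}{2}$ pays $y_j\le r$.
\end{itemize}
Together with $U\le MC(0)$ this gives $MC(0)=\max\{U,b\}$, and symmetrically $MC(1)=\max\{U,a\}$. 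Your observation that $\min\{a,b\}>U$ forces $\ell<\frac{1}{2}<r$, and all of Step~3, then go through unchanged. Also, your summation argument in Step~3 is valid for every real $s$ once you use $|s|+|s-1|\ge 1$. So the preliminary reduction to $s\in[0,1]$ is unnecessary, and the threshold is exactly $s<\frac{1}{2}$, not ``roughly''.

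For comparison, the paper proceeds as follows:
\begin{itemize}
\item it assumes without loss of generality $1-lt(\mathbf{x})\ge rt(\mathbf{y})$;
\item it asserts as ``easy to prove'' that every $s$ outside $[lt(\mathbf{x}),\frac{1}{2}]$ is dominated by $0$;
\item it writes $s=\frac{1}{2}-\epsilon$ and computes $MC$ explicitly as a piecewise-linear function of $\epsilon$ on $[0,\frac{1}{2}]$ and on $[\frac{1}{2},\infty)$, reading off that the minimum is at $s=0$.
\end{itemize}
You never evaluate $MC$ at interior points. Instead you combine the per-agent triangle-inequality bounds, which are tight for one side at each endpoint, with the fact that the two extreme agents $\ell$ and $r$ cannot both strictly gain from crossing at the same location. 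This handles all $s$ and both cases of the proposition at once, and it actually proves the domination step the paper only asserts. In exchange, the paper's computation shows how $MC$ behaves between the endpoints, and hence which non-endpoint locations are also optimal; your lower bound does not reveal this. Applied to every agent rather than just $\ell$ and $r$, your Step~3 threshold is exactly the per-agent domination the paper uses later for the social-cost endpoint proposition, and it yields $MC(s)\ge\min\{MC(0),MC(1)\}$ directly.
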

\vspace{0.5em}
Due to space limitations, some proofs are deferred to Appendix~\ref{app:supplementary}.
\vspace{0.5em}
\subsection{GSP Mechanism for Maximum Cost}
Note that the optimal solution is not strategyproof. For example, consider an instance $(x_{1},y_{1})$, where $x_{1}=0.25$ and $y_{1}=1$. Clearly, $s^{*}(x_{1},y_{1})=1$ by Proposition \ref{thm1}. Suppose agent 1 misreports her location $x_1=0.25$ to $x_1'=0$, resulting in instance $(x_1',y_1)$ with $s^{*}(x_1',y_1)=0$. Obviously, agent 1 can benefit from misrepresenting her location. To address this issue, we develop a distance-symmetric mechanism that maintains approximate optimal performance while ensuring that no agent can benefit from misrepresenting private information.

\begin{mechanism}\label{mec1}
  For any instance $(\mathbf{x,y})$, if $lt(\mathbf{x})\geq\frac{1}{2}$ or $rt(\mathbf{y})\leq\frac{1}{2}$, the bridge is located at $s$ as follows:
  \begin{equation*}
    s=\begin{cases}
        0, & \mbox{if }\textbf{ }rt(\mathbf{y})<\frac{1}{2}, \\
        1, & \mbox{if }\textbf{ }lt(\mathbf{x})\geq\frac{1}{2} .
      \end{cases}
  \end{equation*}
  If $lt(\mathbf{x})<\frac{1}{2}< rt(\mathbf{y})$, the bridge is located at $s$ as follows:
  \begin{equation*}
    s=\begin{cases}
        \frac{1}{2}-rt(\mathbf{y}), & \mbox{if }\text{ }1-lt(\mathbf{x})\geq rt(\mathbf{y}), \\
        \frac{3}{2}-lt(\mathbf{x}), & \mbox{if }\text{ }1-lt(\mathbf{x})<rt(\mathbf{y}).
      \end{cases}
  \end{equation*}
\end{mechanism}

\begin{theorem}
Mechanism \ref{mec1} is GSP and achieves optimality under the maximum cost objective
\end{theorem}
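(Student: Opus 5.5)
The proof splits into two parts, optimality and GSP, each handled by a case analysis that follows the branches of Mechanism~\ref{mec1}. Throughout I would use the explicit cost formulas. For a bridge at $s\le 0$, an agent $y\in M$ pays $\min\{|y|,\,|y-s|+1-s\}$. An agent $x\in N$ with $x\ge s$ pays $\min\{|x-1|,\,x-2s\}$. Symmetric formulas hold for $s\ge 1$.

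\textbf{Optimality.} I would first use Proposition~\ref{thm1} to write the optimal value in closed form. When $1-lt(\mathbf{x})\ge rt(\mathbf{y})$ the optimal bridge is $s^*=0$. Each $y_j$ then pays $|y_j|$, and each $x_i$ pays $\min\{|x_i|,|x_i-1|\}$. So $OPT=\max\{\max_j|y_j|,\max_i\min\{|x_i|,|1-x_i|\}\}$. The case $s^*=1$ is symmetric.

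The degenerate branches are easy. If $rt(\mathbf{y})<\tfrac12$, then $s=0$ coincides with $s^*$, because $1-lt(\mathbf{x})>\tfrac12>rt(\mathbf{y})$ unless every $x_i\ge \tfrac12$, and in that case the $N$-agents all go to $x_F$ anyway. If $lt(\mathbf{x})\ge\tfrac12$, the argument for $s=1$ is symmetric.

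In the main branch $lt(\mathbf{x})<\tfrac12<rt(\mathbf{y})$ with $1-lt(\mathbf{x})\ge rt(\mathbf{y})$, the mechanism outputs $s=\tfrac12-rt(\mathbf{y})<0$. I would check the agents one group at a time.
\begin{itemize}
\item An agent $x_i\ge \tfrac12$ still walks to $x_F$, so her cost is unchanged.
\item An agent $x_i<\tfrac12$ pays at most $|x_i-s|+|s|$, which is at most $\max\{1-lt(\mathbf{x}),\,\cdot\}$. The comparison to carry out is that $|lt(\mathbf{x})-s|+|s|\le 1-lt(\mathbf{x})$ whenever the bridge lies to the right of $lt(\mathbf{x})$ or close to it. Otherwise the agent simply takes her own facility, at cost $1-x_i\le 1-lt(\mathbf{x})$.
\item An agent $y_j$ pays at most $|y_j|\le\max\{rt(\mathbf{y}),-lt(\mathbf{y})\}$. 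The shift of the bridge cannot hurt her, because she may always use $y_F$.
\end{itemize}
So every cost is bounded by a quantity that already appears in $OPT$. The step needing care is showing that $OPT$ is actually at least $1-lt(\mathbf{x})$ (or the relevant substitute) in this branch. I would verify this directly from the formula in Proposition~\ref{thm1}, and if the bound fails I would refine the comparison to use $\min\{|x_i|,1-x_i\}$.

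\textbf{GSP.} Take a coalition $S$ and a joint misreport that changes the output from $s$ to $s'$. Everyone in $S$ must strictly gain. The output depends only on $lt(\mathbf{x})$, $rt(\mathbf{y})$ and their comparison, so $S$ must contain an agent whose report moves one of these statistics or flips a branch.

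In branch $s=\tfrac12-rt(\mathbf{y})$, I would show that the decisive agent $y^{\max}=rt(\mathbf{y})$ is indifferent or hurt by any change of $s$. She already pays $rt(\mathbf{y})$ by going to $y_F$, and crossing to $x_F$ costs $|y-s|+1-s\ge rt(\mathbf{y})$. Moving the bridge can therefore only keep her cost or, if it flips to the other branch with $s'\ge 1$, still leave her cost at $\min\{rt(\mathbf{y}),\ldots\}=rt(\mathbf{y})$, so she does not strictly gain. Similarly, an agent at $lt(\mathbf{x})$ who misreports to move the bridge only pushes it further from herself. Her current cost is $|lt(\mathbf{x})-s|+|s|$ with $s$ chosen as the ``symmetric'' point, and I would argue this is the minimum of her cost over all $s'$ that the coalition can induce. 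The other agents cannot change the output alone.

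The main obstacle is the branch flip from $s=\tfrac12-rt(\mathbf{y})$ to $s'=\tfrac32-lt(\mathbf{x}')$. Here $N$-agents lie to the left to lower $lt(\mathbf{x})$, or $M$-agents lie to raise $rt(\mathbf{y})$. I must show that at least one deviator, namely whoever becomes the new extreme, does not strictly gain. The key identity I expect to use is that the new bridge is at distance exactly $\tfrac12$ beyond the misreported extreme. As a result, the deviating agent's true cost to the far facility is no smaller than her true cost under $s$.
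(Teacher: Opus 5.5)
\textbf{Optimality.} Your argument does not prove the main branch. Bounding an $N_1$-agent's cost by $1-lt(\mathbf{x})$ cannot work, because in this branch $OPT$ is typically strictly smaller than $1-lt(\mathbf{x})$. For $x_1=0.4$, $y_1=0.55$ the mechanism outputs $s=-0.05$, while $OPT=0.55<0.6=1-lt(\mathbf{x})$. Your fallback, an agent-by-agent comparison with $\min\{|x_i|,1-x_i\}$, also fails. Every agent with $s<x_i<\tfrac12$ is strictly worse off at $s$ than at $s^*=0$ (in the example $x_1$ pays $0.5$ instead of $0.4$), so no per-agent domination holds. The missing idea is to compare such an agent with the \emph{rightmost $M$-agent} rather than with herself. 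She pays $\min\{1-x_i,\,x_i-2s\}$, and a minimum is at most the average. Hence her cost is at most $\tfrac12((1-x_i)+(x_i-2s))=\tfrac12-s=rt(\mathbf{y})\le OPT$, since the agent at $rt(\mathbf{y})$ pays exactly $rt(\mathbf{y})$ at $s^*=0$. This averaging bound is precisely why the bridge is placed at $\tfrac12-rt(\mathbf{y})$, and it is the step the paper's proof rests on.

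\textbf{GSP.} Your two ``decisive agent'' claims are false. The agent at $rt(\mathbf{y})$ strictly gains whenever $s'\in(\tfrac12,\,rt(\mathbf{y})+\tfrac12)$. An $N_1$ member of the coalition can produce such an $s'$ by raising $lt(\mathbf{x}')$. With $x_1=0.1$, $y_1=0.8$ we have $s=-0.3$; if $x_1$ reports $0.45$, then $s'=1.05$ and $y_1$'s cost drops from $0.8$ to $0.3$, even though $s'\ge 1$. Likewise the leftmost $N$-agent's current cost is not minimal over inducible outcomes: if $y_1$ reports $0.3$, then $s'=0$ and $x_1$'s cost drops from $0.7$ to $0.1$. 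Each manipulation fails only because the \emph{other} member does not gain, and your argument never uses that fact.

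What you need is the case split on the composition of $S$, as in the paper. Agents in $N_2\cup M_1$ never gain. An $N_1$ agent cannot gain from any $s'>\tfrac12$, since she then pays $1-x_i$. An $M_2$ agent cannot gain from any $s'\le\tfrac12$, since she then pays $y_j$. So a coalition meeting both $N_1$ and $M_2$ always has a non-gainer. A coalition inside $N_1$ cannot move $rt(\mathbf{y})$, so it either leaves $s$ unchanged or sends the bridge to $s'\ge 1$. A coalition inside $M_2$ cannot move $lt(\mathbf{x})$, so a branch flip lands at $s'=\tfrac32-lt(\mathbf{x})\ge rt(\mathbf{y})+\tfrac12\ge y_j+\tfrac12$, where no $M_2$ agent gains. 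The symmetric branch is handled the same way.
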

\begin{proof}
We first prove that the Mechanism \ref{mec1} is GSP. Suppose a coalition of agents misreports their locations to benefit all members. Note that agents in \( N_2 \) and \( M_1 \) never use the bridge and thus have no incentive to participate in misreporting. 

For any instance $(\mathbf{x},\mathbf{y})$, if $lt(\mathbf{x})\geq\frac{1}{2}$ or $rt(\mathbf{y})\leq\frac{1}{2}$, the mechanism always outputs the optimal solution, so no group can misreport to strictly benefit everyone. Now consider the case where $lt(\mathbf{x})<\frac{1}{2}<rt(\mathbf{y})$ and $1-lt(\mathbf{x})\geq rt(\mathbf{y})$, the mechanism outputs the location $s=\frac{1}{2}-rt(\mathbf{y})$. For a coalition of agents $S$, suppose the coalition $S$ misreports their locations, resulting in the reported instance $(\mathbf{x}', \mathbf{y}')$ and the corresponding mechanism output $s'$.

\textbf{Case 1.} $S$ satisfies $S\cap N_1\neq \emptyset$ and $S\cap M_2\neq \emptyset$.
If $lt(\mathbf{x}')\geq\frac{1}{2}$ or $rt(\mathbf{y}')\leq\frac{1}{2}$, then $s'\in \{0,1\}$. It is easy to see that there always exists at least one agent whose cost does not strictly decrease.
If $1-lt(\mathbf{x}')\geq rt(\mathbf{y}')$, the mechanism outputs the location $s'=\frac{1}{2}-rt(\mathbf{y}')\leq0$. Clearly, the cost of agents in $S\cap M_2$ does not decrease.
If $1-lt(\mathbf{x}')< rt(\mathbf{y}')$, the mechanism outputs the location $s'=\frac{3}{2}-lt(\mathbf{x}')>1$. Clearly, the cost of agents in $S\cap N_1$ does not decrease.

\textbf{Case 2.} $S$ satisfies $S\cap N_1\neq \emptyset$ and $S\cap M_2=\emptyset$.
If $lt(\mathbf{x}')\geq\frac{1}{2}$, then $s'=1$. Clearly, the agents' costs do not strictly decrease.
If $1-lt(\mathbf{x}')\geq rt(\mathbf{y})$, then $s=s'$.
If $1-lt(\mathbf{x}')<rt(\mathbf{y})$, then $s'=\frac{3}{2}-lt(\mathbf{x}')\geq1$, and thus the agents' costs do not decrease.

\textbf{Case 3.} $S$ satisfies $S\cap N_1= \emptyset$ and $S\cap M_2\neq\emptyset$.
If $rt(\mathbf{y}')\leq\frac{1}{2}$, then $s'=0$. Clearly, the costs of all agents in $S$ do not strictly decrease.
If $1-lt(\mathbf{x})\geq rt(\mathbf{y}')$, then $s'=\frac{1}{2}-rt(\mathbf{y}')<\frac{1}{2}$, and the costs of all agents in $S$ do not strictly decrease.
If $1-lt(\mathbf{x})<rt(\mathbf{y}')$, then $s'=\frac{3}{2}-lt(\mathbf{x})$. For any agent $j\in S$ with \(y_{j} \in [\frac{1}{2}, rt(\mathbf{y})]\), we have
\begin{align*}
    c_{2}(y_{j},s) &= y_{j}, \\
    c_{2}(y_{j},s')&=\min\left\{y_{j}, \left|\frac{3}{2} - lt(\mathbf{x}) - y_{j}\right| + \frac{3}{2} - lt(\mathbf{x}) - 1\right\} \\
     &= \min\left\{y_{j}, 2 - 2lt(\mathbf{x}) - y_{j}\right\}.
\end{align*}
Since \(y_{j} \in [\frac{1}{2}, rt(\mathbf{y})]\), it follows that \(c_{2}( y_{j},s') = y_{j} = c_{2}( y_{j},s)\). Therefore, there always exists at least one agent who does not obtain a strict benefit.

The proof for the case $lt(\mathbf{x}) < \frac{1}{2} < lt(\mathbf{y})$ with $1 - lt(\mathbf{x}) \leq rt(\mathbf{y})$ is analogous. This establishes group-strategyproofness.

We now prove that Mechanism \ref{mec1} achieves optimality under the maximum cost objective. Let $f$ denote Mechanism 1. For any instance $(\mathbf{x},\mathbf{y})$, if \(lt(\mathbf{x})\geq \frac{1}{2}\) and $rt(\mathbf{y})>\frac{1}{2}$, it is clear that for any output, the cost remains identical across all agents except those with $y_{j}\geq\frac{1}{2},j\in M$. Then for each agent $j\in M$ with $y_j\geq\frac{1}{2}$,
\begin{align*}
    c_{2}(y_{j},s^{*}(\mathbf{x},\mathbf{y}))&=|y_j-s^{*}(\mathbf{x},\mathbf{y})|+|s^{*}(\mathbf{x},\mathbf{y})-1|\\
    &\geq|y_{j}-1|=c_2(y_j,1).
\end{align*}
    If \(lt(\mathbf{x})\geq \frac{1}{2}\) and $rt(\mathbf{y})\leq\frac{1}{2}$, any output is optimal. If \(rt(\mathbf{y})\leq \frac{1}{2}\), the proof is symmetry. Now consider \(lt(\mathbf{x}) < \frac{1}{2} < rt(\mathbf{y})\).

\textbf{Case 1.} \(1 - lt(\mathbf{x}) \geq rt(\mathbf{y})\).\\
Let
\begin{align*}
    a &= \max_{i \in N_{2}, j \in M_{1}} \left\{c_{1}(x_{i}, s), c_{2}(y_{j}, s)\right\} \\&= \max_{i \in N_{2}, j \in M_{1}} \left\{|x_{i} - 1|, |y_{j}|\right\}.
\end{align*}
Then, \(MC((\mathbf{x}, \mathbf{y}), s^{*}) \geq \max\{a,lt(\mathbf{x}), rt(\mathbf{y})\}\), and
\begin{align*}
    &MC((\mathbf{x}, \mathbf{y}), f) \\
    &= \max\left\{a,\mathop{c_{1}}\limits_{i \in N_{1}:x_i\leq f}(x_{i},f), \mathop{c_{1}}\limits_{i \in N_{1}:x_i>f}(x_{i},f), rt(\mathbf{y})\right\} \\
    &\leq \max\left\{a,lt(\mathbf{x}), \frac{1 - (\frac{1}{2} - rt(\mathbf{y})) + rt(\mathbf{y}) - \frac{1}{2}}{2}, rt(\mathbf{y})\right\} \\
    &= \max\left\{a,lt(\mathbf{x}), rt(\mathbf{y})\right\}.
\end{align*}
Therefore, the mechanism outputs the optimal solution.

\textbf{Case 2.} \(1 - lt(\mathbf{x}) < rt(\mathbf{y})\). The proof is similar.
\end{proof}

\subsection{SGSP Mechanism for Maximum Cost}
\begin{mechanism}\label{mec4}
For any instance $(\mathbf{x},\mathbf{y})$, if \( n_1 > 0 \), return \( s = 0 \); otherwise return \( s = 1 \).
\end{mechanism}
\begin{theorem}
Mechanism \ref{mec4} is a deterministic SGSP 3-approximation mechanism for the maximum cost.
\end{theorem}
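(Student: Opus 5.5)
The plan is to handle the two claims separately: first strong group-strategyproofness, then the approximation ratio. Both rest on the two cost formulas for the only outputs Mechanism~\ref{mec4} ever returns, $s\in\{0,1\}$:
\begin{align*}
&s=0:\quad c_1(x_i,0)=\min\{|x_i-1|,|x_i|\},\qquad c_2(y_j,0)=|y_j|,\\
&s=1:\quad c_1(x_i,1)=|x_i-1|,\qquad c_2(y_j,1)=\min\{|y_j|,|y_j-1|\}.
\end{align*}
For $s=0$, the detour for an agent $j\in M$ via the bridge costs $|y_j|+1\ge |y_j|$, so she never uses it; for $s=1$ the situation is symmetric for agents in $N$.

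For SGSP, I first note that the output depends only on whether some reported $x_i$ lies below $\frac{1}{2}$. Hence reports of agents in $M$ are irrelevant, and a coalition can change the outcome only through its members in $N$. I would then split into two cases.

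\textbf{Case $n_1>0$ (output $0$).} Flipping the output to $1$ requires every agent of $N_1$ to be in the coalition and to report at least $\frac{1}{2}$. Such an agent has $x_i<\frac{1}{2}$, so her cost changes from $\min\{x_i,1-x_i\}$ (which is $x_i$, or $|x_i|$ when $x_i<0$) to $1-x_i$. This is a strict increase, so the deviation is not allowed.

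\textbf{Case $n_1=0$ (output $1$).} Flipping to $0$ only changes costs of members of $N_2$ and of $M$. For $x_i\ge\frac{1}{2}$ we have $\min\{|x_i-1|,|x_i|\}=|x_i-1|$, so these costs are unchanged. Costs of agents in $M$ weakly increase, from $\min\{|y_j|,|y_j-1|\}$ to $|y_j|$.

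In both cases no coalition member strictly gains without someone being hurt, and the same computation covers coalition members of $N_2$ in the first case.

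For the ratio, I would use the pointwise lower bound that for every agent and every $s$, $c_1(x_i,s)\ge\min\{|x_i-1|,|x_i|\}$ and $c_2(y_j,s)\ge \min\{|y_j|,|y_j-1|\}$.
\begin{itemize}
\item When $n_1=0$, the output $1$ attains these individual lower bounds for every agent. The $N$-agents all have $x_i\ge\frac{1}{2}$, so the ratio is $1$.
\item When $n_1>0$, the $N$-agents again attain their individual lower bounds, so only $M$-agents with cost $|y_j|$ matter.
\end{itemize}

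Fix such an agent $j$, let $D=MC((\mathbf{x},\mathbf{y}),s^*)$, and suppose $|y_j|>D$. Then under $s^*$ she must cross the bridge, so $|y_j-s^*|+|s^*-1|\le D$. The triangle inequality gives $|y_j|\le D+1$, so it suffices to show $D\ge\frac{1}{2}$. This is the main step, and the place where the presence of the agent $x=lt(\mathbf{x})<\frac{1}{2}$ is used. Either she uses her own facility, so $D\ge 1-x>\frac{1}{2}$. Or she uses the bridge, so $|x-s^*|+|s^*|\le D$; adding this to the bound for $y_j$ gives $2D\ge |s^*|+|s^*-1|\ge 1$. Hence $|y_j|\le D+1\le 3D$, which yields the $3$-approximation.
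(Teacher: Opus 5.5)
Your proof is correct. The strong group-strategyproofness part is essentially the paper's two-case argument. An output of $0$ can only be overturned if all of $N_1$ join the coalition, and they all strictly lose; an output of $1$ can only be overturned without any member strictly gaining. Your version is a bit more careful: you note that members of $N_2$ keep cost $|x_i-1|$ under $s=0$ (the paper says it becomes $x_i$), and you explicitly handle coalition members from $M$.

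For the ratio you take a different route. The paper uses Proposition~1 to assume the optimum is $s^*=1$ and writes ALG and OPT explicitly. It then cancels the $N_2$ and $M_1$ terms, which coincide in both, and reduces the ratio to $A/\max\{\frac12,A-1\}$ with $A=\max_{j\in M_2}y_j$, splitting at $A=\frac32$. You instead compare against an arbitrary optimal $s^*$, using only the per-agent lower bounds $\min\{|x_i|,|x_i-1|\}$ and $\min\{|y_j|,|y_j-1|\}$ and the triangle inequality. An $M$-agent with $|y_j|>D$ must cross under $s^*$, so $|y_j|\le D+1$; adding her detour to that of an $N_1$ agent gives $D\ge\frac12$.

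Your argument is self-contained, since it never needs the structure of the optimum. It also treats negative coordinates uniformly, whereas the paper's chain drops absolute values along the way. In exchange, the paper's explicit formula in $A$ points straight at the extremal instance $x_1=\frac12-\epsilon$, $y_1=\frac32$, which it uses to show that the bound $3$ is tight. You give no such example. This matters only if ``$3$-approximation'' is read, per the paper's definition, as the exact supremum. Note that both of your inequalities, $|y_j|\le D+1$ and $D+1\le 3D$, become equalities in the limit on exactly that instance.
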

\begin{theorem}
Any deterministic SGSP mechanism has an approximation ratio of at least $2$ for the maximum cost.
\end{theorem}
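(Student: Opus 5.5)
The plan is to construct a short chain of instances and use the strong group-strategyproofness constraint to tie the mechanism's output on one instance to its outputs on the others. The one structural fact the argument relies on is this. An agent with $x_i\ge\frac12$ (in $N_2$) or with $y_j\le\frac12$ (in $M_1$) has a cost that does not depend on the bridge. For $x_i\ge\frac12$, the route via the bridge costs $|x_i-s|+|s|\ge|x_i|\ge|x_i-1|$, so $c_1(x_i,s)=|x_i-1|$ for every $s$; the case $y_j\le\frac12$ is symmetric. Under SGSP such an \emph{indifferent} agent can be added to any coalition at no cost, as long as some other member strictly gains. Consequently, if moving indifferent agents between positions changes the output in a way that strictly helps a single bridge-using agent, SGSP is violated, even though plain SP would not be. This gives a much stronger lever than SP, which matters because Mechanism~\ref{mec1} shows that SP and GSP alone allow optimality.

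First, I would fix a small family of instances with one bridge-sensitive agent on each line, for example $x_1=a<\frac12$ and $y_1=b>\frac12$, together with auxiliary indifferent agents such as $x_2\ge\frac12$ or $y_2\le\frac12$. For each instance I would compute the optimum using Proposition~\ref{thm1}. The optimum is $s=0$ or $s=1$, and its value is $\max\{\min(a,1-a),\dots\}$, determined by $lt(\mathbf{x})$ and $rt(\mathbf{y})$.

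Next, I would take two instances $I$ and $I'$ that differ only in the reports of indifferent agents and of one sensitive agent, chosen so that the optima point in opposite directions: $0$ for $I$ and $1$ for $I'$. The deviation from $I$ to $I'$ is a joint misreport by a coalition consisting of one sensitive agent plus indifferent agents. If the mechanism is within ratio below $2$ on both instances, its outputs must lie near $0$ on $I$ and near $1$ on $I'$. In that case the sensitive agent, whose true location is in $I$, strictly prefers the output on $I'$, or in the reverse direction the agent in $I'$ prefers the output on $I$. This contradicts SGSP. Choosing the parameters so that "near $0$" and "near $1$" are exactly the regions where the ratio stays below $2$ then yields the bound. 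A natural candidate is an instance where the optimum is $\frac14$ and any output outside the good region costs at least $\frac12$, using a sensitive agent at $\frac14$ or $\frac34$ together with an indifferent agent at $\frac12$ whose cost is $\frac12$ regardless.

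The main obstacle is choosing the concrete instances so that three things hold at once. The constant cost of the indifferent agents must not swamp the optimum. The coalition move must be one in which the sensitive agent strictly gains while everyone else is exactly indifferent. And every bridge position $s$, including interior ones, must be ruled out on at least one of the two instances. I would start by tabulating $c_1(a,s)$ and $c_2(b,s)$ as piecewise-linear functions of $s$ and searching over $a,b$ and the positions of the indifferent agents. If a single pair of instances does not suffice, I would extend it to a three-instance chain, using anonymity to identify symmetric instances.
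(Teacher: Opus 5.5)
\textbf{There is a genuine gap: the proposal is not yet a proof, and its central two-instance scheme cannot reach the bound $2$.} The instances are left to a search. The one concrete candidate is inconsistent: an agent at $\frac12$ pays $\frac12$ for every $s$, so the optimum cannot be $\frac14$. More seriously, to refute a ratio $\gamma<2$ you must handle $\gamma$ close to $2$, and there the approximation guarantee cannot pin both outputs. Here is why.

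\begin{itemize}
\item Let the sensitive mover $z$ sit on $L_2$ at $y_z>\frac12$; the case $z\in N_1$ is symmetric. Her cost is maximal at $s=0$, where it equals $y_z$, and minimal at $s=1$.
\item So $s=1$ must fail to be $\gamma$-approximate on the true instance $I$, and $s=0$ must fail on the reported instance $I'$. Otherwise a $\gamma$-approximate mechanism could give her her minimum on $I$ or her maximum on $I'$. Since one endpoint is always optimal, $0$ is optimal on $I$ and $1$ is optimal on $I'$.
\item The condition on $I$ requires an agent $x_0\in N_1$ with $|x_0-1|>\gamma\,\mathrm{OPT}(I)\ge\gamma y_z$. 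She does not move, so $\mathrm{OPT}(I')\ge|x_0-1|>\gamma y_z$.
\item The condition on $I'$ requires a report $y'$ with $\gamma\,\mathrm{OPT}(I')<y'\le 1+\mathrm{OPT}(I')$. Hence $\mathrm{OPT}(I')<1/(\gamma-1)$.
\item For $\gamma\ge\frac{1+\sqrt{5}}{2}$ these bounds give $y_z<\frac{1}{\gamma(\gamma-1)}\le 1$, and hence $2y_z-1<(\gamma-1)\,\mathrm{OPT}(I')$.
\item Every cost is $2$-Lipschitz in $s$, so the point $s=y_z+\frac12$ is $\gamma$-approximate on $I'$. Yet $z$ pays $y_z$ there, exactly her cost at $0$.
\end{itemize}
So a mechanism may output $0$ on $I$ and $y_z+\frac12$ on $I'$ without $z$ gaining. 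No pair of the kind you describe yields a contradiction for $\gamma\in[\frac{1+\sqrt{5}}{2},2)$.

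\textbf{The missing idea is to let SGSP, not approximation, pin the base instance.} The paper uses $x_1=0$, $y_1=1$. Every bridge location has maximum cost exactly $1$ there, so the ratio constraint is vacuous.
\begin{itemize}
\item Suppose $f(0,1)=s\le\frac12$ with $s\neq 0$. Then $y_1$ reports $0$. On the profile $(0,0)$ the optimum $0$ is attained only at $s=0$, so the mechanism must output $0$.
\item Then $x_1$, who does not move, drops from cost $\min\{1,2|s|\}>0$ to $0$, while $y_1$ still pays $1$. This violates SGSP. The case $s>\frac12$, $s\neq1$, is symmetric.
\item The coalition partner here is not an agent of $N_2\cup M_1$. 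It is $y_1\in M_2$, who is indifferent among all $s\le\frac12$ because every agent on $L_2$ pays $|y_j|$ there.
\item Once $f(0,1)\in\{0,1\}$, say $0$, plain SP finishes the argument. At $x_1=\frac12-\epsilon$ the agent can secure cost $\frac12-\epsilon$ by reporting $0$. This forces an output below $\frac12$, where $y_1$ pays $1$ against $\mathrm{OPT}=\frac12+\epsilon$.
\end{itemize}
Your SGSP coalition lever is the right kind of move. What loses the argument is restricting the indifferent partners to globally indifferent agents and asking approximation to pin both instances.
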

\begin{mechanism}\label{mec5}
For any instance $(\mathbf{x},\mathbf{y})$:
\begin{itemize}
    \item If \( n_1 = 0 \), return \( s = 1 \);
    \item Else if \( m_2 = 0 \), return \( s = 1 \);
    \item Else return \( s = 0 \) with probability \( \frac{1}{2} \) and \( s = 1 \) with probability \( \frac{1}{2} \).
\end{itemize}
\end{mechanism}
\begin{theorem}
Mechanism \ref{mec5} is a randomized SGSP 2-approximation mechanism for the maximum cost.
\end{theorem}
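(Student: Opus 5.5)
Throughout, I read the second bullet of Mechanism~\ref{mec5} as returning the bridge at $y_F=0$, i.e.\ $s=0$ when $n_1>0$ and $m_2=0$. Taken literally ($s=1$), the mechanism is unbounded: take $x_1=0$ and all other agents at their own facilities, so the optimum is $0$ but $s=1$ gives cost $1$. So the plan really concerns the corrected mechanism. The whole argument rests on one elementary observation about the two possible outputs $s\in\{0,1\}$, which I would state first.

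\textbf{Cost facts.} With $x_F=1$ and $y_F=0$ (so $c_2(y,1)=\min\{|y|,|y-1|\}$):
\begin{itemize}
\item an agent in $N_2$ pays $|x_i-1|$ under both outputs;
\item an agent in $M_1$ pays $|y_j|$ under both outputs;
\item an agent in $N_1$ pays $|x_i|$ at $s=0$ and $|1-x_i|$ at $s=1$, and $|x_i|<|1-x_i|$ since $x_i<\frac12$;
\item an agent in $M_2$ pays $|y_j-1|$ at $s=1$ and $|y_j|$ at $s=0$, and $|y_j-1|<|y_j|$ since $y_j>\frac12$.
\end{itemize}
So only $N_1$ and $M_2$ agents care about the outcome. $N_1$ agents strictly prefer a larger probability $p_0$ of $s=0$, and $M_2$ agents strictly prefer a smaller one. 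Each agent's cost also changes by at most $1$ between the two outputs.

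\textbf{SGSP.} The mechanism outputs a lottery determined by $p_0\in\{0,\frac12,1\}$: $p_0=0$ if $n_1=0$, $p_0=1$ if $n_1>0$ and $m_2=0$, and $p_0=\frac12$ otherwise. Suppose a coalition deviates so that some member strictly gains. By the cost facts, that member lies in $N_1$ or $M_2$ and $p_0$ changes.
\begin{itemize}
\item If the gainer is in $N_1$, then truthfully $n_1>0$, so $p_0\in\{\frac12,1\}$, and $p_0$ must rise to $1$. This forces $p_0=\frac12$ truthfully (so $m_2>0$) and reported $m_2'=0$. Hence every $M_2$ agent misreports, is in the coalition, and strictly loses.
\item If the gainer is in $M_2$, then truthfully $m_2>0$, so $p_0\in\{0,\frac12\}$, and $p_0$ must fall to $0$. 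This requires truthful $n_1>0$ and reported $n_1'=0$. Hence every $N_1$ agent is in the coalition and strictly loses.
\end{itemize}
Either way the deviation violates SGSP.

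\textbf{Approximation.} By Proposition~\ref{thm1}, the optimum is attained at $s^*\in\{0,1\}$, so $\mathrm{OPT}=\min\{MC(0),MC(1)\}$, writing $MC(s)$ for $MC((\mathbf{x},\mathbf{y}),s)$.
\begin{itemize}
\item If $n_1=0$, moving $s$ from $1$ to $0$ changes no $N$ agent's cost and weakly raises every $M_2$ agent's cost, so $s=1$ is optimal.
\item If $n_1>0$ and $m_2=0$, symmetrically $s=0$ is optimal.
\item Otherwise both $N_1$ and $M_2$ are nonempty. At $s=0$ some $M_2$ agent pays $|y_j|>\frac12$, and at $s=1$ some $N_1$ agent pays $|1-x_i|>\frac12$, so $\mathrm{OPT}>\frac12$. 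Since each agent's cost differs by at most $1$ between the two outputs, $\max\{MC(0),MC(1)\}\le \mathrm{OPT}+1$.
\end{itemize}
In the last case the expected maximum cost is therefore
\[
\tfrac12\bigl(MC(0)+MC(1)\bigr)\le \tfrac12\bigl(\mathrm{OPT}+\mathrm{OPT}+1\bigr)=\mathrm{OPT}+\tfrac12<2\,\mathrm{OPT}.
\]

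The step I expect to be the real obstacle is the last case: one must notice that $MC(0)$ can be far from $2\,\mathrm{OPT}$ in ratio terms (e.g.\ with $y_j$ very large), so a per-output bound fails. What works instead is the additive bound of at most $1$ between the two outputs, combined with $\mathrm{OPT}>\frac12$.
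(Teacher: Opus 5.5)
Your proof is correct. Your reading of the second bullet as $s=0$ is the intended one. The paper's own proof asserts that the mechanism is optimal whenever $n_1=0$ or $m_2=0$, and that fails for the literal $s=1$, as your example shows. Taken literally, the mechanism is not even SP: with $n_1>0$, a lone $M_2$ agent could report $y'\le\frac12$ to force $s=1$.

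Your SGSP argument is the paper's argument made precise. The paper only notes that forcing $s=1$ hurts $N_1$ and forcing $s=0$ hurts $M_2$. You instead argue from the identity of the strict gainer, and observe that the entire harmed side must lie in the coalition.

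For the ratio you take a genuinely different route.
\begin{itemize}
\item The paper fixes the optimal endpoint and imports, from the analysis of Mechanism~\ref{mec4}, that the other endpoint costs at most $3\,\mathrm{OPT}$. Averaging gives $\frac12(3\,\mathrm{OPT}+\mathrm{OPT})=2\,\mathrm{OPT}$.
\item You use the additive fact $|MC(0)-MC(1)|\le 1$ together with $\mathrm{OPT}>\frac12$, and get $\mathrm{ALG}\le\mathrm{OPT}+\frac12$.
\end{itemize}
Your route has three advantages. It is self-contained: it bypasses the inequality chain in the Mechanism~\ref{mec4} analysis, which at one point replaces $\max_{i\in N_1}|x_i|$ by $\max_{i\in N_1}x_i$, a step that is invalid when some $x_i<0$. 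It gives the sharper ratio $1+\frac{1}{2\,\mathrm{OPT}}<2$. And it recovers the factor $3$ for Mechanism~\ref{mec4} in one line, since $\mathrm{OPT}+1<3\,\mathrm{OPT}$. The paper's route gains brevity by reusing an earlier theorem.

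One correction to your closing remark: a per-output bound does not fail; only a per-output factor of $2$ does. The non-optimal endpoint is always within factor $3$ of $\mathrm{OPT}$, as your own additive bound shows, and that factor is exactly what the paper averages. Also, large $y_j$ is not the bad case, since there $MC(0)/MC(1)\to 1$. The bad case is $y_j=\frac{3}{2}$, $x_i=\frac{1}{2}-\epsilon$, where the ratio between the two endpoints approaches $3$.
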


\section{Social Cost}
For the social objective of minimizing the social cost, we first provide the form of the optimal solution and the corresponding optimal value for any instance $(\mathbf{x}, \mathbf{y})$. Then, we derive upper and lower bounds on the achievable approximation ratio for (strong) group-strategyproof mechanisms.
\begin{proposition}
\label{t1}
For the social cost, at least one of the endpoints $s\in \{0,1\}$ is an optimal bridge location.
\end{proposition}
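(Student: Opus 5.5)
Since the social cost is a sum of the agents' costs, the plan is to study each agent's cost as a function of the bridge location $s$ and show that the total is minimized at $s=0$ or $s=1$. With $x_F=1$ and $y_F=0$, the costs are
\[
c_1(x_i,s)=\min\{|x_i-1|,\;|x_i-s|+|s|\}, \qquad c_2(y_j,s)=\min\{|y_j|,\;|y_j-s|+|s-1|\}.
\]
The first step is to show that moving the bridge outside $[0,1]$ never helps. For $s<0$, replacing $s$ by $0$ gives $|x_i-0|+0\le |x_i-s|+|s|$ by the triangle inequality, so no $c_1$ increases. Likewise $|y_j-0|+1\le |y_j-s|+|s-1|$ (take the path $y_j\to s\to 0\to 1$), so no $c_2$ increases. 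The case $s>1$ is symmetric, so we may restrict to $s\in[0,1]$.

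The second step handles $s\in[0,1]$. There $|s|+|x_i-s|$ equals $x_i$ when $x_i\ge s$ and equals $2s-x_i$ when $x_i<s$; in both cases it is $\max\{x_i,\,2s-x_i\}$, which is convex and nondecreasing in $s$. Agents in $N_2$ have $|x_i-1|\le x_i$, so they never use the bridge and their cost is constant. For an agent $i\in N_1$, the cost is $\min\{1-x_i,\max\{x_i,2s-x_i\}\}$, which is a nondecreasing function of $s$ on $[0,1]$. Symmetrically, each $c_2(y_j,s)$ is nonincreasing in $s$ on $[0,1]$. So the social cost on $[0,1]$ is the sum of a nondecreasing part and a nonincreasing part, and monotonicity alone does not force the minimum to an endpoint.

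The main obstacle is therefore to show that the social cost restricted to $[0,1]$ is concave, or at least has no interior strict minimum. Each bridge term has the form $\min\{\text{const},\,\text{convex}\}$, which is not concave in general, so I would compute the terms explicitly. For $i\in N_1$, write $g_i(s)=\min\{1-x_i,\max\{x_i,2s-x_i\}\}$. On $[0,1]$:
\begin{itemize}
\item if $x_i\ge s$, then $g_i(s)=x_i$;
\item for $s$ between $x_i$ and $1/2$, $g_i(s)=2s-x_i$, which rises with slope $2$;
\item for $s\ge 1/2$, $g_i(s)=1-x_i$, which is constant.
\end{itemize}
This function is flat, then increasing, then flat, so it is not concave. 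Concavity of the sum therefore fails, and a direct convexity argument will not work.

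Instead I would compare every interior $s\in(0,1)$ with both endpoints by showing
\[
SC(s)\ \ge\ \min\{SC(0),\,SC(1)\}.
\]
By the second step, moving the bridge from $s$ to $0$ can only lower the $N$-agents' costs. The loss is confined to the $M_2$ agents, and the analogous statement holds for a move to $1$. Writing $A(s)=\sum_{i\in N}[c_1(x_i,s)-c_1(x_i,0)]\ge 0$ and $B(s)=\sum_{j\in M}[c_2(y_j,s)-c_2(y_j,1)]\ge 0$, the claim becomes
\[
\min\{SC(0)-SC(s),\;SC(1)-SC(s)\}\le 0.
\]
I would try a weighted-average argument: find $\lambda\in[0,1]$ with $\lambda\,SC(0)+(1-\lambda)\,SC(1)\le SC(s)$, which would show that one of the endpoints is at least as good as $s$. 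The natural first choice is $\lambda=1-s$, since each per-agent cost is piecewise linear with breakpoints at the agents' own locations and at $1/2$. Verifying this requires checking, for each agent type in $N_1$ and $M_2$, that $(1-s)\,c(0)+s\,c(1)\le c(s)$.

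A pure per-agent check may fail, for example when $s=1/2$ and $x_i$ is near $0$. In that case I would pair $N$-agents with $M$-agents, or else use the explicit piecewise-linear form: show that $SC$ on $[0,1/2]$ and on $[1/2,1]$ has slopes that are monotone in a way that places every minimizer at an endpoint or at $1/2$. I would then handle $s=1/2$ separately by comparing it directly with the better of $0$ and $1$. This case analysis is the step I expect to be hardest.
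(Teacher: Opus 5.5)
Your Steps 1 and 2 are correct, but the proposal stops where the proof is needed. The interior inequality $SC(s)\ge\min\{SC(0),SC(1)\}$ for $s\in(0,1)$ is only a plan, and its main ingredient fails. The weighted average with $\lambda=1-s$ is false even for the total cost. Take one agent on $L_1$ at $x_1=0.4$ and one agent on $L_2$ at $y_1=0$. Then $SC(0)=0.4$, $SC(1)=0.6$ and $SC(0.3)=0.4$, while $0.7\cdot 0.4+0.3\cdot 0.6=0.46$. Incidentally, the per-agent inequality does hold at $s=1/2$; it fails on the initial flat piece $0<s\le x_i$ with $x_i\in(0,1/2)$. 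The fallbacks you list (pairing agents across the lines, or a slope analysis on each half) are not carried out, so the interior case remains unproved.

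The missing idea is to split the line at $1/2$; then no point needs to be compared with both endpoints. For $s\le 1/2$ you have $|s-1|\ge |s|$, so $|y_j-s|+|s-1|\ge |y_j-s|+|s|\ge |y_j|$. Therefore $c_2(y_j,s)=|y_j|=c_2(y_j,0)$ for \emph{every} agent on $L_2$: the loss you attribute to the $M_2$ agents when moving from $s$ to $0$ is actually zero. Combined with $c_1(x_i,s)\ge c_1(x_i,0)$, which you already have from the triangle inequality, this gives $SC(s)\ge SC(0)$ for all $s\le 1/2$. Symmetrically, for $s\ge 1/2$ you get $c_1(x_i,s)=|x_i-1|=c_1(x_i,1)$ and $c_2(y_j,s)\ge c_2(y_j,1)$, hence $SC(s)\ge SC(1)$.

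This agent-by-agent domination on the two half-lines is the paper's proof. It covers all real $s$ at once, so your Step 1 becomes unnecessary. You already had the key fact in your third bullet: each $g_i$ is constant for $s\ge 1/2$. Together with its mirror image for $M_2$ on $s\le 1/2$, this shows your nondecreasing and nonincreasing parts are each constant on opposite halves. So monotonicity alone \emph{does} force the minimum on each half to the corresponding endpoint, and you need neither concavity nor a separate treatment of $s=1/2$.
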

\subsection{GSP Mechanism for Social Cost}
In our mechanism design, the social cost objective is intrinsically tied to the population distribution. We therefore construct a deterministic group-strategyproof mechanism based on the cardinalities of $N_1$ and $M_2$.
\begin{mechanism}\label{mec2}
For any instance $(\mathbf{x},\mathbf{y})$, if $n_1\geq m_{2}$, output $s=0$; otherwise output $s=1$.
\end{mechanism}
\begin{theorem}\label{thm4}
Mechanism \ref{mec2} is a deterministic GSP $3$-approximation mechanism for the social cost.
\end{theorem}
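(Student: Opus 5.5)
Since the mechanism only ever outputs $s\in\{0,1\}$, the plan is to first record what each endpoint does for each group, and then handle GSP and the approximation ratio separately.

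\textbf{Cost structure.} With $x_F=1$ and $y_F=0$, placing the bridge at $s=1$ gives $c_1(x_i,1)=\min\{|x_i-1|,|x_i-1|+1\}=|x_i-1|$. So $s=1$ never helps agents in $N$, while it gives $c_2(y_j,1)=\min\{|y_j|,|y_j-1|\}$. Symmetrically, $s=0$ gives $c_1(x_i,0)=\min\{|x_i-1|,|x_i|\}$ and $c_2(y_j,0)=|y_j|$. Hence:
\begin{itemize}
\item agents in $N_2$ and $M_1$ have the same cost under both outputs;
\item each $i\in N_1$ strictly prefers $0$ to $1$, with gain $b(x_i)=|x_i-1|-|x_i|\in(0,1]$;
\item each $j\in M_2$ strictly prefers $1$ to $0$, with gain $b(y_j)=|y_j|-|y_j-1|\in(0,1]$.
\end{itemize}

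\textbf{GSP.} Suppose the true output is $s=0$, so $n_1\ge m_2$. A coalition can only change the outcome by flipping it to $1$, and the only agents who strictly gain from that flip are in $M_2$. To flip the output, the coalition must either lower the reported count $n_1$ or raise the reported count $m_2$.
\begin{itemize}
\item Lowering $n_1$ requires some member of $N_1$ to misreport, and that agent is strictly hurt by $s=1$.
\item Raising $m_2$ requires some agent outside $M_2$ to report $y'>\tfrac12$. Agents of $M_2$ already count, so this must be an $M_1$ agent, who is indifferent and does not strictly gain.
\end{itemize}
In either case some coalition member fails to strictly gain, so the coalition cannot succeed. The case $s=1$ (i.e.\ $n_1<m_2$) is symmetric.

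\textbf{Approximation.} By Proposition~\ref{t1}, the optimum lies in $\{0,1\}$, so it suffices to compare $SC(0)$ and $SC(1)$. Let $C$ be the total cost when every agent goes to her own-line facility. Then
\[
SC(0)=C-\sum_{i\in N_1}b(x_i),\qquad SC(1)=C-\sum_{j\in M_2}b(y_j).
\]
Assume $n_1\ge m_2$, so the mechanism outputs $0$, and that the optimum is $1$ (otherwise the output is already optimal). Then
\[
SC(0)-SC(1)=\sum_{j\in M_2}b(y_j)-\sum_{i\in N_1}b(x_i)\le m_2\le n_1 .
\]
Under $s=1$, each $i\in N_1$ pays $|x_i-1|>\tfrac12$, so $SC(1)\ge n_1/2$. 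This gives $SC(0)\le SC(1)+n_1\le 3\,SC(1)$. The case $n_1<m_2$ is symmetric, using that $M_2$ agents pay $|y_j|>\tfrac12$ under $s=0$.

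\textbf{Main obstacle.} The only delicate point is the GSP argument. One must check that indifferent agents (those in $N_2$ and $M_1$) cannot be counted as strictly gaining. One must also check that every way of changing the counts $n_1$ or $m_2$ requires either such an indifferent agent or a harmed agent to lie.
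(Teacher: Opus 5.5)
Your proof is correct. The GSP part follows the same logic as the paper's. Agents in $N_2\cup M_1$ are indifferent between the only two possible outputs, so when the output is $0$, a coalition in which everyone strictly gains must consist of $M_2$ agents. You are more explicit than the paper about why such a coalition cannot flip the outcome: it can only lower the reported $m_2$, so a flip needs either an $N_1$ liar (who is hurt) or an $M_1$ liar (who is indifferent). The paper leaves this step implicit when it asserts that some $i\in N_1\cap S$ fails to benefit.

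For the ratio, the paper takes a more computational route. It splits $N_1$ by the sign of $x_i$ and $M_2$ by whether $y_j\le 1$, and expands both social costs over these four subgroups. It then cancels the common $N_2\cup M_1$ terms and common slack terms from numerator and denominator, and bounds the resulting ratio by $(n_1/2+m_2)/(n_1/2)\le 3$.

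Your baseline-minus-gains decomposition, $SC(0)=C-\sum_{N_1}b(x_i)$ and $SC(1)=C-\sum_{M_2}b(y_j)$, replaces this with the additive estimate $SC(0)-SC(1)\le m_2\le n_1\le 2\,SC(1)$. It rests on the same three facts: each gain is at most $1$, each $N_1$ agent pays more than $\tfrac12$ at $s=1$, and $m_2\le n_1$. It avoids the sub-partition and the ratio manipulation, and it makes clear where the factor $3$ comes from.

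The one thing the paper adds is a tightness instance: $x_1=\tfrac12-\epsilon$, $y_1=1$, where the ratio tends to $3$. Your decomposition shows this immediately, since the gap is $1-2\epsilon$ and $SC(1)=\tfrac12+\epsilon$. It is not needed for the upper bound. You would want it if ``3-approximation'' is read, under the paper's definition of the ratio as a supremum, as the exact ratio.
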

\begin{proof}
Let $f$ denote Mechanism \ref{mec2}. For any instance $(\mathbf{x},\mathbf{y})$ and any coalition containing agent $i \in N_2$ or $j \in M_1$, it is evident that each agent $i\in N_{2}$ will choose the facility $x_{F}$, while each agent $j\in M_{1}$ will choose the facility $y_{F}$. Now consider a coalition $S\subseteq N_{1}\cup M_{2}$ with location profile $(\mathbf{x},\mathbf{y})_S$ and misreported profile $(\mathbf{x},\mathbf{y})'_S$. Without loss of generality, assume $f(\mathbf{x},\mathbf{y}) = 0$, implying $N_1 \neq \emptyset$.

\textbf{Case 1.} For each agent $i \in N_1 \cap S$, if $f((\mathbf{x},\mathbf{y})'_S,(\mathbf{x},\mathbf{y})_{-S}) \allowbreak
= 0$, then the cost of agent $i$ remains unchanged. If $f((\mathbf{x},\mathbf{y})'_S,(\mathbf{x},\mathbf{y})_{-S}) = 1$, then
$$c_{1}(x_{i},f(\mathbf{x},\mathbf{y})')=1-x_{i}\geq |x_{i}|=c_1(x_{i},f(\mathbf{x},\mathbf{y})).$$

\textbf{Case 2.} For each agent $j \in N_2 \cap S$, if $f((\mathbf{x},\mathbf{y})'_S,\mathbf{x},\mathbf{y})_{-S}) \allowbreak 
= 0$, the cost of agent $j$ is unchanged. If $f((\mathbf{x},\mathbf{y})'_S(\mathbf{x},\mathbf{y})_{-S})\allowbreak
= 1$, there exists $i \in N_1 \cap S$ who cannot benefit from misreporting her locations. Hence, Mechanism \ref{mec2} is group-strategyproof. We now analyze the approximation ratio of Mechanism \ref{mec2}.

For any instance \((\mathbf{x}, \mathbf{y})\), let ALG and OPT denote the social costs of the mechanism and the optimal solution, respectively. If \(n_1 \cdot m_{2} = 0\), the mechanism clearly outputs the optimal solution. Without loss of generality, assume that \(n_1 \geq m_{2} > 0\) and \(f(\mathbf{x}, \mathbf{y}) = 0\). From Theorem \ref{t1}, we have
\[
s^{*}(\mathbf{x}, \mathbf{y}) \in \{0, 1\}.
\]
If \(s^{*}(\mathbf{x},\mathbf{y}) = 1\),  we have 
\begin{align*}
&\displaystyle\sum_{i \in N_{1}} |1 - x_{i}| + \displaystyle\sum_{j \in M_{2}} |1 - y_{j}|\\
=&  \displaystyle\sum_{i \in N_{11}} |x_{i}| + n_{11} + n_{12} - \displaystyle\sum_{i \in N_{12}} x_{i}\\
&+ m_{21} - \displaystyle\sum_{j \in M_{21}} y_{j} + \displaystyle\sum_{j \in M_{22}} y_{j} - m_{22}
\\
\geq&\displaystyle\sum_{i \in N_{11}} |x_{i}| + n_{11} + \frac{ n_{12}}{2} + \displaystyle\sum_{j \in M_{22}} y_{j} - m_{22}.
\end{align*}
Then, 
\begin{align*}
    \frac{ALG}{OPT} &\leq \frac{\displaystyle\sum_{i \in N_{11}} |x_{i}| + \displaystyle\sum_{i \in N_{12}} x_{i} + \displaystyle\sum_{j \in M_{21}} y_{j} + \displaystyle\sum_{j \in N_{22}} y_{j}}{\displaystyle\sum_{i \in N_{1}} |1 - x_{i}| + \displaystyle\sum_{j \in M_{2}} |1 - y_{j}|} \\
  &\leq \frac{\displaystyle\sum_{i \in N_{11}} |x_{i}| + \frac{ n_{12}}{2} + m_{21} + m_{22} + \displaystyle\sum_{j \in M_{22}} y_{j} - m_{22}}{\displaystyle\sum_{i \in N_{11}} |x_{i}| + n_{11} + \frac{ n_{12}}{2} + \displaystyle\sum_{j \in M_{22}} y_{j} - m_{22}} \\
    &\leq \frac{\frac{ n_{12}}{2} + m_{21} + m_{22}}{ n_{11} + \frac{ n_{12}}{2}} \leq \frac{\frac{ n_{12}}{2} + m_{2}}{n_1 - \frac{ n_{12}}{2}} \leq \frac{\frac{n_1}{2} + m_{2}}{\frac{n_1}{2}} \leq 3.
\end{align*}
If \( m_{2} > n_1 > 0\), the proof is similar. The upper bound is shown to be tight by Example \ref{eg3}.
\begin{example}\label{eg3}
Consider the instance \((\textbf{x},\textbf{y})=(x_{1}, y_{1})\), where \(x_{1} = \frac{1}{2}-\epsilon\) and \(y_{1} = 1\) where $\epsilon\rightarrow0^+$. The optimal social cost is \(OPT= \frac{1}{2}+\epsilon,\) the social cost of Mechanism \ref{mec2} is \( ALG=\frac{3}{2}-\epsilon\). So the ratio approaches 3 as $\epsilon\rightarrow0+$.
\end{example}
This completes the proof of the approximation ratio.
\end{proof}
\begin{proposition}
If a deterministic SP mechanism only considers \( n_1, n_2, m_1, m_2 \) instead of the complete agents' profile, then no such mechanism is better than 3-approximation.
\end{proposition}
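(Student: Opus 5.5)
The plan is to show that a mechanism depending only on $(n_1,n_2,m_1,m_2)$ must use one bridge location for a whole family of instances that are very different geometrically, and then to choose two instances with the same counts on which no single location can be good. Strategyproofness will not actually be used. The lower bound will already follow from the restriction to count information. The key point is that the counts only record on which side of $\frac12$ each agent lies, not how far away she is.

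Concretely, I would fix the counts $n_1=1$, $m_2=1$ and arbitrary $n_2,m_1\ge 0$. I place all agents of $N_2$ at $x_F=1$ and all agents of $M_1$ at $y_F=0$; these agents have cost $0$ for every bridge location, so they affect neither ALG nor OPT. Let $s=g(1,n_2,m_1,1)$ be the common output on all such instances, and recall $x_F=1$, $y_F=0$. I split on whether $s\le \frac12$ or $s>\frac12$.

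\emph{Case $s\le\frac12$.} Take $x_1=\frac12-\epsilon$ and $y_1=1$.
\begin{itemize}
\item By Proposition~\ref{t1} it suffices to compare the endpoints. The location $1$ gives cost $\frac12+\epsilon$ for $x_1$ and $0$ for $y_1$, so $OPT\le \frac12+\epsilon$.
\item For the mechanism, $y_1$ pays $\min\{1,\,2(1-s)\}=1$ because $s\le\frac12$.
\item Agent $x_1$ pays $\min\{1-x_1,\,|x_1-s|+|s|\}\ge \min\{1-x_1,|x_1|\}=\frac12-\epsilon$, by the triangle inequality.
\end{itemize}
Hence $ALG\ge \frac32-\epsilon$.

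\emph{Case $s>\frac12$.} Symmetrically, take $x_1=0$ and $y_1=\frac12+\epsilon$.
\begin{itemize}
\item The location $0$ gives $OPT\le\frac12+\epsilon$.
\item Agent $x_1$ pays $\min\{1,\,2|s|\}=1$.
\item Agent $y_1$ pays at least $\min\{y_1,|y_1-1|\}=\frac12-\epsilon$.
\end{itemize}

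In both cases the ratio is at least $\frac{3/2-\epsilon}{1/2+\epsilon}\to 3$ as $\epsilon\to0^+$. Both instances have the prescribed counts ($\frac12-\epsilon<\frac12$, $1>\frac12$, $0<\frac12$, $\frac12+\epsilon>\frac12$), so the mechanism is indeed forced to output the same $s$ on them.

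The only delicate step is making sure the two test instances really share the same count vector. This needs the strict/non-strict conventions at $\frac12$ in the definitions of $N_1$ and $M_2$, which is why I keep $\epsilon>0$. It also needs the padding agents at the facilities, so that the argument works for every $n,m$. The cost bounds themselves are one-line triangle-inequality estimates. Combined with Theorem~\ref{thm4}, this shows the bound $3$ is tight for count-based mechanisms.
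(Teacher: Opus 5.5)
Your proof is correct and follows essentially the same argument as the paper: the same count vector $n_1=m_2=1$, the same two instances $(x_1,y_1)=(\frac12-\epsilon,1)$ and $(0,\frac12+\epsilon)$, and the same split on $s\le\frac12$ versus $s>\frac12$ (the paper covers the second case by symmetry), with strategyproofness likewise unused. Your explicit triangle-inequality bounds, the exact treatment of $\epsilon$, and the padding agents placed at the facilities (which extend the bound to every $n,m\ge 1$) make the argument more careful. These are refinements of the same proof, not a different route.
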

\begin{theorem}
    Any deterministic SP mechanism has an approximation ratio of at least $2$ for the social cost.
\end{theorem}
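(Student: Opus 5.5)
The plan is to build a small family of instances with one agent on each line, $(\mathbf{x},\mathbf{y})=(x_1,y_1)$, and to use strategyproofness to pin down the output on one instance from the output on a neighbouring one. We then show that any output either gives ratio at least $2$ directly or forces ratio at least $2$ on a neighbouring instance. By Proposition~\ref{t1}, $OPT$ of every instance in the family is computed just by comparing $s=0$ and $s=1$.

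\textbf{Step 1: base instance and symmetry.} Take the base instance $(x_1,y_1)=(0,1)$. Here $c_1(0,s)=\min\{1,2|s|\}$ and $c_2(1,s)=\min\{1,2|1-s|\}$, so $OPT=1$, attained at $s\in\{0,1\}$. The map $s\mapsto 1-s$ together with swapping the two lines (and relabelling $x\mapsto 1-x$, $y\mapsto 1-y$) preserves the model. So, replacing $f$ by its mirror image if needed, I will assume $f(0,1)=s_0\le \tfrac12$. The mirror of $f$ is again SP and has the same ratio. With $s_0\le\tfrac12$ the $M$-agent pays her full direct cost $1$, while the $N$-agent pays $\min\{1,2|s_0|\}$.

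\textbf{Step 2: move the $M$-agent to a point where the outcome $s_0$ is bad.} Consider $y'$ slightly above $\tfrac12$, say $y'=\tfrac12+\delta$, and the instance $(0,y')$. Here $s=1$ costs the $M$-agent $1-y'\approx\tfrac12$ and the $N$-agent $1$, whereas $s=0$ costs $0+y'$. I would rather use an $N$-agent at $x$ close to $\tfrac12$ on the other side as well, so that both endpoints are equally good and any $s$ strictly inside loses.

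The key SP inequality runs in both directions between $(0,1)$ and $(0,y')$. Truthful $y'$ must not gain by reporting $1$, which gives $c_2(y',f(0,y'))\le c_2(y',s_0)$. Truthful $1$ must not gain by reporting $y'$, which gives $c_2(1,s_0)\le c_2(1,f(0,y'))$. Because $c_2(1,\cdot)$ is constant and equal to $1$ on $(-\infty,\tfrac12]$, the second inequality is vacuous. So the contradiction has to come from the first inequality. That inequality forces $f(0,y')$ to be no better for $y'$ than $s_0$, i.e.\ roughly $f(0,y')\le\tfrac12$. From there I would compute that on $(0,y')$ (or a variant with a second $M$-agent stacked at $y'$ to inflate her weight) the ratio tends to $2$.

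\textbf{Step 3 (main obstacle).} The hard part is choosing the neighbouring instance so that the SP inequality actually binds. Agents' costs are flat in $s$ over long ranges, since an agent's cost is capped by her direct distance. Many deviations are therefore costless, and naive one-agent moves give no constraint, as the vacuous inequality in Step~2 shows.

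My first attempt is a chain of instances with agents placed just on the "bridge-using" side of $\tfrac12$, namely $x=\tfrac12-\epsilon$ and $y=\tfrac12+\epsilon$. At these locations the cost functions are strictly monotone on the relevant range of $s$. I would move agents one at a time and apply SP at each step to propagate the constraint $s\le\tfrac12$ (or $s\ge\tfrac12$). The target is an instance where $OPT$ is about half of the cost of any output the chain still allows, in the spirit of Example~\ref{eg3}. That example gives ratio $3$ for a specific mechanism, so the general SP bound of $2$ should come from a milder version of the same configuration. If the one-agent chain is not enough, I would use two agents on one line, so that one agent's report can shift the outcome while the other agent's cost determines $OPT$.
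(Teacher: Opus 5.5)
\textbf{Verdict: there is a genuine gap.} Your proposal does not yet contain a lower-bound argument, and its only concrete step is wrong.

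\textbf{Step~2 reads strategyproofness backwards.} The inequality $c_2(y',f(0,y'))\le c_2(y',s_0)$ says the truthful outcome is \emph{at least} as good for the agent at $y'$ as $s_0$. Since $s_0\le\frac{1}{2}$ already charges her the full direct cost $y'$, this is the vacuous inequality. The one you discard, $1=c_2(1,s_0)\le c_2(1,f(0,y'))$, is not vacuous: it forces $f(0,y')\notin(\frac{1}{2},\frac{3}{2})$. But it is useless too. In the instance $(0,y')$ the endpoint $s=0$ is optimal with social cost $y'$, so confining the output to $(-\infty,\frac{1}{2}]$ yields no bad ratio there. Moving the $M$-agent toward $\frac{1}{2}$ shrinks what she gains from $s=1$, so it can only make outputs below $\frac{1}{2}$ look better.

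Your symmetric target $x=\frac{1}{2}-\epsilon$, $y=\frac{1}{2}+\epsilon$ fails for the same kind of reason: there every bridge location has social cost between $1$ and $1+2\epsilon$. Step~3 describes what has to be found rather than finding it.

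\textbf{The missing idea.} Move the $N$-agent from $0$ toward $\frac{1}{2}$, and use the SP constraint of the \emph{moved} agent misreporting back to $0$. The target is the instance $(\frac{1}{2}-\epsilon,1)$, where $OPT=\frac{1}{2}+\epsilon$ is attained at $s=1$.

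If $s_0\in[0,\frac{1}{2})$, take $x'=\frac{1}{2}-\epsilon$ with $\epsilon<\frac{1}{2}-s_0$. For every $s$ we have $c_1(x',s)\ge x'$, with equality if and only if $s\in[0,x']$. Since $s_0\in[0,x']$, SP forces $f(x',1)\in[0,x']$. The agent at $1$ then pays $1$, so the social cost is $\frac{3}{2}-\epsilon$ and the ratio tends to $3$.

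The other values of $s_0$ need the case split the paper makes.
\begin{itemize}
\item If $s_0=\frac{1}{2}$ or $s_0\le-\frac{1}{2}$, the instance $(0,1)$ itself has social cost $2=2\cdot OPT$.
\item If $s_0\in(-\frac{1}{2},0)$, the one-step move fails. For $x\ge s_0$, $c_1(x,s_0)=x-2s_0$ beats the direct cost $1-x$ only when $x<\frac{1}{2}+s_0$.
\end{itemize}
In the last case, first move to $x''=\frac{1}{2}+s_0-\epsilon$; SP of the moved agent misreporting $0$ forces $s''<\frac{1}{2}$. Next, SP of the original agent at $0$, who pays $2|s_0|$, excludes $s''\in(s_0,-s_0)$. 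This leaves two sub-cases.
\begin{itemize}
\item If $s''\le s_0$, the ratio is at least $(\frac{3}{2}-s_0-\epsilon)/(\frac{1}{2}-s_0+\epsilon)$, which exceeds $2$ for small $\epsilon$.
\item If $s''\in[-s_0,\frac{1}{2})$, one further move to $\frac{1}{2}-\delta$ gives a ratio tending to $3$.
\end{itemize}
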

\begin{proof}
Suppose there is an SP mechanism with approximation ratio better than 2. Consider an instance $(\mathbf{x,\mathbf{y}})=(x_1,y_1)$ with $x_1 = 0$ and $y_1 = 1$. An optimal solution is $s^* = 0$ and the optimal social cost is $1$. Due to symmetry, we have three cases, assuming w.l.o.g. that the solution returned by the mechanism is $s \leq \frac{1}{2}$.

\textbf{Case 1.} $s = \frac{1}{2}$ or $s \leq -\frac{1}{2}$. The social cost induced by the solution $s$ is $2$ and the optimal social cost is $1$, indicating an approximation ratio of at least $2$.

\textbf{Case 2.} $0 \leq s < \frac{1}{2}$. We move the agent location $x_1 = 0$ to $x'_1 = \frac{1}{2} - \epsilon$ for some sufficiently small positive number $\epsilon < \frac{1}{2} - s$. Let $s'$ be the solution returned by the mechanism for the instance with $x'_1 = \frac{1}{2} - \epsilon$ and $y_1 = 1$. Note that $s'$ cannot be $s' \geq \frac{1}{2}$, as otherwise the agent at $x'_1$ can misreport from $\frac{1}{2} - \epsilon$ to $0$, which will decrease the cost. Hence, $s' < \frac{1}{2}$, and the social cost of the mechanism is at least $1 + \frac{1}{2} - \epsilon$. However, the optimal social cost is $\frac{1}{2} + \epsilon$ with an optimal bridge location $1$, indicating an approximation ratio approaching $3$.

\textbf{Case 3.} $-\frac{1}{2} < s < 0$. We move the agent location $x_1 = 0$ to $x''_1 = \frac{1}{2} + s - \epsilon$ for some small positive number $\epsilon$. Consider the instance with $x''_1 = \frac{1}{2} + s - \epsilon$ and $y_1 = 1$. The optimal social cost is $\frac{1}{2} - s + \epsilon$ with an optimal bridge location $1$, and let $s''$ be the solution returned by the mechanism. Note that $s''$ cannot be $s'' \geq \frac{1}{2}$, as otherwise the agent at $x''_1$ can misreport from $\frac{1}{2} + s - \epsilon$ to $0$, which will decrease this agent's cost from $\frac{1}{2} - s + \epsilon$ to $\frac{1}{2} - s - \epsilon$ (when $-s \geq \frac{1}{2} + s - \epsilon$) or $\frac{1}{2} - s - \epsilon$ (when $-s < \frac{1}{2} + s - \epsilon$). Hence, we have $s'' < \frac{1}{2}$. Also, $s''$ cannot lie in the interval $(s, -s)$, as otherwise the agent at $x_1 = 0$ in the original instance can misreport from $0$ to $\frac{1}{2} + s - \epsilon$, which will decrease the cost from $2|s|$ to $2|s''|$. Therefore, we have either $s'' \geq -s$ or $s'' \leq s$.

If $s'' \leq s$, then the social cost of the mechanism is at least $1 + \frac{1}{2} - s - \epsilon$, indicating an approximation ratio of at least $\frac{1 + \frac{1}{2} - s - \epsilon}{\frac{1}{2} - s + \epsilon} \geq \frac{\frac{3}{2} - s - \epsilon}{\frac{1}{2} - s + \epsilon} \to 2$ when $\epsilon$ approaches $0$. Therefore, we only need to consider the case $-s \leq s'' < \frac{1}{2}$.

Then we move the agent location $x''_1 = \frac{1}{2} + s - \epsilon$ to $x'''_1 = \frac{1}{2} - \delta$ for some sufficiently small positive number $\delta < \min\{\epsilon, \frac{1}{2} - s''\}$. For the instance with $x'''_1 = \frac{1}{2} - \delta$ and $y_1 = 1$, the optimal solution is $1$ and the optimal social cost is $\frac{1}{2} + \delta$. Let $s'''$ be the output of the mechanism. Note that $s'''$ cannot satisfy $s''' > \frac{1}{2}$, as otherwise this agent can misreport from $x'''_1$ to $x''_1$ and reduce the cost from $\frac{1}{2} + \delta$ to $\frac{1}{2} - \delta$. Therefore, we have $s''' \leq \frac{1}{2}$, and the social cost is at least $1 + \frac{1}{2} - \delta$, indicating an approximation ratio approaching $3$.
\end{proof}
\begin{mechanism}\label{mec3}
     For any instance $(\mathbf{x},\mathbf{y})$,  return \( s = 0 \) with probability \( \frac{n_1^2}{n_1^2 + m_2^2} \), and \( s = 1 \) with probability \( \frac{m_2^2}{n_1^2 + m_2^2} \).
\end{mechanism}
\begin{theorem}\label{thm6}
Mechanism \ref{mec3} is a randomized GSP $2$-approximation mechanism for the social cost.
\end{theorem}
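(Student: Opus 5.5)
The proof has two parts. We first show group-strategyproofness by reducing each agent's expected cost to a monotone function of a single probability, and then bound the approximation ratio through an additive-gap argument.

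\textbf{Cost structure.} Write $p_0=\frac{n_1^2}{n_1^2+m_2^2}$ and $p_1=1-p_0$. We need a convention when $n_1=m_2=0$; since every output is then optimal by the argument below, any fixed choice works.

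Outside the bridge's influence the costs are simple:
\begin{itemize}
\item an agent in $N_2$ always pays $1-x_i$ under $s\in\{0,1\}$;
\item an agent in $M_1$ always pays $|y_j|$ under $s\in\{0,1\}$.
\end{itemize}
For $i\in N_1$ we have $|x_i|<1-x_i$. Hence $c_1(x_i,0)=|x_i|$ and $c_1(x_i,1)=1-x_i=|x_i|+d_i$, where $d_i:=1-x_i-|x_i|\in(0,1]$. Symmetrically, for $j\in M_2$ we get $c_2(y_j,1)=|1-y_j|$ and $c_2(y_j,0)=|1-y_j|+e_j$ with $e_j\in(0,1]$.

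So an $N_1$ agent's expected cost is $|x_i|+p_1d_i$, which strictly decreases in $p_0$. An $M_2$ agent's expected cost is $|1-y_j|+p_0e_j$, which strictly increases in $p_0$.

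\textbf{Group-strategyproofness.} Take a coalition $S$ in which every member strictly gains.
\begin{itemize}
\item $S$ contains no agent of $N_2\cup M_1$, since their costs never change.
\item Agents of $S\cap N_1$ can only alter the reported count $n_1$, and only downward: they are already counted and can at most report into $N_2$. Likewise $S\cap M_2$ can only lower $m_2$. Thus the reported counts satisfy $n_1'\le n_1$ and $m_2'\le m_2$.
\item If $S\cap N_1\neq\emptyset$, every $N_1$ member needs $p_0$ to increase. This requires $m_2'/n_1'<m_2/n_1$, so $m_2'<m_2$ and hence $S\cap M_2\neq\emptyset$.
\item But those $M_2$ members need $p_0$ to decrease, a contradiction.
\item The case $S\subseteq M_2$ is symmetric: they need $p_0$ to decrease, which forces $n_1'<n_1$ and thus $S\cap N_1\ne\emptyset$.
\end{itemize}
The boundary cases where a reported count becomes $0$ need to be checked against the chosen convention, but they fit the same monotonicity.

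\textbf{Approximation ratio.} By Proposition~\ref{t1} the optimum is attained at $s^*\in\{0,1\}$. Write $SC(0)=K+\sum_{j\in M_2}e_j$ and $SC(1)=K+\sum_{i\in N_1}d_i$, where $K$ is the common part. Assume without loss of generality that $OPT=SC(0)$. Then
\[
ALG-OPT=p_1\Big(\sum_{i\in N_1}d_i-\sum_{j\in M_2}e_j\Big)\le p_1 n_1=\frac{n_1m_2^2}{n_1^2+m_2^2}\le \frac{m_2}{2},
\]
where the last step uses $2n_1m_2\le n_1^2+m_2^2$.

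It remains to show $OPT\ge m_2/2$. Under $s=0$ each agent of $M_2$ pays $\min\{y_j,\,|y_j|+1\}=y_j>\tfrac12$, so $SC(0)\ge m_2/2$. Therefore $ALG\le 2\,OPT$, and the case $OPT=SC(1)$ is symmetric, using $p_0m_2\le n_1/2$ and $SC(1)\ge n_1/2$ (each $N_1$ agent pays $1-x_i>\tfrac12$).

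\textbf{Main obstacle.} The main obstacle is choosing the right lower bound on $OPT$. A ratio bound that compares the $d_i$ and $e_j$ terms directly fails, because $K$ may be nearly $0$. The fix is to pair the additive loss $p_1n_1$ with the lower bound $m_2/2$ coming from the agents on the other side. The squared weights in $p_0$ are exactly what make $\frac{n_1m_2^2}{n_1^2+m_2^2}\le\frac{m_2}{2}$ hold.
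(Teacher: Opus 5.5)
Your proof is correct and follows essentially the same route as the paper. The GSP argument is the same reasoning: expected cost is monotone in $p_0$, a beneficial deviation needs members of both $N_1$ and $M_2$, and their interests conflict. Your additive bound $ALG-OPT\le p_1 n_1\le m_2/2\le OPT$ is exactly the paper's chain, written more transparently. The paper rewrites $p_0A+p_1B\le 2A$ as $p_1(B-A)\le A$ and then uses the same four facts: $|y_j-1|\le|y_j|$ on $M_2$, $|x_i-1|\le|x_i|+1$ on $N_1$, $\sum_{M_2}y_j\ge m_2/2$, and $2n_1m_2\le n_1^2+m_2^2$. Your version is slightly more careful in three places: it keeps the constant contribution $K$ of $N_2\cup M_1$ explicit, it notes that reported counts can only decrease, and it fixes a convention for $n_1=m_2=0$. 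One trivial slip: an $N_2$ agent with $x_i>1$ pays $|1-x_i|$ rather than $1-x_i$, which is still constant and does not affect the argument.
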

\begin{proof}
We first prove that the mechanism is GSP. Suppose a coalition of agents misreports their locations to benefit all members. Note that agents in \( N_2 \) and \( M_1 \) never use the bridge and thus have no incentive to participate in misreporting. If only agents in \( N_1 \) misreport, they cannot increase the probability of \( s = 0 \); similarly, if only agents in \( M_2 \) misreport, they cannot increase the probability of \( s = 1 \). Therefore, any successful manipulation would require cooperation between agents from both \( N_1 \) and \( M_2 \). However, no misreport can simultaneously benefit both groups, as increasing the probability of \( s = 0 \) harms \( M_2 \) and increasing the probability of \( s = 1 \) harms \( N_1 \). This establishes group-strategyproofness.

Next, we analyze the approximation ratio. Recall that at least one of \( s = 0 \) or \( s = 1 \) is optimal. Since agents in \( N_2 \) and \( M_1 \) never use the bridge, the social cost depends only on \( N_1 \) and \( M_2 \). Denote 
$$A=\sum_{i \in N_1} |x_i|+\sum_{i \in M_2} |y_i|,B=\sum_{i \in N_1} |x_i - 1| + \sum_{i \in M_2} |y_i - 1|.$$
We need to show:
$$
    \frac{n_1^2}{n_1^2 + m_2^2}\cdot A + \frac{m_2^2}{n_1^2 + m_2^2}\cdot B\leq 2 \cdot \min\left(A, B\right).
$$
By symmetry, it suffices to prove one side of the inequality:
$$\frac{n_1^2}{n_1^2 + m_2^2}\cdot A + \frac{m_2^2}{n_1^2 + m_2^2}\cdot B\leq 2\left(\sum_{i \in N_1} |x_i| + \sum_{i \in M_2} |y_i|\right).
$$
Rearranging terms, we require:
\[
\frac{m_2^2}{n_1^2 + m_2^2}\cdot B \leq \frac{n_1^2 + 2m_2^2}{n_1^2 + m_2^2}\left(\sum_{i \in N_1} |x_i| + \sum_{i \in M_2} |y_i|\right).
\]
For \( i \in M_2 \), we have \( y_i > 0.5 \), so \( |y_i - 1| < |y_i| \). Thus, it is sufficient to prove:
\[
m_2^2 \sum_{i \in N_1} |x_i - 1| \leq (n_1^2 + 2m_2^2) \sum_{i \in N_1} |x_i| + (n_1^2 + m_2^2) \sum_{i \in M_2} |y_i|.
\]
Since \( y_i > 0.5 \) for \( i \in M_2 \), we have \( \sum_{i \in M_2} |y_i| \geq \frac{m_2}{2} \). Therefore, we only need:
\[
m_2^2 \sum_{i \in N_1} |x_i - 1| \leq (n_1^2 + 2m_2^2) \sum_{i \in N_1} |x_i| + \frac{n_1^2 + m_2^2}{2} m_2.
\]
For \( i \in N_1 \), we have \( |x_i - 1| \leq |x_i| + 1 \). Applying this bound:
\[
m_2^2 \sum_{i \in N_1} |x_i - 1| \leq m_2^2 \sum_{i \in N_1} |x_i| + m_2^2 n_1.
\]
Thus, it remains to show:
\[
m_2^2 \sum_{i \in N_1} |x_i| + m_2^2 n_1 \leq (n_1^2 + 2m_2^2) \sum_{i \in N_1} |x_i| + \frac{n_1^2 + m_2^2}{2} m_2.
\]
Rearranging:
\[
m_2^2 n_1 \leq (n_1^2 + m_2^2) \sum_{i \in N_1} |x_i| + \frac{n_1^2 + m_2^2}{2} m_2.
\]
Since \( |x_i| \geq 0 \) for all \( i \), it suffices to prove:
\[
m_2^2 n_1 \leq \frac{n_1^2 + m_2^2}{2} m_2.
\]
This inequality is equivalent to:
\[
2 m_2 n_1 \leq n_1^2 + m_2^2.
\]  

Example \ref{eg4} shows the upper bound to be tight.
\begin{example}\label{eg4}
     Consider the instance \((\textbf{x},\textbf{y})=(x_{1}, y_{1})\), where \(x_{1} = \frac{1}{2}-\epsilon\) and \(y_{1} = 1\) where $\epsilon\rightarrow0^+$. The optimal social cost is \( OPT= \frac{1}{2}\) and the social cost of Mechanism \ref{mec3} is \(ALG =\frac{1}{2} \times \left(\frac{3}{2}-\epsilon\right) + \frac{1}{2} \times \left(\frac{1}{2}+\epsilon\right) = 1\), so the ratio approaches 2 as $\epsilon\rightarrow0^+$.
\end{example}
This completes the proof of the approximation ratio
\end{proof}
\begin{proposition}
If an SP randomized mechanism’s output distribution depends only on \( n_1, n_2, m_1, m_2 \), then no such mechanism can achieve an approximation ratio better than 2 for the social cost.
\end{proposition}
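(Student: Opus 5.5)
The plan is to use instances whose counts $(n_1,n_2,m_1,m_2)$ coincide but whose optimal bridge locations differ sharply. Such a mechanism must output the same distribution on both, so it cannot do well on both.

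\textbf{Step 1: the distribution depends only on $(n_1,m_2)$ up to irrelevant counts.} Fix $n_1=m_2=1$ and $n_2=m_1=0$. Take one agent $x_1\in N_1$ (so $x_1<\frac12$) and one agent $y_1\in M_2$ (so $y_1>\frac12$). By hypothesis the mechanism outputs a fixed distribution $P$ over $\mathbb{R}$ for every such instance. Strategyproofness is not even needed for this step, since the hypothesis alone forces a common distribution.

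\textbf{Step 2: two instances with opposite optima.}
\begin{itemize}
\item Instance $I_1$: $x_1=0$ and $y_1=\frac12+\epsilon$. Then $s=0$ gives cost $0+(\frac12+\epsilon)$, since the $M_2$ agent simply walks to $y_F=0$. Likewise $s=1$ gives $1+(\frac12-\epsilon)$, so $OPT(I_1)=\frac12+\epsilon$.
\item Instance $I_2$: $x_1=\frac12-\epsilon$ and $y_1=1$, which by symmetry has $OPT(I_2)=\frac12+\epsilon$.
\end{itemize}

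\textbf{Step 3: bound the sum of costs for every bridge location.} I claim that for every $s\in\mathbb{R}$,
\[
SC(I_1,s)+SC(I_2,s)\ \ge\ 2-O(\epsilon).
\]
In each instance the agent at the facility's own coordinate pays $2|s|$ (for $x_1=0$) or $2|s-1|$ (for $y_1=1$), capped respectively by $1$ and $1$. The other agent pays about $\frac12$ regardless, since its direct distance is about $\frac12$ and the bridge path is never shorter. Summing gives $SC(I_1,s)+SC(I_2,s)\ge \min\{1,2|s|\}+\min\{1,2|1-s|\}+1-O(\epsilon)\ge 2-O(\epsilon)$, because $|s|+|1-s|\ge1$. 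Taking expectations over $P$, the larger ratio satisfies $\max_k \frac{\mathbb{E}_P SC(I_k)}{OPT(I_k)}\ge \frac{2-O(\epsilon)}{2(\frac12+\epsilon)}\to 2$.

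\textbf{Main obstacle.} The hardest step is verifying the pointwise inequality in Step 3 for all real $s$, including $s<0$ and $s>1$. I expect this to be routine from the $\min$ formula for $c_1,c_2$: the bridge path through $s$ to $y_F=0$ has length $|x-s|+|s|\ge |x|$, and similarly on the other side. A secondary concern is that the bound of $2$ is only approached as $\epsilon\to0$, which is enough for the ``better than 2'' statement.
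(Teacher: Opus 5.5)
Your proof is correct. It uses the same pair of instances as the paper ($x_1=0,\ y_1=\frac12+\epsilon$ and $x_1=\frac12-\epsilon,\ y_1=1$, both with $n_1=m_2=1$, $n_2=m_1=0$), and the difference lies in how you treat the common output distribution.

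The paper lets $p$ be the probability of $s=0$. It computes the expected social costs $\frac12+p$ and $\frac32-p$ and reaches the contradictory requirements $p<\frac12$ and $p>\frac12$. This tacitly assumes the mechanism only ever outputs $0$ or $1$. The assumption is harmless, because by the proof of the endpoint proposition every $s\le\frac12$ is weakly dominated agent-by-agent by $0$, and every $s>\frac12$ by $1$. Any distribution can therefore be collapsed onto $\{0,1\}$ without raising anyone's cost, but the paper does not state this reduction.

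Your pointwise bound $SC(I_1,s)+SC(I_2,s)\ge 2-2\epsilon$ for every real $s$, followed by averaging, avoids that reduction altogether and covers arbitrary distributions on the line. Restricted to $s\in\{0,1\}$, it is exactly the paper's observation that $(\frac12+p)+(\frac32-p)=2$. The paper's version is shorter and yields an explicit contradiction on $p$; yours is self-contained.

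Two small precision points:
\begin{itemize}
\item The non-facility agent's bridge path can be up to $2\epsilon$ shorter than its direct path. For example, $y_1=\frac12+\epsilon$ with $s=1$ pays $\frac12-\epsilon$. The correct per-agent bound is therefore $\ge\frac12-\epsilon$, which your $O(\epsilon)$ already absorbs.
\item For the capped terms you need $\min\{1,a\}+\min\{1,b\}\ge\min\{1,a+b\}$ together with $2(|s|+|1-s|)\ge 2$. The inequality $|s|+|1-s|\ge1$ on its own does not finish the step.
\end{itemize}

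As you note, neither your argument nor the paper's actually uses strategyproofness.
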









Before proving the lower bound for any randomized strategyproof mechanism, recall that strategyproofness is equivalent to partial group-strategyproofness for facility location games where each agent has her location as a private location \cite{LW8}. In other words, for any group of agents located at the same location, no member can benefit if they misreport simultaneously.
\begin{theorem}\label{thx}
    Any randomized SP mechanism has an approximation ratio of at least $1.1$ for the social cost.
\end{theorem}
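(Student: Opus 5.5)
The plan is to derive a contradiction from a small family of two-line instances. Strategyproofness (in its partial group-strategyproof form, \cite{LW8}) will tie together the output distributions on these instances, so that no single distribution can be within a factor $1.1$ of optimal on all of them. Throughout, I will use the earlier structural facts. Agents in $N_2$ and $M_1$ never use the bridge, and by Proposition~\ref{t1} the optimum is attained at $s\in\{0,1\}$. So it suffices to track, for each instance, how much probability mass the mechanism puts near $0$ versus near $1$. I also use the cost functions $c_1(x,s)=\min\{1-x,\,|x-s|+|s|\}$ and $c_2(y,s)=\min\{y,\,|y-s|+|s-1|\}$.

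\textbf{Step 1 (symmetrization).} The reflection $x\mapsto 1-x$ composed with swapping $L_1$ and $L_2$ maps instances to instances and preserves the facility positions. If $f$ is SP with ratio $\gamma$, then the mirrored mechanism, and hence the uniform mixture of $f$ and its mirror, is also SP with ratio at most $\gamma$. So I may assume the mechanism is mirror-symmetric.

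\textbf{Step 2 (base instance).} Take the base instance $I_1$: a group of $a$ agents at $x=0$ and a group of $a$ agents at $y=1$. Here $\mathrm{OPT}=a$. Pointwise, $\min\{1,2|s|\}+\min\{1,2|1-s|\}\ge 1$, and by symmetry the two groups bear equal expected cost. Hence each agent at $x=0$ has expected cost at least $\tfrac12$. Moreover, a ratio below $1.1$ forces each such cost to be at most $0.55$.

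\textbf{Step 3 (deviation instance).} Move the whole $L_1$ group from $0$ to a point $t\in(0,\tfrac12)$, obtaining $I_2$. Here $\mathrm{OPT}=\min\{a(1+t),\,a(1-t)\}=a(1-t)$, attained at $s=1$. Partial group-strategyproofness constrains the two instances in both directions:
\begin{itemize}
\item the group at $t$ in $I_2$ must not gain by jointly reporting $0$, which bounds their expected cost $\mathbb{E}_{I_2}[c_1(t,s)]$ above by $\mathbb{E}_{I_1}[c_1(t,s)]$;
\item the group at $0$ in $I_1$ must not gain by jointly reporting $t$, which gives $\mathbb{E}_{I_1}[\min\{1,2|s|\}]\le \mathbb{E}_{I_2}[\min\{1,2|s|\}]$.
\end{itemize}
Together with Step~2, the second inequality forces enough mass of the $I_2$ output to stay away from $0$. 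The first inequality, however, limits how much mass can sit near $1$, because for the $t$-agents crossing to $y_F$ via $s\approx 1$ costs $1-t$, while $s$ near $t$ is cheap for them. Writing the $I_2$ social cost as an integral against the output distribution and combining with the cost identities, the aim is to show that any distribution satisfying both constraints has expected social cost at least $1.1\,a(1-t)$ for a suitable $t$. If one deviation does not suffice, I will add a third instance, with the group moved to $t'$ close to $\tfrac12$, chained via the same partial-GSP argument, as in the deterministic proof of the preceding theorem. I may also unbalance the group sizes (say $a$ agents at $x$ and $b$ at $y$) to gain a free parameter.

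\textbf{Main obstacle.} The hard step is the final optimization. One has to turn the SP constraints, which are expectations of the nonlinear piecewise-linear functions $\min\{1,2|s|\}$ and $c_1(t,s)$, into a lower bound on the expected social cost. I would first restrict attention to distributions supported on a few breakpoints ($s\in\{0,t,1/2,1\}$ and their mirrors), arguing that moving mass to these points does not loosen the constraints. This reduces the problem to a small linear program in the probabilities. I would then choose $t$, and $a,b$ if needed, to make the LP value equal to $1.1$; I expect a value such as $t=\tfrac14$, or an unbalanced $a:b$ ratio, to produce exactly the constant $1.1$.
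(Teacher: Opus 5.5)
Your route (balanced base instance, mirror-symmetrization, one partial-GSP deviation) differs from the paper's and can be made to work. As written, though, it has a gap exactly at the step you call the main obstacle, and the pieces you set up do not yet connect. The only constraint that can give a lower bound is your first bullet, $\mathbf{E}_{I_2}[c_1(t,s)]\le\mathbf{E}_{I_1}[c_1(t,s)]$. To use it you need an \emph{upper} bound on $\mathbf{E}_{I_1}[c_1(t,s)]$, the cost a $t$-agent would get at the $I_1$ outcome. Step~2 instead controls $\mathbf{E}_{I_1}[\min\{1,2|s|\}]$, the cost of the agents at $0$. The obvious transfer $c_1(t,s)\le t+\min\{1,2|s|\}$ then gives only $t+0.55$, which is too weak to reach $1.1$ for any $t$; at $t=\frac14$ it even exceeds $1-t$ and constrains nothing. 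Your second bullet cannot help: it only keeps the $I_2$ output away from $0$, which is not in tension with approximating $\mathrm{OPT}(I_2)$, attained at $s=1$. The proposed LP reduction is also not without loss of generality. The support $\{0,t,\frac12,1\}$ and its mirror image contain no $s<0$, yet negative $s$ is exactly how the $I_1$ outcome can raise the $t$-agent's deviation cost cheaply: at $s=-\delta$ she pays $t+2\delta$, while the social cost of $I_1$ rises only by $2\delta a$. Moving such mass to $0$ tightens the first constraint, so an LP on that support would overstate the bound.

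The missing ingredient is a pointwise inequality that exploits the symmetry. For $s\le\frac12$ one has $c_1(t,1-s)=1-t$ and $\min\{1,2|1-s|\}=1$. Combining this with the transfer above and invariance under $s\mapsto 1-s$ gives
\[
c_1(t,s)+c_1(t,1-s)\le\min\{1,2|s|\}+\min\{1,2|1-s|\}\quad\text{for all } s,
\]
so a mirror-symmetric $I_1$ output satisfies $\mathbf{E}_{I_1}[c_1(t,s)]\le SC(I_1)/(2a)\le\gamma/2$. On $I_2$, one checks $c_2(1,s)\ge 1-\frac{c_1(t,s)-t}{1-2t}$ for every $s$ by splitting at $s=\frac12$. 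The first constraint then gives $SC(I_2)\ge a\bigl(1-\frac{t(\gamma-1)}{1-2t}\bigr)$. Requiring this to be at most $\gamma a(1-t)$ forces $\gamma\ge\frac{1-t}{1-2t+2t^2}$, which equals $1.2$ at $t=\frac14$ and $\frac{1+\sqrt2}{2}$ at $t=1-\frac{1}{\sqrt2}$. Completed this way, your approach proves the theorem with room to spare, already for $n=m=1$ and with no third instance. Drop the unbalanced fallback, though: the mirror maps $(a,b)$-instances to $(b,a)$-instances, so with $a\neq b$ Steps 1 and 2 break.

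The paper avoids symmetrization by unbalancing the instance (four agents on $L_1$, three at $y=1$), so the base instance with $x=0$ has the unique optimum $s=0$. The ratio then forces probability above $0.68$ into $[-0.12,0.12]$, capping the deviation cost below $0.57$. At $x=0.24$, probability above $0.66$ on $s\ge\frac12$ pushes the truthful cost above $0.58$. The paper thus uses only the one deviation direction, at the price of a weaker constant.
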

\subsection{SGSP Mechanism for Social Cost}
\begin{mechanism}\label{mec6}
For any instance $(\mathbf{x},\mathbf{y})$:
\begin{itemize}
    \item When \( n \geq m \): if \( n_1 > 0 \), return \( s = 0 \); else return \( s = 1 \).
    \item When \( m > n \): if \( m_2 > 0 \), return \( s = 1 \); else return \( s = 0 \).
\end{itemize}
\end{mechanism}
\begin{theorem}
Mechanism \ref{mec6} is a deterministic SGSP \( (1 + 2\min\{m,n\})\)-approximation mechanism for the social cost.
\end{theorem}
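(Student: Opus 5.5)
The proof has two parts: SGSP, then the approximation ratio. By symmetry between the two lines (swapping roles via the reflection $t\mapsto 1-t$, which exchanges $N_1$ with $M_2$ and $0$ with $1$), I will treat only the case $n\ge m$, where the output is $0$ if $n_1>0$ and $1$ otherwise. Note that $n,m$ are public counts, so the case split cannot be manipulated. Throughout I use the observation from the proofs of Theorems~\ref{thm4} and~\ref{thm6}: agents in $N_2\cup M_1$ have cost independent of $s\in\{0,1\}$ and, more generally, can never strictly gain from a bridge.

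\textbf{SGSP.} First I check that every $i\in N_1$ gets her minimum possible cost $|x_i|$ when $s=0$, and every $j\in M_1$ and $i\in N_2$ are at their minimum regardless of the output. Agents in $N_2$ and $M_1$ therefore gain nothing from any outcome. Next suppose $n_1>0$, so $s=0$. Every agent except those in $M_2$ is already at her minimum cost. A coalition deviation changes the outcome only if all agents of $N_1$ are in the coalition and report $\ge 1/2$, giving output $1$. Then each $i\in N_1$ has cost $1-x_i>|x_i|$, which strictly increases, since $x_i<1/2$. So no member can strictly gain without another member losing. If instead $n_1=0$, the output is $1$, and the agents in $M_2$ are at their optimum. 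A deviation to output $0$ requires some agent to misreport into $N_1$; that agent's true cost is unchanged, while agents in $M_2$ weakly lose (at $s=0$, $j\in M_2$ pays $\min\{y_j,\,|y_j|+1\}=y_j\ge|y_j-1|$). The only possible strict gainer would be someone in $M_2$, but they cannot gain. The one delicate point is when $y_j$ is exactly at a boundary; I would verify that ties give weak inequalities only.

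\textbf{Approximation.} By Proposition~\ref{t1}, $OPT=\min\{A,B\}$ with $A,B$ as in the proof of Theorem~\ref{thm6}. If $n_1=0$ or $m_2=0$ the output is optimal, as in Theorem~\ref{thm4}. So assume $n_1,m_2>0$, output $0$, and $OPT=B<A$. The goal is
\[
A \le (1+2m)B,
\]
using $m=\min\{m,n\}$. Split by groups. For $i\in N_1$, $|x_i|\le|1-x_i|$, so the $N_1$ part of $A$ is at most the $N_1$ part of $B$. For $j\in M_2$, $y_j\le |1-y_j|+1$. Hence
\[
A\le B+m_2\le B+m.
\]
It then suffices to show $B\ge 1/2$. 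This holds because $B$ contains, for any $i\in N_1$, the term $1-x_i>1/2$. Therefore $A\le B+2mB$.

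\textbf{Tightness.} I would check the approximation bound is approached by the instance with one agent at $x=1/2-\epsilon$ and $m$ agents at $y=1$ (with $n\ge m$ padded by agents in $N_2$ at $1$). Here $A\approx 1/2+m$ and $B\approx 1/2$, so the ratio tends to $1+2m$.

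I expect the main obstacle to be the SGSP argument's boundary and tie cases, in particular making sure that agents misreporting into or out of $N_1$ can never create a strict gain for someone while everyone else is indifferent. The bound itself is routine.
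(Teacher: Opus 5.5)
Your proposal is correct and follows essentially the same route as the paper. The SGSP case split is the same: when $s=0$, all of $N_1$ must join the coalition and each of them strictly loses, and when $n_1=0$ nobody can gain. The ratio bound also uses the same ingredients: $|x_i|\le|1-x_i|$, $|y_j|\le|y_j-1|+1$, and the $N_1$ term $1-x_i>\frac12$ in the denominator.

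Your chain $A\le B+m_2$ and $B\ge\frac12$ is actually cleaner than the paper's step that replaces $|x_i|$ by $\frac12$, which fails for negative $x_i$. Writing $OPT=\min\{A,B\}$ drops the common nonnegative cost of $N_2\cup M_1$, but this is harmless because $A\le(1+2m)B$ still holds after adding that constant to both sides.
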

\begin{proof}
Firstly we prove it is SGSP. Due to symmetry, we consider the case when \( n \geq m \):

\textbf{Case 1.} If \( n_1 > 0 \) and the mechanism returns \( s = 0 \), suppose some agents misreport to change the outcome to \( s = 1 \). Then all agents in \( N_1 \) must participate in misreporting. However, for any \( i \in N_1 \), we have \( x_i < 0.5 \), so the cost under \( s = 0 \) is \( x_i \), while under \( s = 1 \) it is \( 1 - x_i > 0.5 > x_i \). Thus, all participating agents in \( N_1 \) would strictly increase their costs.

\textbf{Case 2.} If \( n_1 = 0 \) and the mechanism returns \( s = 1 \), suppose some agents misreport to change the outcome to \( s = 0 \). Then agents in \( N_2 \) must participate. For any \( i \in N_2 \), \( x_i \geq 0.5 \), so the cost under \( s = 1 \) is \( 1 - x_i \leq 0.5 \), while under \( s = 0 \) it is \( x_i \geq 0.5 \). Thus, no agent in \( N_2 \) can strictly decrease their cost.

Next we analyze the approximation ratio. Let ALG and OPT denote the social costs of the mechanism and the optimal solution, respectively. Due to symmetry, we only consider \( n \geq m \) and \( s = 0 \). If \( s = 1 \) (which means \( n_1 = 0 \)), it is optimal. Assume the optimal solution is \( s^* = 1 \) (the optimal solution is either 0 or 1). Then:

\begin{align*}
\frac{ALG}{OPT} &\leq \frac{\sum_{i \in N_1} |x_i| + \sum_{i \in N_2} |x_i - 1| + \sum_{i \in M} |y_i|}{\sum_{i \in M_1} |y_i| + \sum_{i \in M_2} |y_i - 1| + \sum_{i \in N} |x_i - 1|}\\
&\leq \frac{\sum_{i \in N_1} |x_i| + \sum_{i \in M_2} |y_i|}{\sum_{i \in M_2} |y_i - 1| + \sum_{i \in N_1} |x_i - 1|}\\
&\leq \frac{\sum_{i \in N_1} |x_i| + \sum_{i \in M_2} (|y_i - 1| + 1)}{\sum_{i \in N_1} |x_i - 1| + \sum_{i \in M_2} |y_i - 1|}\\
&\leq \frac{n_1 \cdot \frac{1}{2} + m_2 \cdot 1}{n_1 \cdot (1 - \frac{1}{2}) + m_2 \cdot 0}\\
&= 1 + 2m_2 \leq 1 + 2m.
\end{align*}
Since \( m \leq n \) in this case, the approximation ratio is bounded by \( 1 + 2m \leq 1 + 2\min(m,n) \). By symmetry, the same bound holds for the case \( m > n \).
\end{proof}
\begin{theorem}
Any deterministic SGSP mechanism has an approximation ratio of at least \(1 + \min\{m, n\}\) for the social cost.
\end{theorem}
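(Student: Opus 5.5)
The plan is to combine a symmetric starting instance with the standard SGSP trick: agents who are indifferent to the bridge location can be added to or removed from a deviating coalition for free. By the symmetry of the model (swap the two lines and reflect $t\mapsto 1-t$, which exchanges $N$ with $M$ and $x_F$ with $y_F$), it suffices to treat $n\ge m$ and prove a lower bound of $1+m$. Throughout I use that an agent of $N$ located at $x=1$, or an agent of $M$ located at $y=0$, has cost $0$ for every bridge location, so she is indifferent to any outcome.

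\textbf{Step 1 (symmetric instance).} Take $m$ agents of $N$ at $0$, $m$ agents of $M$ at $1$, and the remaining $n-m$ agents of $N$ at $1$, where they are indifferent. The $N$-agents at $0$ pay $|s|+|s-0|=2|s|$ (capped at $1$), and the $M$-agents at $1$ pay $2|1-s|$ (capped at $1$). By anonymity and the reflection symmetry, I may assume the output satisfies $s\le \frac12$; otherwise I apply the mirrored argument. I would first argue that the output can be taken to be $0$. If I cannot, I would carry $s$ as a parameter and check that the final ratio is minimized at $s=0$.

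\textbf{Step 2 (transfer via indifferent agents).} Move $m-1$ of the $N$-agents from $0$ to $1$, so that exactly one $N$-agent $a$ remains at $0$. Call the resulting output $s'$. Suppose $a$ is strictly worse off under $s'$ than under $s$. Then in the new profile, the coalition consisting of $a$ and the $m-1$ moved agents (all now at $1$ and indifferent) can jointly report the old profile. Agent $a$ strictly gains and nobody loses, which contradicts SGSP. Hence $c_1(0,s')\le c_1(0,s)$, so $|s'|\le |s|$. In particular, if $s=0$, then $s'=0$.

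\textbf{Step 3 (counting costs).} In the profile from Step 2, the output $s'\approx 0$ makes all $m$ agents of $M$ at $1$ pay cost $1$, so $\mathrm{ALG}\approx m$. The alternative $s=1$ costs only agent $a$, giving $\mathrm{OPT}\le 1$. This already yields a ratio of about $m$. To reach $1+m$, I would move agent $a$ from $0$ toward $\frac12$ (say to $\frac12-\epsilon$, or to an intermediate point chosen so the arithmetic gives exactly $1+m$). I would use single-agent SP, in the style of Cases 2 and 3 of the deterministic SP lower bound above, to show the bridge cannot jump to near $1$. Indeed, if it did, $a$ would have a profitable deviation back to $0$, exactly as in that proof. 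Then $\mathrm{ALG}$ gains $a$'s own cost of roughly her location, while $\mathrm{OPT}$ stays the cost $1-x_a$ of placing the bridge at $1$.

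\textbf{Main obstacle.} The hardest step is the last one: choosing the point for $a$ so that SP and SGSP jointly force a bridge near $0$, and so that the resulting ratio is exactly $1+m$ rather than something between $m$ and $1+2m$. This requires controlling the intermediate outputs $s'\in(-\tfrac12,\tfrac12)$, as in the case analysis of the SP lower bound. I would first handle $s=0$ directly, then reuse that case analysis to rule out the intermediate outputs.
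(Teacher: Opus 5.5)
\textbf{There is a genuine gap at the first step.} You never establish that the output $s$ on your starting profile equals $0$ (or $1$ in the mirrored case), and neither fallback closes this. Step~2 only transfers the inequality $c_1(0,s')\le c_1(0,s)=\min\{1,2|s|\}$, which is empty when $s=\tfrac12$ or $s\le-\tfrac12$. Such an output costs $2m$ on the starting profile against an optimum of $m$, i.e.\ ratio only $2$. After your transfer the mechanism may then output $s'=1$, which is optimal on the new profile, and nothing in your plan rules this out. The hope that ``the final ratio is minimized at $s=0$'' is also backwards. The branch $s=0$ is where SP pins the bridge most tightly; for $s'<0$ your one-move bound degrades to $(m+\tfrac12-|s'|)/(\tfrac12+|s'|)$, which tends to $m<1+m$ as $|s'|\to\tfrac12$. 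For $|s|<\tfrac12$ the multi-move case analysis from the SP lower bound can in fact be pushed through to give at least $1+m$, but the outputs $s=\tfrac12$ and $s\le-\tfrac12$ remain uncovered.

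\textbf{The missing idea is the paper's Case~1.} Apply SGSP to agents who are indifferent between two \emph{specific} outcomes, on a profile whose optimum is $0$. If $s\le\tfrac12$ and $s\neq0$, let every $M$-agent report $0$. The resulting profile has optimal social cost $0$, attained only at $0$, so any mechanism with finite ratio must output exactly $0$. Each $M$-agent pays $1$ both at $s$ and at $0$ (because $s\le\tfrac12$). Each $N$-agent at $0$ drops from $\min\{1,2|s|\}>0$ to $0$. So the coalition $N\cup M$, with the $N$-agents reporting truthfully, violates SGSP. Symmetrically, letting all $N$-agents report $1$ forces $s=1$ when $s>\tfrac12$.

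\textbf{With $s=0$ in hand, your Steps~2 and~3 finish the proof, and more strongly than you expect.} SGSP gives $s'=0$. SP for agent $a$ at $x_a<\tfrac12$ then gives $s''\in[0,x_a]$, so the ratio is at least $(m+x_a)/(1-x_a)$. This is $\ge 1+m$ once $x_a\ge 1/(m+2)$ and tends to $1+2m$ as $x_a\to\tfrac12$. No tuning of $x_a$ is needed, and the bound matches the paper's $1+2\min\{m,n\}$ upper bound. Your Step~3 is genuinely needed: the paper's Case~2 stops at the profile with the lone agent at $0$ and asserts social cost $m+1$ against optimum $1$. But SGSP forces output $0$ there, and the cost is only $m$. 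Moving that agent toward $\tfrac12$, as you propose, is what actually completes that case.
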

\begin{mechanism}\label{mec7}
For any instance $(\mathbf{x},\mathbf{y})$:
\begin{itemize}
    \item If \( n_1 = 0 \), return \( s = 1 \);
    \item Else if \( m_2 = 0 \), return \( s = 0 \);
    \item Else (\( n_1, m_2 > 0 \)) return \( s = 0 \) with probability \( \frac{n}{m+n} \) and \( s = 1 \) with probability \( \frac{m}{m+n} \).
\end{itemize}
\end{mechanism}
\begin{theorem}
    Mechanism \ref{mec7} is a randomized SGSP \(\left(1 + \frac{2mn}{m+n}\right)\)-approximation mechanism for the social cost.
\end{theorem}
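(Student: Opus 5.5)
The proof has two parts: strong group-strategyproofness, then the approximation bound. Throughout, I use that only $N_1$ and $M_2$ are affected by the choice between $s=0$ and $s=1$. An agent $i\in N_2$ pays $1-x_i$ under both outcomes, and an agent $j\in M_1$ pays $y_j$ under both.

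\textbf{SGSP.} Fix an instance and a coalition $S$ with a joint misreport, and compare the output lotteries before and after.

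Suppose first that $n_1=0$, so the output is $s=1$. Every agent then attains her minimum possible cost: agents in $N$ pay $1-x_i$, agents in $M_1$ pay $y_j$, and agents in $M_2$ pay $1-y_j<y_j$. No deviation can strictly help anyone. The case $m_2=0$ (output $s=0$) is symmetric.

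Now suppose $n_1,m_2>0$. Write $p$ for the probability placed on $s=1$. Agents in $N_1$ strictly prefer a smaller $p$, agents in $M_2$ strictly prefer a larger $p$, and everyone else is indifferent. A deviation changes $p$ only if it changes whether the reported $n_1'$ or $m_2'$ is zero, since $n,m$ are fixed. The possible new values are $p'\in\{0,1\}$ (one of the reported counts becomes zero) or $p'=m/(m+n)$ (unchanged).

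\begin{itemize}
\item If $p'=1$, then every true $N_1$ agent must have reported $\ge 1/2$, so $S\supseteq N_1\neq\emptyset$. These members strictly lose.
\item If $p'=0$, then every true $M_2$ agent misreported, and those members strictly lose.
\end{itemize}

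Hence no coalition gains weakly while some member gains strictly. I expect a small subtlety when both reported counts become zero; the first rule then fires and gives $p'=1$, which is handled by the first bullet.

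\textbf{Approximation.} If $n_1=0$ or $m_2=0$, the output is optimal, by Proposition~\ref{t1} and the fact that only $N_1,M_2$ matter. So assume $n_1,m_2>0$. Define
\[
A=\sum_{i\in N_1}x_i\text{-cost under }0,\qquad B=\text{the corresponding quantity under }1,
\]
exactly as in the proof of Theorem~\ref{thm6}, and let $C$ be the common cost of $N_2\cup M_1$. Then
\[
\mathrm{ALG}=C+\frac{n}{m+n}A+\frac{m}{m+n}B,\qquad \mathrm{OPT}=C+\min\{A,B\}.
\]
Suppose $\mathrm{OPT}$ uses $s=0$, so the ratio is at most $\frac{n}{m+n}+\frac{m}{m+n}\cdot\frac{B}{A}$; the other case is symmetric with the roles of $n$ and $m$ swapped.

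To bound $B/A$, split the sums:
\[
A=\sum_{N_1}|x_i|+\sum_{M_2}y_j,\qquad B=\sum_{N_1}(1-x_i)+\sum_{M_2}|1-y_j|.
\]
Use $1-x_i\le |x_i|+1$ and $|1-y_j|\le y_j$. This gives $B\le A+n_1$. Since $y_j>1/2$ on $M_2$, we also have $A\ge m_2/2$. Therefore
\[
\frac{B}{A}\le 1+\frac{2n_1}{m_2}\le 1+2n,
\]
and the ratio is at most
\[
1+\frac{m}{m+n}\cdot 2n=1+\frac{2mn}{m+n}.
\]
The symmetric case gives the same expression, using $B\ge n_1/2$ and $m_2\le m$.

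The main obstacle is making these crude bounds land on exactly $\frac{2mn}{m+n}$. The step $2n_1/m_2\le 2n$ discards $m_2\ge1$, which is fine, so I expect the computation to close.
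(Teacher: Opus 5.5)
Your proposal is correct and is essentially the paper's proof. The paper also argues SGSP by observing that shifting the lottery toward $s=1$ strictly hurts $N_1$ and shifting it toward $s=0$ strictly hurts $M_2$; it then bounds the non-optimal endpoint by $(1+2m)\cdot\mathrm{OPT}$ (resp.\ $(1+2n)\cdot\mathrm{OPT}$) via the analysis of Mechanism~\ref{mec6} and averages with weights $\frac{n}{m+n}$ and $\frac{m}{m+n}$, which is exactly what your inline bound $B/A\le 1+2n_1/m_2\le 1+2n$ does.

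Your SGSP case analysis is more precise than the paper's: you show the lottery can only collapse to a point mass, which forces all of $N_1$ or all of $M_2$ into the coalition, and for $n_1=0$ or $m_2=0$ you correctly note that each agent already attains her individual minimum. The only blemish is cosmetic: costs such as $1-x_i$, $y_j$ and $1-y_j$ should be written with absolute values, since agents can lie beyond the facilities.
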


Based on Theorem \ref{thx}, combined with the definitions of SP and SGSP, the following result are derived.
\begin{theorem}
    Any randomized SGSP mechanism has an approximation ratio of at least 1.1 for the social cost.
\end{theorem}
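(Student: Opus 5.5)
This follows directly from Theorem \ref{thx} and the inclusions between the incentive classes noted in the Model section. The plan is a containment argument. Every SGSP mechanism is GSP, and every GSP mechanism is SP. So the class of randomized SGSP mechanisms is a subset of the class of randomized SP mechanisms. A lower bound proved for every member of the larger class therefore holds for every member of the smaller one.

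Concretely, the first step is to check the implication SGSP $\Rightarrow$ SP at the level of the definitions for randomized mechanisms, where costs are expectations $E_{s\sim f}[c_1(x_i,s)]$ and $E_{s\sim f}[c_2(y_j,s)]$. Take the coalition to be a singleton $S=\{i\}$. SGSP forbids any misreport under which agent $i$ strictly benefits. That is exactly the SP condition
\[
c_1(x_i,f(\mathbf{x},\mathbf{y}))\leq c_1\bigl(x_i,f(x_i',(\mathbf{x},\mathbf{y})_{-i})\bigr),
\]
and likewise for agents $j\in M$.

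The second step takes an arbitrary randomized SGSP mechanism $f$. By the first step, $f$ is a randomized SP mechanism, so Theorem \ref{thx} applies to it. Hence
\[
\sup_{(\mathbf{x},\mathbf{y})}\frac{SC((\mathbf{x},\mathbf{y}),f)}{SC((\mathbf{x},\mathbf{y}),s^{*}_{SC})}\geq 1.1 .
\]

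The only point needing care is that the restriction to anonymous mechanisms adopted in the paper is the same in both statements. As long as Theorem \ref{thx} is stated for the same family of mechanisms, the containment goes through unchanged. There is no real obstacle: the bad instances built in the proof of Theorem \ref{thx} can be reused verbatim, since they only exploit constraints that SGSP also imposes.
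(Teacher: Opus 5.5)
Your proposal is correct and is the same argument the paper uses: the paper obtains this theorem directly from Theorem~\ref{thx}, noting that SGSP implies SP. The singleton-coalition check you give, applied to expected costs, is exactly the containment needed, so the 1.1 lower bound carries over to every randomized SGSP mechanism.
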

\section{Conclusions and Open Problems}
In this paper, we studied the mechanism design problem for locating a bridge between two separate regions, each of which has been equipped with a facility. Each agent has her location as private information and can be served by the nearest facility. We derived upper and lower bounds on the approximation ratio for deterministic and randomized strategyproof and (strong) group-strategyproof mechanisms.

For future work, a natural direction is to narrow the gap between the upper bound and the lower bound for strategyproof mechanisms in our model. Moreover, our model primarily focuses on the impact of connectivity itself on decision-making, hence the assumption of no additional cost for crossing the bridge. Building upon this foundation, introducing bridge tolls would allow for the development of more realistic and challenging variants of the problem. Another potential extension lies in relaxing the assumption of facility homogeneity. In real-world scenarios, facilities often differ in quality, accessibility, or capacity, and agents may exhibit heterogeneous preferences. Lastly, while this work considers a simple spatial setting with two regions separated by a barrier, it would be valuable to generalize the model to more complex topologies, such as arbitrary metric spaces or network structures.

\appendix
\renewcommand{\thetheorem}{\Alph{section}.\arabic{theorem}}
\renewcommand{\theproposition}{\Alph{section}.\arabic{proposition}}
\renewcommand{\thelemma}{\Alph{section}.\arabic{lemma}}
\renewcommand{\theremark}{\Alph{section}.\arabic{remark}}
\renewcommand{\thedefinition}{\Alph{section}.\arabic{definition}}
\renewcommand{\themechanism}{\Alph{section}.\arabic{mechanism}}
\renewcommand{\theexample}{\Alph{section}.\arabic{example}}
\providecommand{\theHtheorem}{}
\providecommand{\theHproposition}{}
\providecommand{\theHlemma}{}
\providecommand{\theHremark}{}
\providecommand{\theHdefinition}{}
\providecommand{\theHmechanism}{}
\providecommand{\theHexample}{}
\renewcommand{\theHtheorem}{appendix.\Alph{section}.\arabic{theorem}}
\renewcommand{\theHproposition}{appendix.\Alph{section}.\arabic{proposition}}
\renewcommand{\theHlemma}{appendix.\Alph{section}.\arabic{lemma}}
\renewcommand{\theHremark}{appendix.\Alph{section}.\arabic{remark}}
\renewcommand{\theHdefinition}{appendix.\Alph{section}.\arabic{definition}}
\renewcommand{\theHmechanism}{appendix.\Alph{section}.\arabic{mechanism}}
\renewcommand{\theHexample}{appendix.\Alph{section}.\arabic{example}}
\setcounter{theorem}{0}
\setcounter{proposition}{0}
\setcounter{lemma}{0}
\setcounter{remark}{0}
\setcounter{definition}{0}
\setcounter{mechanism}{0}
\setcounter{example}{0}
\input{supplementary}

\bibliographystyle{named}
\bibliography{ijcai26}

\end{document}

%% file: supplementary.tex
\section{Supplementary Material: Proofs of Partial Results}\label{app:supplementary}
\section*{Maximum Cost}
\begin{proposition}
    \label{app:thm1}
For any instance $(\mathbf{x},\mathbf{y)}$, an optimal solution for minimizing the maximum cost is given by  \begin{equation*}
    s^{*}( \mathbf{x},\mathbf{y})=\begin{cases}
        0, & \mbox{if }\textbf{ }1-lt(\mathbf{x})\geq rt(\mathbf{y}), \\
        1, & \mbox{if }\textbf{ }1-lt(\mathbf{x})< rt(\mathbf{y}).
      \end{cases}
  \end{equation*}
\end{proposition}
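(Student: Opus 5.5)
The plan is to compare the value of every bridge location $s$ with the value of the endpoint prescribed in Proposition~\ref{app:thm1}. The comparison rests on one universal lower bound and one "two critical agents" lower bound. By the symmetry $x\mapsto 1-x$ that swaps the roles of the two lines and of the two facilities, it suffices to treat the case $1-lt(\mathbf{x})\geq rt(\mathbf{y})$ and show that $s=0$ is optimal.

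\emph{Step 1 (universal lower bound).} By the triangle inequality, $|x_i-s|+|s|\geq |x_i|$, and hence $c_1(x_i,s)\geq d_i:=\min\{|x_i-1|,|x_i|\}$ for every $s$. Similarly $c_2(y_j,s)\geq e_j:=\min\{|y_j|,|y_j-1|\}$. Therefore
\[
MC((\mathbf{x},\mathbf{y}),s)\geq D:=\max\Bigl\{\max_{i\in N} d_i,\ \max_{j\in M} e_j\Bigr\}\quad\text{for all } s.
\]

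\emph{Step 2 (value at $s=0$).} At $s=0$ every $x$-agent attains exactly $d_i$, since $c_1(x_i,0)=\min\{|x_i-1|,|x_i|\}$. Every $y$-agent pays $c_2(y_j,0)=|y_j|$, because $|y_j|+1>|y_j|$. This coincides with $e_j$ except for the agents with $y_j>\tfrac12$, i.e.\ $M_2$, whose largest cost is $rt(\mathbf{y})$. Hence
\[
MC((\mathbf{x},\mathbf{y}),0)=\max\{D,\ rt(\mathbf{y})\cdot\mathbf{1}[rt(\mathbf{y})>\tfrac12]\}.
\]
If $rt(\mathbf{y})\leq \tfrac12$, Step 1 already gives optimality. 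Otherwise, it remains to show $MC((\mathbf{x},\mathbf{y}),s)\geq rt(\mathbf{y})$ for every $s$.

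\emph{Step 3 (two critical agents; main obstacle).} Assume $rt(\mathbf{y})>\tfrac12$ and $1-lt(\mathbf{x})\geq rt(\mathbf{y})$. I would show that $\max\{c_1(lt(\mathbf{x}),s),\,c_2(rt(\mathbf{y}),s)\}\geq rt(\mathbf{y})$ for every $s$. Suppose instead that both costs are $<rt(\mathbf{y})$. Since $1-lt(\mathbf{x})\geq rt(\mathbf{y})$ and $rt(\mathbf{y})$ itself are at least $rt(\mathbf{y})$, the minima must be attained by the bridge routes. Thus
\[
|lt(\mathbf{x})-s|+|s|<rt(\mathbf{y})\quad\text{and}\quad |rt(\mathbf{y})-s|+|s-1|<rt(\mathbf{y}).
\]
Adding these and using $|s|+|s-1|\geq 1$ together with $|lt(\mathbf{x})-s|+|rt(\mathbf{y})-s|\geq |rt(\mathbf{y})-lt(\mathbf{x})|$ gives
\[
1+|rt(\mathbf{y})-lt(\mathbf{x})|<2\,rt(\mathbf{y}).
\]
When $rt(\mathbf{y})\geq lt(\mathbf{x})$ this reads $1-lt(\mathbf{x})<rt(\mathbf{y})$, a contradiction. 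When $rt(\mathbf{y})<lt(\mathbf{x})$, it gives $lt(\mathbf{x})<3\,rt(\mathbf{y})-1$, which is not immediately contradictory, so this subcase needs a separate argument. I would first try the direct bound $|lt(\mathbf{x})-s|+|s|\geq lt(\mathbf{x})>rt(\mathbf{y})$ (valid because $lt(\mathbf{x})>rt(\mathbf{y})>\tfrac12>0$). Combined with $1-lt(\mathbf{x})\geq rt(\mathbf{y})$, this forces $c_1(lt(\mathbf{x}),s)\geq rt(\mathbf{y})$ outright.

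Combining Steps 1--3 gives $MC((\mathbf{x},\mathbf{y}),s)\geq MC((\mathbf{x},\mathbf{y}),0)$ for all $s$, and the case $1-lt(\mathbf{x})<rt(\mathbf{y})$ follows symmetrically with $s=1$.
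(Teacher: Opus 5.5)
Your proof is correct, but it takes a different route from the paper's. The paper fixes the case $1-lt(\mathbf{x})\geq rt(\mathbf{y})$ and asserts without detail that any $s\in(-\infty,lt(\mathbf{x}))\cup(\frac{1}{2},\infty)$ does no better than $s=0$. It then sets $s=\frac{1}{2}-\epsilon$ and writes $MC$ explicitly on two pieces. For $\epsilon\in[0,\frac{1}{2}]$ it is $\max\{a,b,rt(\mathbf{y}),1-lt(\mathbf{x})-2\epsilon\}$, and for $\epsilon\geq\frac{1}{2}$ it is $\max\{a,b-1+2\epsilon,rt(\mathbf{y}),-lt(\mathbf{x})\}$. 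Here $a$ is the fixed cost of $N_2\cup M_1$ and $b$ is the largest nonnegative location in $N_1$. Both expressions are minimized at $\epsilon=\frac{1}{2}$, i.e.\ $s=0$.

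You never track how $MC$ depends on $s$. You combine the per-agent bound $D$ with a two-agent certificate: for every $s$, the agent at $lt(\mathbf{x})$ or the agent at $rt(\mathbf{y})$ pays at least $rt(\mathbf{y})$. You get this by adding the two bridge-route inequalities and applying the triangle inequality. This covers every real $s$ at once, including exactly the ranges the paper dismisses as easy. It also avoids the bookkeeping of which agents of $N_1$ lie left or right of $s$. And it isolates where the hypothesis $1-lt(\mathbf{x})\geq rt(\mathbf{y})$ is used; in the paper it enters only implicitly, in excluding $s>\frac{1}{2}$.

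What the paper's route offers in return is a picture of how $MC(s)$ behaves on $[lt(\mathbf{x}),\frac{1}{2}]$: it decreases as $s$ moves toward $0$. That is more than optimality of the endpoint, at the price of the case analysis your argument sidesteps.

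One small remark: the subcase $rt(\mathbf{y})<lt(\mathbf{x})$ in your Step~3 is vacuous. Indeed, $lt(\mathbf{x})>rt(\mathbf{y})>\frac{1}{2}$ would force $1-lt(\mathbf{x})<\frac{1}{2}<rt(\mathbf{y})$. Your fallback bound there is valid, just unnecessary.
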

\begin{proof}
    For any instance $(\mathbf{x},\mathbf{y})$, without loss of generality, let $1-lt(\mathbf{x})\geq rt(\mathbf{y})$. It is evident that regardless of the location of the bridge, each agent $i\in N_{2}$ will choose the facility located at $x_{F}$ for service, while each agent $j\in M_{1}$ will choose the facility located at $y_{F}$. Let
\begin{align*}
    a &= \max_{i \in N_{2}, j \in M_{1}} \left\{c_{1}(x_{i}, s), c_{2}(y_{j}, s)\right\},\\
    b &= \max\{x_{i}|i\in N_{1}\text{ }and\text{ }x_{i}\geq0\}.
\end{align*}
Furthermore, if $s^{*}(\mathbf{x},\mathbf{y})\in(-\infty,lt(\mathbf{x}))\cup(\frac{1}{2},\infty)$, it is easy to prove 
$$MC((\mathbf{x},\mathbf{y}),0)\leq MC((\mathbf{x},\mathbf{y}),s^{*}).$$
Therefore, $s^{*}(\mathbf{x},\mathbf{y})\in [lt(x),\frac{1}{2}].$ 

Let $s^{*}(\mathbf{x},\mathbf{y})=\frac{1}{2}-\epsilon$. If $\epsilon\in[0,\frac{1}{2}]$, 
\begin{align*}
    &MC((\mathbf{x}, \mathbf{y}), \frac{1}{2}-\epsilon) \\&= \max\left\{a,b,  rt(\mathbf{y}),\frac{1}{2}-\epsilon-lt(\mathbf{x})+\frac{1}{2}-\epsilon\right\} \\
    &=\max\left\{a,b, rt(\mathbf{y}),1-lt(\mathbf{x})-2\epsilon\right\}.
\end{align*}
Then, we have
$$\arg\max_{\epsilon\in[0,\frac{1}{2}]}MC((\mathbf{x}, \mathbf{y}), \frac{1}{2}-\epsilon)=\frac{1}{2}.$$
If $\epsilon\geq\frac{1}{2}$, 
\begin{align*}
    &MC((\mathbf{x}, \mathbf{y}), \frac{1}{2}-\epsilon)\\&= \max\left\{a,b-(\frac{1}{2}-\epsilon)+0-(\frac{1}{2}-\epsilon), rt(\mathbf{y}),0-lt(\mathbf{x})\right\} \\
    &= \max\left\{a,b-1+2\epsilon, rt(\mathbf{y}),-lt(\mathbf{x})\right\}.
\end{align*}
Then, we have
$$\arg\max_{\epsilon\geq\frac{1}{2}}MC((\mathbf{x}, \mathbf{y}), \frac{1}{2}-\epsilon)=\frac{1}{2}.$$
In conclusion, the optimal solution is $s^{*}(\mathbf{x},\mathbf{y})=0$. If $1-lt(\mathbf{x})< rt(\mathbf{y})$, the proof is similar.
\end{proof}
\subsection{SGSP Mechanism for Maximum Cost}
\setcounter{mechanism}{0}
\setcounter{mechanism}{1}
\setcounter{theorem}{1}
\begin{mechanism}\label{app:mec4}
For any instance $(\mathbf{x},\mathbf{y})$, if \( n_1 > 0 \), return \( s = 0 \); otherwise return \( s = 1 \).
\end{mechanism}
\begin{theorem}
Mechanism \ref{app:mec4} is a deterministic SGSP 3-approximation mechanism for the maximum cost.
\end{theorem}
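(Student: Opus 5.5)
The proof has two parts: strong group-strategyproofness, then the approximation ratio. Both use the fact that Mechanism \ref{app:mec4} only ever outputs $s\in\{0,1\}$, together with a cost table for the four groups when $s\in\{0,1\}$.

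\begin{itemize}
\item Agents in $N_2$ and $M_1$ pay their local cost ($|x_i-1|$, resp.\ $|y_j|$) under both endpoints, so they are indifferent.
\item An agent $i\in N_1$ pays $|x_i|$ under $s=0$ and $1-x_i$ under $s=1$. Since $x_i<\frac12$, we have $1-x_i>|x_i|$, so she strictly prefers $s=0$.
\item An agent $j\in M_2$ pays $y_j$ under $s=0$ and $|y_j-1|$ under $s=1$. Since $y_j>\frac12$, she strictly prefers $s=1$.
\end{itemize}

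\textbf{SGSP.} Fix a coalition $S$ and a joint misreport. If the outcome does not change, nobody's cost changes. So assume it flips.
\begin{itemize}
\item If the true outcome is $s=0$ (so $n_1>0$), flipping to $s=1$ requires every agent of $N_1$ to report a location $\geq\frac12$. Hence $N_1\subseteq S$, and each of these agents strictly loses by the table above.
\item If the true outcome is $s=1$ (so $n_1=0$), flipping to $s=0$ requires some agent of $N=N_2$ to report a location $<\frac12$. Members of $N_2$ and $M_1$ in $S$ are indifferent, and any member of $M_2$ in $S$ strictly loses.
\end{itemize}
In both cases no member of $S$ strictly gains; in the first case some members even strictly lose. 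Hence the mechanism is SGSP.

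\textbf{Approximation.} I split on the output and use Proposition \ref{app:thm1} for the optimum. Let $a$ denote the maximum local cost over $N_2\cup M_1$; it is the same under every bridge location and is a lower bound on OPT.

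\begin{itemize}
\item \emph{Case $n_1=0$ (output $s=1$).} Then $lt(\mathbf{x})\geq\frac12$, so $1-lt(\mathbf{x})\leq\frac12$. If $m_2>0$, then $rt(\mathbf{y})>\frac12\geq 1-lt(\mathbf{x})$, and Proposition \ref{app:thm1} gives $s^*=1$, so the mechanism is optimal. If $m_2=0$, every agent is served locally and every location is optimal.
\item \emph{Case $n_1>0$ (output $s=0$).} Here
\[
ALG=\max\Big\{a,\ \max_{i\in N_1}|x_i|,\ rt(\mathbf{y})\Big\}.
\]
If Proposition \ref{app:thm1} gives $s^*=0$, the ratio is $1$. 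Otherwise $s^*=1$ and
\[
OPT=\max\Big\{a,\ 1-lt(\mathbf{x}),\ \max_{j\in M_2}|y_j-1|\Big\}.
\]
I bound the three terms of ALG against OPT one by one. First, $a\leq OPT$. Second, $|x_i|\leq 1-x_i\leq 1-lt(\mathbf{x})\leq OPT$ for every $i\in N_1$. Third, $rt(\mathbf{y})\leq |rt(\mathbf{y})-1|+1\leq OPT+1$. Finally, $n_1>0$ gives $lt(\mathbf{x})<\frac12$, hence $OPT\geq 1-lt(\mathbf{x})>\frac12$, i.e.\ $1<2\,OPT$. Combining, $rt(\mathbf{y})\leq 3\,OPT$, so $ALG\leq 3\,OPT$.
\end{itemize}

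The main obstacle is this last term, $rt(\mathbf{y})$ when $s^*=1$. The additive $+1$ must be absorbed, which works only because some agent of $N_1$ forces $OPT>\frac12$.

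\textbf{Tightness.} Take $x_1=\frac12-\epsilon$ and $y_1=\frac32$. The mechanism outputs $s=0$, so $ALG=\max\{\frac12-\epsilon,\ \frac32\}=\frac32$. At $s^*=1$, $OPT=\max\{\frac12+\epsilon,\ \frac12\}=\frac12+\epsilon$. The ratio tends to $3$ as $\epsilon\to 0^+$.
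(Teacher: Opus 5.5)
Your proof is correct and follows essentially the paper's route: the same two-case SGSP argument (flipping $0\to1$ forces all of $N_1$ into the coalition, where they strictly lose; flipping $1\to0$ needs an agent of $N_2$, who is indifferent), and the same approximation idea of combining $OPT\geq 1-lt(\mathbf{x})>\frac12$ with $rt(\mathbf{y})\leq|rt(\mathbf{y})-1|+1$, with the same tight example. Your termwise bound is a cleaner version of the paper's reduction to $A/\max\{\frac12,A-1\}$, and unlike the paper (which silently replaces $|x_i|$ by $x_i$) it handles negative $x_i$ correctly. One small slip: in the first SGSP case, your summary ``no member of $S$ strictly gains'' is false when $S$ contains agents of $M_2$, who strictly prefer $s=1$. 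This is harmless, because $N_1\subseteq S$ and those agents strictly lose, which already rules out a violation of SGSP.
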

\begin{proof}
We first prove the mechanism is SGSP. Consider two cases:

\textbf{Case 1.} If \( n_1 > 0 \) and the mechanism returns \( s = 0 \), suppose some agents misreport to change the outcome to \( s = 1 \). Then all agents in \( N_1 \) must participate in the misreporting. However, for any \( i \in N_1 \), we have \( x_i < 0.5 \), so the cost under \( s = 0 \) is \( x_i \), while under \( s = 1 \) it is \( 1 - x_i > 0.5 > x_i \). Thus, all agents in \( N_1 \) would strictly increase their costs, contradicting the definition of SGSP.

\textbf{Case 2.} If \( n_1 = 0 \) and the mechanism returns \( s = 1 \), suppose some agents misreport to change the outcome to \( s = 0 \). Then agents in \( N_2 \) must participate. For any \( i \in N_2 \), \( x_i \geq 0.5 \), so the cost under \( s = 1 \) is \( 1 - x_i \leq 0.5 \), while under \( s = 0 \) it is \( x_i \geq 0.5 \). Thus, no agent in \( N_2 \) can strictly decrease their cost, again contradicting SGSP.
Therefore, the mechanism is SGSP.
We now analyze the approximation ratio. If \( n_1 = 0 \), the mechanism returns \( s = 1 \), which is optimal. Consider the case where \( n_1 > 0 \) and the mechanism returns \( s = 0 \), while the optimal solution is \( s^* = 1 \) (if \( s^* = 0 \) is optimal, then the mechanism is optimal).

For any instance $(\mathbf{x},\mathbf{y})$, let ALG and OPT denote the maximum costs of the mechanism and the optimal solution,
respectively.we have:
\begin{align*}
     ALG &= \max\left(\max_{i \in N} \min(|x_i - 0|, |x_i - 1|), \max_{i \in M} |y_i - 0|\right),\\
     OPT &= \max\left(\max_{i \in M} \min(|y_i - 0|, |y_i - 1|), \max_{i \in N} |x_i - 1|\right).
\end{align*}
Expanding these expressions using the agent partitions,
\begin{align*}
    ALG &= \max\left(\max_{i \in N_1} |x_i|, \max_{i \in N_2} |x_i - 1|, \max_{i \in M} |y_i|\right),\\
    OPT &= \max\left(\max_{i \in N} |x_i - 1|, \max_{i \in M_1} |y_i|, \max_{i \in M_2} |y_i - 1|\right).
\end{align*}
We can bound the ratio as follows,
\begin{align*}
\frac{ALG}{OPT} &\leq \frac{\max\left(\max_{i \in N_1} |x_i|, \max_{i \in M_2} |y_i|\right)}{\max\left(\max_{i \in N_1} |x_i - 1|, \max_{i \in M_2} |y_i - 1|\right)} \\
&= \frac{\max\left(\max_{i \in N_1} x_i, \max_{i \in M_2} y_i\right)}{\max\left(1 - \min_{i \in N_1} x_i, \max_{i \in M_2} (1 - y_i)\right)}\\
&\leq \frac{\max\left(\frac{1}{2}, \max_{i \in M_2} y_i\right)}{\max\left(1 - \frac{1}{2}, \max_{i \in M_2} (1 - y_i)\right)} \\
&= \frac{\max_{i \in M_2} y_i}{\max\left(\frac{1}{2}, 1 - \min_{i \in M_2} y_i, \max_{i \in M_2} y_i - 1\right)}\\
&\leq \frac{\max_{i \in M_2} y_i}{\max\left(\frac{1}{2}, \max_{i \in M_2} y_i - 1\right)}.
\end{align*}
Let \( A = \max_{i \in M_2} y_i \) and note that \( A > 0.5 \). Then:
\[
\frac{ALG}{OPT} \leq \frac{A}{\max\left(\frac{1}{2}, A - 1\right)}.
\]

We consider two cases:

\textbf{Case 1.} \( A \leq \frac{3}{2} \). Then \( A - 1 \leq \frac{1}{2} \), so the denominator is \(\frac{1}{2}\). Thus, the ratio is at most \( A/(1/2) = 2A \leq 3 \).

\textbf{Case 2.} \( A > \frac{3}{2} \). Then the denominator is \( A - 1 \). The function \( f(A) = A/(A - 1) \) is decreasing for \( A > 1 \), and when \( A = 1.5 \), \( f(A) = 3 \). As \( A \) increases beyond 1.5, the ratio decreases toward 1. 

The upper bound is shown to be tight by Example \ref{app:eg1}.

\begin{example}\label{app:eg1}
    Consider the instance \((\textbf{x},\textbf{y})=(x_{1}, y_{1})\), \(x_{1} = \frac{1}{2}-\epsilon\) and \(y_{1} = \frac{3}{2}\) where $\epsilon \to 0^+$. The optimal maximum cost is $OPT = \max\left\{\min\{|y_1 - 0|, |y_1 - 1|\}, |x_1 - 1|\right\} = \max\left(\frac{1}{2}, \frac{1}{2} + \epsilon\right) = \frac{1}{2} + \epsilon,$ the maximum cost of Mechanism \ref{app:mec4} is $ALG= \max\left\{\min\{|x_1 - 0|, |x_1 - 1|\}, |y_1 - 0|\right\} = \max\left\{\frac{1}{2}, \frac{3}{2}\right\} = \frac{3}{2},
    $
so the ratio approaches 3 as \( \epsilon \to 0^+ \).
\end{example} 
Therefore, in the worst case, the approximation ratio is at most 3. 
\end{proof}
\begin{theorem}
Any deterministic SGSP mechanism has an approximation ratio of at least $2$ for the maximum cost.
\end{theorem}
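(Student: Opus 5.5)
The plan is to build a short chain of two-agent instances ($n=m=1$) linked by SGSP constraints, and to exploit the fact that an agent sitting exactly at $\frac{1}{2}$ is indifferent to the bridge location. Throughout, suppose for contradiction that $f$ is a deterministic SGSP mechanism with ratio $\gamma<2$.

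\textbf{Indifferent positions.} An agent of $N$ at $x=\frac{1}{2}$ has cost $\min\{\frac{1}{2},|\frac{1}{2}-s|+|s|\}=\frac{1}{2}$ for every $s$. Symmetrically, an agent of $M$ at $y=\frac{1}{2}$ has cost $\frac{1}{2}$ for every $s$. Such an agent can join any coalition and misreport without ever being hurt. So SGSP lets the other member deviate as long as that member strictly gains.

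\textbf{Step 1 (two anchor instances).} Take $Q=(x_1,y_1)=(\frac{1}{2},2)$. By Proposition~\ref{thm1} its optimum is $s=1$, and $\mathrm{OPT}=1$ with cost profile $(\frac{1}{2},1)$. The $M$-agent pays $\min\{2,|2-s|+|s-1|\}$, so $\gamma<2$ forces $f(Q)\in(\frac{1}{2},\frac{5}{2})$. Symmetrically take $W=(-1,\frac{1}{2})$. Here $\mathrm{OPT}=1$ is attained at $s=0$, and the $N$-agent pays $\min\{2,1+2|s|\}$, so $\gamma<2$ forces $f(W)\in(-\frac{1}{2},\frac{1}{2})$.

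\textbf{Step 2 (coalition deviations).} Next consider a "crossed'' instance $Z$ in which the $N$-agent is at a position whose preferences match $W$'s $N$-agent and the $M$-agent matches $Q$'s $M$-agent. I will first try $Z=(-1,2)$, and otherwise tune positions such as $(-\delta,1+\delta)$. At $Z$, the pair $\{N\text{-agent},M\text{-agent}\}$ can jointly report either $Q$ or $W$.
\begin{itemize}
\item Deviating to $W$ gives the true $N$-agent the location $f(W)$, which is good for her.
\item Deviating to $Q$ gives the true $M$-agent the location $f(Q)$, which is good for him.
\end{itemize}
SGSP at $Z$ requires that neither joint move weakly improves both agents with at least one strict gain. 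Writing out the two agents' cost functions at $Z$, this pins $f(Z)$ into a set that I will try to show is disjoint from the region where $\mathrm{MC}(Z,f(Z))<\gamma\cdot\mathrm{OPT}(Z)$. The computation reduces to comparing piecewise-linear costs on the intervals $(-\infty,0]$, $[0,\frac{1}{2}]$, $[\frac{1}{2},1]$ and $[1,\infty)$.

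\textbf{Main obstacle.} The difficulty is choosing $Z$ so that $\mathrm{OPT}(Z)$ is small enough for ratio $2$ to bite. With $Z=(-1,2)$ every $s$ gives maximum cost $2$, so that choice is useless as the target instance. I therefore expect to need one more link in the chain.

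The fix I would try first is to move one agent of $Z$ to the indifferent point $\frac{1}{2}$ and iterate. Concretely, I would use $(\frac{1}{2},\frac{1}{2})$ as a hub: all profiles reachable by joint deviations whose non-indifferent member strictly prefers the hub's output are constrained by $f(\frac{1}{2},\frac{1}{2})$. By the left--right symmetry $x\mapsto 1-x$, $y\mapsto 1-y$, which swaps the roles of the two lines, I may assume $f(\frac{1}{2},\frac{1}{2})\le\frac{1}{2}$.

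I then aim to show that at a profile like $(\frac{1}{2}-\epsilon,\,1+t)$ the mechanism cannot leave the region $s\le\frac{1}{2}$. Otherwise a coalition would move to the hub and strictly help the $N$-agent while leaving the other member indifferent. Its maximum cost there is then about $2\cdot\mathrm{OPT}$, for suitable $t$ and $\epsilon\to0^+$, which gives the contradiction.
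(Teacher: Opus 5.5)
Your proposal has a genuine gap at its decisive step. You claim that if $f(\frac{1}{2}-\epsilon,1+t)>\frac{1}{2}$, the pair can jointly report the hub $(\frac{1}{2},\frac{1}{2})$, strictly helping the $N$-agent ``while leaving the other member indifferent.'' But the $M$-agent sits at $1+t$, not at $\frac{1}{2}$. For any hub output $h\le\frac{1}{2}$ she pays $1+t$, whereas at an output near $1$ she pays $|t|$. So the joint move strictly hurts her, and SGSP yields no constraint. Indifference at the point $\frac{1}{2}$ only constrains profiles in which some agent actually sits at $\frac{1}{2}$, and your target profile has none. Even the $N$-agent gains strictly only when $h\in(-\epsilon,\frac{1}{2})$, so $h<0$ and $h=\frac{1}{2}$ are not covered either.

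The claim is in fact false. The mirror image of the paper's deterministic SGSP mechanism for maximum cost, namely ``return $1$ if $m_2>0$, else $0$'', is SGSP and satisfies $f(\frac{1}{2},\frac{1}{2})=0$. Yet it returns $1$ at $(\frac{1}{2}-\epsilon,1+t)$, where it is optimal for every $t\in[0,\frac{1}{2}]$. So under your normalization this family of instances cannot yield any lower bound. Steps 1--2 never produce a usable constraint either, since $Z$ is left open. Also, the range for $f(W)$ is miscomputed: $|{-1}-s|+|s|=1$ on $[-1,0]$, so the correct constraint is $f(W)\in(-\frac{3}{2},\frac{1}{2})$.

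You are missing the two levers in the paper's proof. First, an anchor instance with $\mathrm{OPT}=0$ pins the output exactly: any finite ratio forces $f(0,0)=0$ and $f(1,1)=1$. Second, the indifference you need holds on a whole side of the line, not at a single point. An $L_2$-agent at $y\ge\frac{1}{2}$ pays $y$ for every $s\le\frac{1}{2}$, and an $L_1$-agent at $x\le\frac{1}{2}$ pays $1-x$ for every $s\ge\frac{1}{2}$.

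These combine as follows. At $(0,1)$, suppose $f(0,1)=s\le\frac{1}{2}$ with $s\neq0$. Then the pair can report $(0,0)$: $y_1$ keeps cost $1$, while $x_1$'s cost drops from $\min\{1,2|s|\}>0$ to $0$, violating SGSP. The case $s\ge\frac{1}{2}$, $s\neq1$, is symmetric via $(1,1)$. Hence $f(0,1)\in\{0,1\}$; say it is $0$. Plain SP for $x_1$ moving to $\frac{1}{2}-\epsilon$ then confines $f(\frac{1}{2}-\epsilon,1)$ to $[0,\frac{1}{2}-\epsilon]$. There $y_1=1$ still pays $1$, while $\mathrm{OPT}=\frac{1}{2}+\epsilon$ at $s=1$, so the ratio tends to $2$.
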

\begin{proof}
Suppose there exists an SGSP mechanism \( f \) with approximation ratio better than 2. Consider an instance with one agent on \( L_1 \) at \( x_1 = 0 \) and one agent on \( L_2 \) at \( y_1 = 1 \). We analyze three cases based on the output \( s = f(x_1, y_1) \):

\textbf{Case 1.} \( s \notin \{0, 1\} \). By symmetry, assume \( s \leq \frac{1}{2} \). Now consider the scenario where agent \( y_1 \) misreports her location from 1 to 0. In this new profile, the optimal maximum cost becomes 0 (achieved by \( s^* = 0 \)). To maintain an approximation ratio better than 2, the mechanism must output \( s' = 0 \). However, this violates SGSP: after the misreporting, agent \( x_1 \) benefits (her cost decreases from \( 2s \) to 0), while agent \( y_1 \) incurs no loss (her cost remains 1 in both cases).

\textbf{Case 2.} \( s = 0 \). Now consider moving agent \( x_1 \) to \( x_1' = \frac{1}{2} - \epsilon \), where \( \epsilon \to 0^+ \). By SGSP, we cannot have \( s' > \frac{1}{2} + \epsilon \) in the new profile; otherwise, agent \( x_1' \) could benefit by misreporting back to \( x_1 = 0 \) to change the outcome to \( s = 0 \). Therefore, the maximum cost for agents \( x_1' \) and \( y_1 \) is 1, while the optimal maximum cost is \( \frac{1}{2} + \epsilon \) (achieved by \( s^* = 1 \)). This yields an approximation ratio of at least \( \frac{1}{\frac{1}{2} + \epsilon} \to 2 \) as \( \epsilon \to 0^+ \).

\textbf{Case 3.} \( s = 1 \). By symmetry, a similar argument shows the approximation ratio approaches 2.

In all cases, we reach a contradiction with the assumption that the mechanism has an approximation ratio better than 2. Therefore, no deterministic SGSP mechanism can achieve an approximation ratio better than 2 for the maximum cost.
\end{proof}
\begin{mechanism}\label{app:mec5}
For any instance $(\mathbf{x},\mathbf{y})$:
\begin{itemize}
    \item If \( n_1 = 0 \), return \( s = 1 \);
    \item Else if \( m_2 = 0 \), return \( s = 1 \);
    \item Else return \( s = 0 \) with probability \( \frac{1}{2} \) and \( s = 1 \) with probability \( \frac{1}{2} \).
\end{itemize}
\end{mechanism}
\begin{theorem}
Mechanism \ref{app:mec5} is a randomized SGSP 2-approximation mechanism for the maximum cost.
\end{theorem}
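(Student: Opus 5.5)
The plan is to treat Mechanism \ref{app:mec5} as a mechanism whose support is always contained in $\{0,1\}$. That makes both the incentive analysis and the ratio analysis reduce to comparing the two endpoints.

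\textbf{A caveat on the second branch.} As printed, it returns $s=1$ when $m_2=0$. I would read this as a typo for $s=0$, the mirror image of the first branch. With $s=1$ literally, the instance with a single agent at $x_1=0$ (and no $M_2$ agents) has optimum $0$ at $s=0$ but the mechanism pays $1$, so no constant ratio holds. I would state this correction explicitly and prove the symmetric version.

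\textbf{SGSP.} First I record each agent's preferences over the outcomes $\{0,1\}$.
\begin{itemize}
\item Every $i\in N_2$ and every $j\in M_1$ has the same cost under $s=0$ and $s=1$.
\item Every $i\in N_1$ strictly prefers $0$: cost $x_i$ versus $1-x_i$.
\item Every $j\in M_2$ strictly prefers $1$.
\end{itemize}
Under a lottery on $\{0,1\}$, each agent's expected cost is therefore monotone in the probability of $1$: decreasing for $M_2$, increasing for $N_1$, and constant for everyone else. Suppose a coalition has a member who strictly benefits. That member lies in $N_1$ or in $M_2$; by symmetry, take it in $M_2$. It strictly benefits only if the probability of $s=1$ rises.

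\textbf{The probability of $1$ cannot rise without hurting a coalition member.}
\begin{itemize}
\item If the truthful outcome is already $1$ (i.e.\ $n_1=0$), it cannot rise further.
\item Otherwise $n_1>0$, and the truthful probability of $1$ is $0$ or $1/2$. Raising it requires the reported profile to have $n_1'=0$, or to move from the branch $m_2=0$ into the lottery. The first forces every true $N_1$ agent into the coalition, and each of them is strictly hurt. In the second case ($m_2=0$), no $M_2$ agent exists to benefit in the first place.
\end{itemize}
So any deviation that strictly helps one member strictly hurts another member, which gives SGSP. I would write this as a short case check mirroring the proof for Mechanism \ref{app:mec4}.

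\textbf{Approximation ratio.}
\begin{itemize}
\item In the deterministic branches the output is optimal. If $n_1=0$, the argument from the proof for Mechanism \ref{app:mec4} shows $MC(1)\le MC(s)$ for every $s$; the case $m_2=0$ is symmetric.
\item In the lottery branch ($n_1,m_2>0$), Proposition \ref{app:thm1} gives $OPT=\min\{MC(0),MC(1)\}$. The expected cost is $\frac12(MC(0)+MC(1))=\frac12(OPT+MC(\text{other endpoint}))$.
\item The proof for Mechanism \ref{app:mec4} already shows $MC(0)\le 3\,MC(1)$ whenever $n_1>0$. The symmetric argument, swapping the roles of the lines and using $m_2>0$, gives $MC(1)\le 3\,MC(0)$.
\item Hence the expected cost is at most $\frac12(OPT+3\,OPT)=2\,OPT$.
\end{itemize}

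\textbf{Tightness and the main obstacle.} For tightness I would reuse Example \ref{app:eg1}: $x_1=\frac12-\epsilon$, $y_1=\frac32$. There $MC(0)=\frac32$ and $MC(1)=\frac12+\epsilon$, so the expected cost is about $1$ against an optimum of about $\frac12$. The main obstacle is verifying carefully that the "wrong endpoint costs at most $3\,OPT$" bound from Mechanism \ref{app:mec4} really transfers symmetrically. In particular, the chain of inequalities there used $n_1>0$ to bound the denominator below by $\frac12$; in the mirrored version that role is played by $m_2>0$, which holds in the lottery branch.
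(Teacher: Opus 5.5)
Your proposal is correct and follows essentially the paper's proof: SGSP comes from $N_1$ and $M_2$ having strictly opposed preferences over the endpoints $0$ and $1$, and the ratio is $\frac12(OPT+3\,OPT)=2\,OPT$ using the factor-$3$ bound from the analysis of Mechanism \ref{app:mec4}. Your reading of the second branch as returning $s=0$ is right, since the paper's own proof assumes that branch outputs the optimum; your explicit use of $m_2>0$ for the mirrored bound $MC(1)\le 3\,MC(0)$ also makes precise what the paper covers only with ``without loss of generality.''
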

\begin{proof}
We first prove the mechanism is SGSP. For any instance $(\mathbf{x},\mathbf{y})$, if \( n_1 = 0 \) or \( m_2 = 0 \), the mechanism returns the optimal solution \( s \), and no agent can benefit from misreporting. Now consider \( n_1 > 0 \) and \( m_2 > 0 \). The mechanism randomizes between \( s = 0 \) and \( s = 1 \). To change the outcome, agents from both \( N_1 \) and \( M_2 \) must misreport. However, if the outcome changes to always \( s = 1 \), then agents in \( N_1 \) suffer increased costs; if it changes to always \( s = 0 \), then agents in \( M_2 \) suffer increased costs. Thus, no group can misreport to benefit some members without harming others, satisfying SGSP.

For the approximation ratio, if \( n_1 = 0 \) or \( m_2 = 0 \), the mechanism is optimal. Now consider \( n_1 > 0 \) and \( m_2 > 0 \). Let ALG and OPT denote the maximum costs of the mechanism and the optimal solution, respectively. Without loss of generality, assume the optimal solution is \( s^* = 1 \). From Mechanism \ref{app:mec4} we know that when \( s = 0 \), the cost is at most \( 3 \cdot OPT \), and when \( s = 1 \), the cost is at most OPT. Thus:
$$    ALG \leq \frac{1}{2} \cdot 3OPT + \frac{1}{2} \cdot OPT = 2 \cdot OPT.
$$
Example \ref{app:eg2} shows the upper bound to be tight.
\begin{example}\label{app:eg2}
    Consider the instance \((\textbf{x},\textbf{y})=(x_{1}, y_{1})\), \(x_{1} = \frac{1}{2}-\epsilon\) and \(y_{1} = \frac{3}{2}\) where $\epsilon \rightarrow 0^+$. The optimal maximum cost is $
   OPT  = \frac{1}{2},$ the maximum cost of Mechanism \ref{app:mec5} is $ ALG=\frac{1}{2}\cdot\left(\frac{1}{2}+\epsilon\right)+\frac{1}{2}\cdot\frac{3}{2}= 1+\frac{\epsilon}{2},$
so the ratio approaches 2 as $\epsilon\rightarrow 0^+$.  
\end{example} 
Therefore, the mechanism achieves a 2-approximation. 
\end{proof}
\section*{Social Cost}
\begin{proposition}
\label{app:t1}
For the social cost, at least one of the endpoints $s\in \{0,1\}$ is an optimal bridge location. 
\end{proposition}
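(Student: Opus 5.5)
The approach is to show that the social cost $SC((\mathbf{x},\mathbf{y}),s)$, viewed as a function of $s$, is minimized at $s=0$ or $s=1$. We do this by writing each agent's cost explicitly and bounding it below by a simple expression attained at an endpoint.

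First, for an agent $i\in N$, write $c_1(x_i,s)=\min\{|x_i-1|,\,|x_i-s|+|s|\}$. Note that $|x_i-s|+|s|\ge |x_i|$, with equality exactly when $s$ lies between $0$ and $x_i$. Hence $c_1(x_i,s)\ge \min\{|x_i-1|,|x_i|\}$ for every $s$, and $c_1(x_i,0)=\min\{|x_i-1|,|x_i|\}$. Symmetrically, for $j\in M$, $|y_j-s|+|s-1|\ge|y_j-1|$, so $c_2(y_j,s)\ge\min\{|y_j|,|y_j-1|\}=c_2(y_j,1)$.

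Second, we use the partition from the Notations. Agents in $N_2$ satisfy $|x_i-1|\le|x_i|$, so $c_1(x_i,s)=|x_i-1|$ for every $s$. Agents in $M_1$ satisfy $|y_j|\le|y_j-1|$, so $c_2(y_j,s)=|y_j|$ for every $s$. Their costs are therefore independent of $s$, and only $N_1\cup M_2$ matter. For these agents the bridge-dependent parts are $g_i(s)=|x_i-s|+|s|$ for $i\in N_1$ and $h_j(s)=|y_j-s|+|s-1|$ for $j\in M_2$. The social cost up to a constant is
\[
\Phi(s)=\sum_{i\in N_1}\min\{1-x_i,\,g_i(s)\}+\sum_{j\in M_2}\min\{y_j,\,h_j(s)\}.
\]

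The key step, and the main obstacle, is the comparison $\Phi(s)\ge\min\{\Phi(0),\Phi(1)\}$. Each summand is a minimum of a constant and a convex piecewise-linear function, so $\Phi$ is not convex and a direct convexity argument fails. We plan to split on $s$.

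\textbf{Case $s\le 1/2$.} We claim $\Phi(s)\ge\Phi(1)$ does not hold in general, so instead we compare to $\Phi(0)$ term by term. For $i\in N_1$ the bound of the first paragraph already gives $\min\{1-x_i,g_i(s)\}\ge\min\{1-x_i,|x_i|\}$, which is exactly the term at $s=0$. For $j\in M_2$ we need $\min\{y_j,h_j(s)\}\ge\min\{y_j,h_j(0)\}=\min\{y_j,\,|y_j|+1\}=y_j$ (recall $y_j>1/2>0$). Here $h_j(s)\ge |y_j-s|+(1-s)$, and since $s\le 1/2<y_j$ this equals $y_j+1-2s\ge y_j$. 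So every term is at least its value at $s=0$, giving $\Phi(s)\ge\Phi(0)$.

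\textbf{Case $s\ge 1/2$.} This is symmetric and gives $\Phi(s)\ge\Phi(1)$. For $j\in M_2$, the bound $c_2\ge c_2(\cdot,1)$ from the first paragraph applies directly. For $i\in N_1$, we use $g_i(s)\ge (s-x_i)+s\ge 1-x_i$, since $x_i<1/2\le s$; thus the term is at least $1-x_i$, which is its value at $s=1$.

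Combining the two cases shows that every $s$ is weakly dominated by $0$ or by $1$, which proves the statement. The only delicate points are the sign conditions $y_j>1/2$ and $x_i<1/2$ used in the two cases. These are exactly what the choice of the threshold $1/2$ in the partition guarantees, so we will verify them carefully, including the boundary conventions for $N_1$ and $M_2$.
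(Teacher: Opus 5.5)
Your proof is correct and follows essentially the same route as the paper: agents in $N_2\cup M_1$ have bridge-independent costs, and for $s\le 1/2$ (resp.\ $s>1/2$) each remaining agent's cost is compared term by term with its value at $s=0$ (resp.\ $s=1$). You also spell out the two inequalities the paper only asserts, namely $h_j(s)=y_j+1-2s\ge y_j$ for $j\in M_2$ and $g_i(s)\ge 2s-x_i\ge 1-x_i$ for $i\in N_1$; the paper's displayed Case 2 formula has typos at exactly this point ($|s-1|$ in place of $|s|$, and $c_1(x_i,s)$ in place of $c_1(x_i,1)$).
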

\begin{proof}
    For any instance \((\mathbf{x}, \mathbf{y})\), it is evident that regardless of the bridge location, each agent $i\in N_{2}$ will choose facility $x_{F}$, while each agent $j\in M_{1}$ will select facility $y_{F}$. For any solution $s \in R$, we will analyze the social cost in two cases: $s \leq 0.5$ and $s > 0.5$.

\textbf{Case 1.} $s \leq 0.5$.
For each agent $ i \in N_1$,
\begin{equation*}
c_1(x_i,0) = |x_i - 0| \leq |x_i - s| + |s| = c_1(x_i,s)
\end{equation*}
Similarly, for each agent $ j \in M_2$, $c_2(y_j,0) \leq c_2(y_j,s)$. Therefore,
\begin{align*}
SC((\mathbf{x}, \mathbf{y}),s) =& \sum_{i \in N_1} c_1(x_i,s) + \sum_{i \in N_2} c_1(x_i,s) \\&+ \sum_{j \in M_1} c_2(y_j,s) + \sum_{j \in M_2} c_2(y_j,s) \\
\geq& \sum_{i \in N_1} c_1(x_i,0) + \sum_{i \in N_2} c_1(x_i,0) \\&+ \sum_{j \in M_1} c_2(y_j,0) + \sum_{j \in M_2} c_2(y_j,0) \\
=& SC((\mathbf{x}, \mathbf{y}),0).
\end{align*}
Thus, selecting $s=0$ achieves the minimum social cost when $s \leq 0.5$.

\textbf{Case 2.} $s > 0.5$.
For each agent $i \in N_1$,
\begin{align*}
c_1(x_i,s) &= \min\left\{|1-x_i|,|x_i - s| + |s-1|\right\}\\&\geq |1-x_i|=c_1(x_i,s).
\end{align*}
Similarly, for each agent $j \in M_2$, $c_2(y_j,s) \geq c_2(y_j,1)$. Therefore,
\begin{align*}
SC((\mathbf{x}, \mathbf{y}),s) =& \sum_{i \in N_1} c_1(x_i,s) + \sum_{i \in N_2} c_1(x_i,s) \\&+\sum_{j \in M_1} c_2(y_j,s) + \sum_{j \in M_2} c_2(y_j,s) \\
\geq& \sum_{i \in N_1} c_1(x_i,1) + \sum_{i \in N_2} c_1(x_i,1) \\&+ \sum_{j \in M_1} c_2(y_j,1) + \sum_{j \in M_2} c_2(y_j,1) \\
=& SC((\mathbf{x}, \mathbf{y}),1).
\end{align*}
Hence,  selecting $s=1$ achieves minimum the social cost when $s> 0.5$.
In conclusion, the theorem guarantees the existence of an optimal solution.
\end{proof}
\subsection{GSP Mechanism for Social Cost}
\setcounter{mechanism}{3}
\setcounter{theorem}{5}
\begin{proposition}
If a deterministic SP mechanism only considers \( n_1, n_2, m_1, m_2 \) instead of the complete agents' profile, then no such mechanism is better than 3-approximation.
\end{proposition}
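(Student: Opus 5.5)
The plan is to show that any deterministic mechanism $f$ whose output depends only on $(n_1,n_2,m_1,m_2)$ must, on some small instance, be forced into a bad decision. The instance family will be the one from Example~\ref{eg3}. Throughout, write $g(n_1,n_2,m_1,m_2)=f(\mathbf{x},\mathbf{y})$.

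\textbf{Step 1: reduce to the case $n=m=1$ with one agent in $N_1$ and one in $M_2$.} I fix $(n_1,n_2,m_1,m_2)=(1,0,0,1)$. All instances with $x_1<\tfrac12$ and $y_1>\tfrac12$ then share the same output $s=g(1,0,0,1)$. Strategyproofness is not even needed for this step: the counting restriction alone forces a single bridge location for the whole family. I then choose two members of the family that punish any fixed $s$.

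\textbf{Step 2: case on the fixed $s$.}
\begin{itemize}
\item Suppose $s\le\tfrac12$. Take $x_1=\tfrac12-\epsilon$ and $y_1=1$. Then OPT $=\tfrac12+\epsilon$, attained at $s=1$.
\begin{itemize}
\item Agent $y_1$ pays $\min\{1,|1-s|+|s-1|\}=\min\{1,2(1-s)\}\ge 1$, since $1-s\ge\tfrac12$.
\item Agent $x_1$ pays at least $\min\{\tfrac12+\epsilon,\,|x_1-s|+|s|\}\ge\min\{\tfrac12+\epsilon,\,x_1\}=\tfrac12-\epsilon$.
\end{itemize}
So the mechanism's cost is at least $\tfrac32-\epsilon$, and the ratio tends to $3$.
\item Suppose $s>\tfrac12$. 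By symmetry, take $x_1=0$ and $y_1=\tfrac12+\epsilon$. Then OPT $=\tfrac12+\epsilon$, attained at $s=0$. Agent $x_1$ pays $\min\{1,2s\}=1$, and agent $y_1$ pays at least $\tfrac12-\epsilon$. Again the ratio tends to $3$.
\end{itemize}

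\textbf{Step 3: conclude.} Letting $\epsilon\to0^+$ shows that no such mechanism beats $3$. This matches Theorem~\ref{thm4} and Example~\ref{eg3}.

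The point I expect to need most care is the lower bound on the agents' costs for a general $s$ rather than $s\in\{0,1\}$. The inequality $|x_1-s|+|s-0|\ge x_1$ (triangle inequality) handles agents going via the bridge to their opposite facility. The direct route then costs at least the local distance. I will check that both minima in $c_1$ and $c_2$ are bounded as claimed.

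If the intended statement instead requires SP to be used, for instance to restrict attention to $s\in\{0,1\}$, I would instead invoke the monotonicity argument from the proof of the deterministic SP lower bound of $2$. I expect this not to be necessary, because the counting restriction already suffices.
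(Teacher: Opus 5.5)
Your proof is correct and is essentially the paper's argument. You fix the counts $(n_1,n_2,m_1,m_2)=(1,0,0,1)$, use the same two instances $x_1=\tfrac12-\epsilon,\ y_1=1$ and $x_1=0,\ y_1=\tfrac12+\epsilon$, and split on whether the common output satisfies $s\le\tfrac12$. Your triangle-inequality bounds for a general $s$ (the paper just writes ``$0.5+1$'') and your remark that strategyproofness is never actually used are both accurate refinements.
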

\begin{proof}
Let \( n_1 = m_2 = 1 \), and \( n_2 = m_1 = 0 \). Such a mechanism must output the same \( s \) for the following two profiles: \( x_1 = 0, y_1 = 0.5 + \epsilon \) and \( x_1 = 0.5 - \epsilon, y_1 = 1 \), where \( \epsilon \to 0^+ \) is a very tiny constant. Omitting the value of \( \epsilon \), we know the optimal social cost for both profiles is the same, i.e., 0.5. Due to symmetry, we assume \( s \leq \frac{1}{2} \). However, in this case, the social cost of the second profile will be \( 0.5 + 1 = 1.5 \), indicating that the lower bound is at least a 3-approximation.
\end{proof}

\begin{proposition}
If an SP randomized mechanism’s output distribution depends only on \( n_1, n_2, m_1, m_2 \), then no such mechanism can achieve an approximation ratio better than 2 for the social cost.
\end{proposition}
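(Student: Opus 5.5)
We would prove this by mirroring the deterministic counting argument just given, now applied to a probability distribution. We take a single configuration of counts, $n_1=m_2=1$ and $n_2=m_1=0$. For this configuration the mechanism must output one fixed distribution $P$ over bridge locations, shared by every profile with $x_1<\tfrac12$ and $y_1>\tfrac12$. Write $p=\Pr_{s\sim P}[s\le \tfrac12]$, so $\Pr[s>\tfrac12]=1-p$. Strategyproofness is not really needed beyond the fact that the distribution cannot react to positions inside these cells. The point is to exhibit two profiles in the same cell that punish the two halves of the distribution.

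The two profiles are $I_1=(x_1,y_1)=(\tfrac12-\epsilon,\,1)$ and $I_2=(0,\,\tfrac12+\epsilon)$, with $\epsilon\to0^+$. By Proposition~\ref{app:t1} it suffices to compare against the endpoints. In $I_1$ the optimum is $s=1$ with cost $\tfrac12+\epsilon$. In $I_2$ the optimum is $s=0$ with cost $\tfrac12+\epsilon$.

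Next we lower bound the cost of an arbitrary $s$ in each profile. For any $s\le\tfrac12$, in $I_1$ we have $c_2(y_1,s)=\min\{1,\,|1-s|+|s-1|\}=1$, since $|1-s|\ge\tfrac12$ forces the detour to cost at least $1$. Also $c_1(x_1,s)\ge \min\{\tfrac12+\epsilon,\,|x_1-s|+|s|\}\ge \tfrac12-\epsilon$. So $SC(I_1,s)\ge \tfrac32-\epsilon$. Symmetrically, for any $s>\tfrac12$, in $I_2$ we have $c_1(0,s)=\min\{1,\,s+|s-1|\}=1$ and $c_2\ge\tfrac12-\epsilon$, so $SC(I_2,s)\ge\tfrac32-\epsilon$. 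For the remaining halves we only need $SC\ge OPT$.

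Combining these, the expected social costs satisfy
\begin{align*}
SC(I_1,P)&\ge p\left(\tfrac32-\epsilon\right)+(1-p)\left(\tfrac12+\epsilon\right),\\
SC(I_2,P)&\ge (1-p)\left(\tfrac32-\epsilon\right)+p\left(\tfrac12+\epsilon\right).
\end{align*}
The larger of the two corresponds to $\max\{p,1-p\}\ge\tfrac12$. It is therefore at least $\tfrac12\cdot\tfrac32+\tfrac12\cdot\tfrac12=1$ up to $O(\epsilon)$, while the optimum is $\tfrac12+\epsilon$. The ratio tends to $2$.

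The main obstacle is the bound $c_1(x_1,s)\ge\tfrac12-\epsilon$ uniformly over all $s\le\tfrac12$. We handle it by the triangle inequality: $|x_1-s|+|s|\ge|x_1|$. The same argument must also cover bridge locations outside $[0,1]$, which this inequality does. Finally, the upper bound of $2$ attained by Mechanism~\ref{mec3} shows the statement is tight, since that mechanism depends only on $n_1$ and $m_2$.
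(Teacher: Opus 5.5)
Your argument is correct and follows the paper's proof: the same count cell $n_1=m_2=1$, $n_2=m_1=0$, the same two instances $(\tfrac12-\epsilon,1)$ and $(0,\tfrac12+\epsilon)$, and the same observation that one of $p$, $1-p$ is at least $\tfrac12$ (the paper phrases this as the contradiction $p<\tfrac12$ versus $p>\tfrac12$). Your one improvement is that you split the support at $\tfrac12$ and lower-bound the social cost for every $s$, whereas the paper tacitly assumes the output lies in $\{0,1\}$, so your version covers arbitrary distributions. There is a minor slip: in $I_2$ the detour cost for the agent at $0$ is $|0-s|+|s-0|=2|s|$, not $s+|s-1|$, but $c_1=1$ for $s>\tfrac12$ still holds.
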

\begin{proof}
Consider the case where \( n_1 = m_2 = 1 \) and \( n_2 = m_1 = 0 \). We analyze two instances:  
\( x_1 = \frac{1}{2} - \epsilon, y_1 = 1; \) and \( x_1 = 0, y_1 = \frac{1}{2} + \epsilon \).

For both instances, the optimal social cost is \(\frac{1}{2}\). Since the mechanism depends only on the counts \( n_1, m_2 \), it must assign identical probabilities to \( s = 0 \) and \( s = 1 \) in both cases. Let \( p \) denote the probability of selecting \( s = 0 \).

In Case 1, the expected social cost is:

\[p \cdot \frac{3}{2} + (1 - p) \cdot \frac{1}{2} = \frac{1}{2} + p.\]

To achieve an approximation ratio better than 2, we require \(\frac{1}{2} + p < 1\), which implies \( p < \frac{1}{2} \).

In Case 2, the expected social cost is:

\[p \cdot \frac{1}{2} + (1 - p) \cdot \frac{3}{2} = \frac{3}{2} - p.\]

For an approximation ratio better than 2, we require \(\frac{3}{2} - p < 1\), which implies \( p > \frac{1}{2} \).

These two requirements are contradictory. Therefore, no such mechanism can achieve an approximation ratio better than 2.
\end{proof}
\setcounter{theorem}{7}
\begin{theorem}\label{app:thx}
    Any randomized SP mechanism has an approximation ratio of at least $1.1$ for the social cost.
\end{theorem}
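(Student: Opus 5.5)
The plan is to compare a small family of profiles that are linked by unilateral (or co-located group) deviations, and to turn strategyproofness into linear constraints on the mechanism's output distributions. Following the remark before the theorem, I will use the fact that SP is equivalent to partial group-strategyproofness (\cite{LW8}). This means I may place several agents at the same point and move them together, which amplifies the weight of one side. The base profile is symmetric: $k$ agents at $x=0$ on $L_1$ and $k$ agents at $y=1$ on $L_2$. By Proposition~\ref{app:t1}, both $s=0$ and $s=1$ are optimal, with social cost $k$. By the reflection symmetry $s\mapsto 1-s$ that swaps $L_1$ and $L_2$, I may assume the mechanism's distribution $P$ on this profile gives the $L_1$-group expected cost $E_P[\min\{1,2|s|\}]\ge$ the $L_2$-group's symmetric quantity. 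Equivalently, the "weight toward $s=0$" is at most one half in the appropriate sense.

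In the second step I move the disadvantaged group, the $L_1$ agents at $0$, to a point $t\in(0,\tfrac12)$, giving profile $P_t$. There OPT is attained at $s=0$ with cost $kt+k$ or at $s=1$ with cost $k(1-t)+k$, so OPT $=k(1+t)$. Partial GSP gives two inequalities. First, the group at $0$ must not gain by jointly reporting $t$, which yields $E_{P}[c_1(0,s)]\le E_{P_t}[c_1(0,s)]$. Second, the group at $t$ must not gain by reporting $0$, which yields $E_{P_t}[c_1(t,s)]\le E_{P}[c_1(t,s)]$. If the mechanism puts substantial mass away from $0$ in $P_t$, the social cost exceeds $k(1+t)$ by the $L_1$ agents' extra travel. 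If it concentrates near $0$, then the first inequality is not binding but the $L_2$ agents pay $1$ more each relative to $s=1$. Since $s=0$ is optimal in $P_t$, the ratio then comes only from $P$ sending mass away from $0$.

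To get a genuinely nontrivial bound, I will probably need a third profile in which the $L_2$ group is moved toward $\tfrac12$ (or the $L_1$ group toward $\tfrac12-\epsilon$). That profile makes $s=1$ strictly optimal, and the SP inequalities then chain it back to $P$. Each agent's cost $c(\cdot,s)$ is piecewise linear in $s$, with breakpoints at $0,\tfrac12,1$ and at the agents' positions. So I will summarize each distribution by a few moments, namely the probabilities of the regions $s\le 0$, $0<s<\tfrac12$, and so on, together with the conditional expected distances. Dominance arguments will show that the extremal distributions are supported on a few points. The result is a small linear program in the probabilities $p,q,\dots$ together with the parameters $t$ and the group sizes. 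I will choose $t$ and the ratio of group sizes to maximize the minimum over mechanisms of the worst ratio, and I expect the optimum to be $1.1$, for example $t=\tfrac14$-type values where $\tfrac{1.1}{1}$ arises as $\tfrac{11}{10}$.

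The main obstacle is the middle step. I need to choose the right chain of profiles and the right parameter values so that the SP inequalities really force a $10\%$ loss. I also need to justify reducing arbitrary distributions over $\mathbb R$ to distributions supported on $\{0,t,\tfrac12,1\}$-type points without weakening the SP constraints. My first attempt at this reduction is to show that projecting mass in $(-\infty,0)$ to $0$ and mass in $(1,\infty)$ to $1$ weakly lowers every agent's cost and hence preserves both the inequalities used and the ratio bound. After that, I would solve the remaining LP explicitly.
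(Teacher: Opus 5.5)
There is a genuine gap. The proposal never produces a concrete chain of profiles or a number, and the step it does carry out contains an error that removes its force.

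In $P_t$ the $L_2$ agents at $y=1$ pay $c_2(1,1)=0$. So $SC(P_t,1)=k(1-t)$ and $SC(P_t,0)=k(1+t)$, which makes $s=1$ \emph{strictly} optimal in $P_t$ (not $s=0$), with OPT $=k(1-t)$. You normalized so that the $L_1$ group is disadvantaged on the symmetric profile $P$, i.e.\ the mechanism leans toward $s=1$. Moving that group rightward only reinforces the lean. The mechanism that outputs $s=1$ on both $P$ and $P_t$ is optimal on both and satisfies both of your SP inequalities with equality. Hence your two-profile step yields no bound above $1$.

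Everything is then deferred to an unspecified third profile and to an LP that you neither formulate nor solve; ``I expect the optimum to be $1.1$'' is a guess. The sign is reversed in the third profile too: moving the $L_2$ group toward $\tfrac12$ makes $s=0$, not $s=1$, strictly optimal.

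The proposed support reduction is also invalid as stated. An SP constraint compares expected costs under two different distributions, and projecting mass onto $[0,1]$ in both can lower the right-hand side more than the left. For example, for $c_1(0,\cdot)$, outputs $s=0.1$ versus $s=-0.15$ give $0.2\le 0.3$ before projection and $0.2\le 0$ after.

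The missing idea is to break the symmetry with unequal group sizes, so that the strictly optimal endpoint flips when a single co-located group moves. The paper takes $4$ agents on $L_1$ and $3$ agents at $y=1$.
\begin{itemize}
\item With the $L_1$ group at $0.24$: $SC(1)=3.04$, while every $s<\tfrac12$ costs at least $SC(0)=3.96$. A ratio below $1.1$ forces $\Pr[Y\ge\tfrac12]>0.66$, so each agent at $0.24$ pays more than $0.58$.
\item With the group at $0$: $SC(0)=3$, while every $s\notin[-0.12,0.12]$ costs at least $3.96$. Writing $q=\Pr[Y\in[-0.12,0.12]]$, this forces $q>0.68$. An agent whose true location is $0.24$ would then pay at most $0.48q+0.76(1-q)<0.57$.
\end{itemize}
Hence the co-located group at $0.24$ gains by jointly reporting $0$, contradicting partial group-strategyproofness, which is equivalent to SP. No LP and no support reduction are needed. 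The argument uses only the two coarse events $\{Y\ge\tfrac12\}$ and $\{Y\in[-0.12,0.12]\}$, with a lower bound on the agent's cost in one profile and an upper bound in the other.
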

\begin{proof}
    Assume that there exists a randomized strategyproof mechanism \( f \) with an approximation ratio less than \( 1.1 \).  Consider the instance \((\mathbf{x}, \mathbf{y})\) with $x_i=0.24$ for all agents $i\in N$, where $|N|=4$, and $y_j=1$ for all agent $j\in M$, where $|M|=3.$ It is clear that \(s^{*}(\mathbf{x}, \mathbf{y}) = 1\) and \(SC((\mathbf{x}, \mathbf{y}), s^{*}) = 4 \times 0.76 = 3.04\). Suppose \(f(\mathbf{x},\mathbf{y})\) outputs \(Y \geq 0.5\) with probability \(p\). Then by the proof of Theorem \ref{app:t1},
\begin{align*}
    &SC((\mathbf{x}, \mathbf{y}), f)\\ &\geq p \cdot SC((\mathbf{x}, \mathbf{y}), 1) + (1 - p) \cdot SC((\mathbf{x}, \mathbf{y}), 0) \\
    &= 3.04p + 3.96 \cdot (1 - p) = 3.96 - 0.92p. \\
    &\frac{SC((\mathbf{x}, \mathbf{y}), f)}{SC((\mathbf{x}, \mathbf{y}), s^{*})} < 1.1 \\&\Rightarrow 3.96 - 0.92p < 1.1 \times 3.04 \\&\Rightarrow p > 0.66.
\end{align*}
Therefore,
\[
c_{1}(x_{1}, f(\mathbf{x}, \mathbf{y})) \geq 0.76 p + 0.24 \cdot (1 - p) > 0.58.
\]

Now Consider the instance \((\mathbf{x}', \mathbf{y})\) with $x_i'=0$ for all agents $i\in N$, where $|N|=4$. It is clear that \(s^{*}(\mathbf{x'}, \mathbf{y}) = 0\) and \(SC((\mathbf{x'}, \mathbf{y}), s^{*}) = 3\). Suppose \(f(\mathbf{x'}, \mathbf{y})\) outputs \(Y \in [-0.12, 0.12]\) with probability \(q\). Then
\begin{align*}
    &SC((\mathbf{x'}, \mathbf{y}), f) \\\geq& (1 - q) \cdot \min\{SC((\mathbf{x'}, \mathbf{y}), -0.12), SC((\mathbf{x'}, \mathbf{y}), 1)\} \\&+q \cdot SC((\mathbf{x'}, \mathbf{y}), 0)\\
    =& 3 q + 3.96 \cdot (1 - q) = 3.96 - 0.96 q.\\
 &\frac{SC((\mathbf{x'}, \mathbf{y}), f)}{SC((\mathbf{x'}, \mathbf{y}), s^{*})} < 1.1 \\\Rightarrow &3.96 - 0.96 q < 1.1 \times 3\\ \Rightarrow &q > 0.68.
 \end{align*}
Therefore,
\[
c_{1}(x_{1}, f(\mathbf{x'}, \mathbf{y})) \leq 0.48 q + 0.76 \cdot (1 - q) < 0.57.
\]
Since \(
c_{1}(x_{1}, f(\mathbf{x'}, \mathbf{y})) < c_{1}(x_{1}, f(\mathbf{x}, \mathbf{y})),
\) 
agent $i$ with $x_i=0.24$ can benefit by misreporting their location as $0$. This contradicts the strategyproofness of the mechanism, proving that the assumption does not hold.
\end{proof}
\subsection{SGSP Mechanism for Social Cost}
\setcounter{theorem}{9}
\setcounter{mechanism}{6}
\begin{theorem}
Any deterministic SP mechanism has an approximation ratio of at least \(1 + \min\{m, n\}\) for the social cost.
\end{theorem}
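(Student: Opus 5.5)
Assume by symmetry that $k=\min\{m,n\}=m\le n$. The plan is an adversarial chain of instances, in the style of the deterministic SP lower bound of $2$ proved earlier. Each move between consecutive instances changes the reports of some agents, and strategyproofness is used to pin down where the bridge can be. By Proposition~\ref{app:t1} it suffices to track on which side of $\frac12$ the output lies, since any $s\le\frac12$ costs at least $SC(\cdot,0)$ and any $s>\frac12$ costs at least $SC(\cdot,1)$.

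\textbf{Step 1 (base instance).} Take one agent of $N$ at $x_1=0$, the remaining $n-1$ agents of $N$ at $x=\frac12$, and all $m$ agents of $M$ at $y=\frac12+\delta$. The agents at $\frac12$ lie in $N_2$ and pay $\frac12$ under any bridge. The social costs are $SC(0)=\frac{n-1}{2}+m(\frac12+\delta)$ and $SC(1)=\frac{n-1}{2}+1+m(\frac12-\delta)$. To make the endpoint choice matter, I will normalise so that the $n-1$ filler agents contribute $0$ instead, by placing them at $x=1$. In this base instance, a mechanism with ratio below $1+m$ must put the bridge at some $s\le\frac12$, in fact close to $0$, because the agent at $0$ then pays about $2|s|$.

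\textbf{Step 2 (moving the $M$ agents).} Move the $m$ agents of $M$ from $\frac12+\delta$ to $1$. In the target instance the optimum is $s^*=1$ with cost about $1$. If the output stays at $s\le\frac12$, the cost is about $m$ plus the $N_1$ term, which gives a ratio of order $1+m$ after fixing where the $N_1$ agent sits; choosing $x_1=\frac12-\epsilon$ as in Case~2 of the earlier proof tunes the ratio.

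\textbf{Step 3 (moving the $N_1$ agent).} Move the single $N_1$ agent to $\frac12-\epsilon$. This step is handled exactly as in Cases~2--3 of the earlier SP lower bound: the agent at $\frac12-\epsilon$ must not be able to profit by reporting $0$, so the output stays at most $\frac12$.

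\textbf{Main obstacle.} The hard step is Step 2. Strategyproofness only forbids \emph{single} deviations. An $M$ agent at $1$ cannot gain by reporting $\frac12+\delta$, but that only says the new output is at least as good for her as the old one; it does not force the output to stay at $0$. I would first try to make the move one agent at a time and argue inductively. The catch is that after the first agent moves, the natural choice for the mechanism is $s=1$, and SP does not forbid this. I expect this to fail in general: Mechanism~\ref{mec2} is GSP, hence SP, and is a $3$-approximation, so for $m\ge 3$ no argument using only unilateral deviations can reach $1+m$. I would therefore check whether the chain can be completed by using, at the critical move, the strong-group condition (the $m$ agents of $M_2$ moving jointly, with none worse off and the $N_1$ agent strictly better off). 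That is precisely the extra power the bound needs.
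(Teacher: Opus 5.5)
Your closing diagnosis is correct, and it is the key point. Read literally, for SP mechanisms, the statement is false once $\min\{m,n\}\ge 3$: Mechanism~\ref{mec2} is GSP, hence SP, and has ratio at most $3<1+\min\{m,n\}$. The paper's proof itself opens with ``Suppose there exists an SGSP mechanism'', and the main-text version is stated for SGSP, so ``SP'' here is a typo.

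However, your proposal does not prove the SGSP version either. Step~1 fails for $m\ge 2$. In your normalised base instance ($x_1=0$, $x_2=\dots=x_n=1$, $y_j=\frac12+\delta$) one has $SC(0)=m(\frac12+\delta)$ and $SC(1)=1+m(\frac12-\delta)$. So outputting $1$ costs a ratio of only about $1+\frac2m<1+m$, and nothing forces $s\le\frac12$.

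The group move you suggest for Step~2 also points the wrong way. Even granting that the base instance outputs some $s\le\frac12$, the $M$ agents at $1$ pay $1$ there but strictly less under any bridge in $(\frac12,\frac32)$. So if the target instance outputs near $1$, jointly reporting $\frac12+\delta$ makes them strictly worse off. No SGSP argument built on that deviation can push the target output below $\frac12$.

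The missing idea is to apply the group condition at the clustered instance ($n$ agents at $0$, $m$ at $1$), using only coalitions whose deviating members are indifferent. This is the paper's route.
\begin{itemize}
\item If the output $s$ there is not in $\{0,1\}$ (say $s\le\frac12$), the $M$ agents reporting $0$ produce the all-zero instance. Any finite-ratio mechanism must map that instance to $0$. The $M$ agents still pay $1$, while the truthful $N$ agents drop from $\min\{1,2|s|\}>0$ to $0$, violating SGSP.
\item If $s=0$, move $x_2,\dots,x_n$ to $1$. They pay $0$ under every bridge, so SGSP for the coalition $N$ (with $x_1$ truthful) forces the new output to be exactly $0$.
\item The case $s=1$ is symmetric.
\end{itemize}

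In that instance $SC(0)=m$ against an optimum of $1$, so the paper's Case~2 as written gives ratio $m$, not $m+1$. Your Step~3 is precisely what completes it. Move $x_1$ to $\frac12-\epsilon$; since she could report $0$ and pay $\frac12-\epsilon$, strategyproofness confines the output to $[0,\frac12-\epsilon]$. The cost is then $m+\frac12-\epsilon$ against an optimum of at most $\frac12+\epsilon$, a ratio tending to $1+2m$, and symmetrically $1+2n$.

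So the paper's SGSP steps followed by your Step~3 yield $1+2\min\{m,n\}$. This implies the intended bound and matches Mechanism~\ref{mec6}.
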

\begin{proof}
Consider an instance with \(n\) agents on \(l_1\) at 0 and \(m\) agents on \(l_2\) at 1. Suppose there exists an SGSP mechanism \(f\) with approximation ratio better than \(1 + \min\{m, n\}\). We analyze three cases based on the output \(s\):

\textbf{Case 1.} \(s \notin \{0, 1\}\). By symmetry, assume \(s \leq \frac{1}{2}\). Now consider the scenario where agents in \(M\) misreport their locations from 1 to 0. In this new profile, the optimal social cost becomes 0 (achieved by \(s^* = 0\)). To maintain an approximation ratio better than \(1 + \min\{m, n\}\), the mechanism must output \(s' = 0\). However, this violates SGSP: after the misreporting, agents in \(N\) benefit (their cost decreases from \(2s\) to 0), while agents in \(M\) incur no loss (their cost remains 1 in both cases).

\textbf{Case 2.} \(s = 0\). Now consider moving all agents in \(N\) except \(x_1\) to \(x_i' = 1\). By SGSP, we cannot have \(s' > \frac{1}{2} + \epsilon\) in the new profile; otherwise, these agents in \(N\) except \(x_1\) could benefit by misreporting back to 0 to change the outcome to \(s = 0\). Therefore, the social cost for agents is at least \(m + 1\), while the optimal social cost is 1 (achieved by \(s^* = 1\)). This yields an approximation ratio of at least \(1 + m \geq 1 + \min\{m, n\}\).

\textbf{Case 3.} \(s = 1\). By symmetry, a similar argument shows the approximation ratio approaches \(1 + n \geq 1 + \min\{m, n\}\).

In all cases, we reach a contradiction with the assumption that the mechanism has an approximation ratio better than \(1 + \min\{m, n\}\). Therefore, no deterministic SGSP mechanism can achieve an approximation ratio better than \(1 + \min\{m, n\}\) for the social cost.
\end{proof}
\begin{mechanism}\label{app:mec7}
For any instance $(\mathbf{x},\mathbf{y})$:
\begin{itemize}
    \item If \( n_1 = 0 \), return \( s = 1 \);
    \item Else if \( m_2 = 0 \), return \( s = 0 \);
    \item Else (\( n_1, m_2 > 0 \)) return \( s = 0 \) with probability \( \frac{n}{m+n} \) and \( s = 1 \) with probability \( \frac{m}{m+n} \).
\end{itemize}
\end{mechanism}
\begin{theorem}
    Mechanism \ref{app:mec7} is a randomized SGSP \(\left(1 + \frac{2mn}{m+n}\right)\)-approximation mechanism for the social cost.
\end{theorem}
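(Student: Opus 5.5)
The proof has two parts: first that Mechanism \ref{app:mec7} is SGSP, then the ratio bound. Throughout I use the facts already used in Theorem \ref{thm6} and Proposition \ref{app:t1}. Agents in $N_2$ and $M_1$ have the same cost for every $s\in\{0,1\}$. An agent $i\in N_1$ strictly prefers $s=0$ (cost $|x_i|$ versus $1-x_i$). An agent $j\in M_2$ strictly prefers $s=1$.

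\textbf{SGSP.} The output is one of three lotteries: $\delta_0$, $\delta_1$, and $\mu=\frac{n}{m+n}\delta_0+\frac{m}{m+n}\delta_1$. Since $n$ and $m$ are fixed (misreports stay on one's own line), the weights in $\mu$ cannot be manipulated. Suppose a coalition $S$ changes the outcome so that some member strictly gains and no member loses.

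The gainer cannot be in $N_2\cup M_1$, since those agents are indifferent among all three lotteries. So the gainer is in $N_1$ and the outcome moves toward $0$, or the gainer is in $M_2$ and the outcome moves toward $1$. Take the first case; the second is symmetric.

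\emph{Starting from $\delta_1$ (so $n_1=0$).} The gainer $i$ would have true location $x_i<1/2$, contradicting $n_1=0$. So no gainer exists.

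\emph{Starting from $\mu$ (so $n_1,m_2>0$), moving to $\delta_0$.} The reported profile must have $m_2'=0$. Hence every true member of $M_2$ must report into $M_1$, and each of them strictly loses by going from $\mu$ to $\delta_0$.

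\emph{Starting from $\delta_0$ (so $m_2=0$, $n_1>0$).} Only $\delta_1$ or $\mu$ can result. Both are strictly worse for every member of $N_1$ and indifferent for everyone else, so no one gains.

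The remaining moves toward $1$ are symmetric: they require all of $N_1$ to misreport, and each of those agents loses.

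\textbf{Ratio.} If $n_1=0$ or $m_2=0$, the output is optimal: when $m_2=0$, $s=0$ is weakly best for everyone. So assume $n_1,m_2>0$. With $A$ and $B$ as in the proof of Theorem \ref{thm6}, the cost of agents in $N_2\cup M_1$ is a common additive constant $C\ge 0$, which only helps the ratio. So it suffices to show
\[
\tfrac{n}{m+n}A+\tfrac{m}{m+n}B\le\left(1+\tfrac{2mn}{m+n}\right)\min\{A,B\}.
\]

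If $B\le A$, bound $A$ as in the proof of Mechanism \ref{mec6}:
\[
A\le \textstyle\sum_{N_1}|x_i|+\sum_{M_2}(|y_j-1|+1)\le B+m_2\le B+m.
\]
Also $B\ge\sum_{N_1}(1-x_i)> n_1/2\ge 1/2$. Hence $A\le(1+2m)B$, and the left side is at most
\[
\left(1+\tfrac{2mn}{m+n}\right)B.
\]
If $A\le B$, the symmetric argument gives $B\le A+n_1\le(1+2n)A$, using $A>m_2/2\ge1/2$. The left side is then at most
\[
\tfrac{n}{m+n}A+\tfrac{m}{m+n}(1+2n)A=\left(1+\tfrac{2mn}{m+n}\right)A.
\]

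The main obstacle is the inequality $A\le(1+2m)B$. The additive slack is $m_2\le m$, while the lower bound $B\ge 1/2$ uses only that $n_1\ge1$. This worst case is consistent with the construction in the lower bound for deterministic SGSP mechanisms, where $n_1=1$.
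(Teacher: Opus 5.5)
Your proposal is correct and follows the paper's proof. For SGSP, you use the same idea: only agents in $N_1$ and $M_2$ have strict preferences, and any shift of the lottery toward one endpoint forces the whole opposing group to misreport at a strict loss. For the ratio, you mix the two bounds $SC(0)\le(1+2m)\,SC(1)$ and $SC(1)\le(1+2n)\,SC(0)$ from the analysis of Mechanism~\ref{mec6}. Your case analysis over the three lotteries and your treatment of the common cost of $N_2\cup M_1$ are more careful than the paper's sketch. Dropping the WLOG $n\ge m$ is harmless.
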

\begin{proof}
 We first prove it is SGSP. If \( n_1 = 0 \) or \( m_2 = 0 \), the mechanism returns the optimal solution \( s \), and no agent can benefit from misreporting. Now consider \( n_1 > 0 \) and \( m_2 > 0 \). The mechanism randomizes between \( s = 0 \) and \( s = 1 \). To change the outcome, agents from both \( N_1 \) and \( M_2 \) must misreport. However, if the outcome changes to increase the probability of \( s = 1 \), then agents in \( N_1 \) suffer increased costs; if it changes to increase the probability of \( s = 0 \), then agents in \( M_2 \) suffer increased costs. Thus, no group can misreport to benefit some members without harming others.

For the approximation ratio, if \( n_1 = 0 \) or \( m_2 = 0 \), the mechanism is optimal. Now consider \( n_1 > 0 \) and \( m_2 > 0 \). let ALG and OPT denote the social costs of the mechanism and the optimal solution, respectively. Without loss of generality, assume \( n \geq m \). If the optimal solution is \( s^* = 1 \), from Mechanism \ref{mec6} we know that when \( s = 0 \), the cost is at most \( (1 + 2m) \cdot OPT\), and when \( s = 1 \), the cost is at most OPT. Thus:
\[
\frac{ALG}{OPT} \leq \frac{n}{m+n} \cdot (1 + 2m) + \frac{m}{m+n} = 1 + 2\frac{mn}{m+n}.
\]
Similarly, if the optimal solution is \( s^* = 0 \), the same bound holds. Therefore, the mechanism achieves a \(\left(1 + 2\frac{mn}{m+n}\right)\)-approximation.
\end{proof}